 \newcommand{\bs}{\bigskip}
 \newcommand{\ms}{\medskip}
 \newcommand{\n}{\noindent}
 \newcommand{\s}{\smallskip}
 \newcommand{\hs}[1]{\hspace*{ #1 mm}}
 \newcommand{\vs}[1]{\vspace*{ #1 mm}}
 \newcommand{\real}{\mathbb{R}}
 \newcommand{\nat}{\mathbb{N}}
 \newcommand{\integer}{\mathbb{Z}}
 \newcommand{\prob}{{\mathrm{Prob}}}
 \newcommand{\co}{\mathrm{co}\mbox{-}}
 \newcommand{\ie}{\textrm{i.e.},\hspace*{2mm}}
 \newcommand{\eg}{\textrm{e.g.},\hspace*{2mm}}
 \newcommand{\etalc}{\textrm{et al.}}
 \newcommand{\AAA}{{\cal A}}
 \newcommand{\CC}{{\cal C}}
 \newcommand{\DD}{{\cal D}}
 \newcommand{\p}{\mathrm{P}}
 \newcommand{\np}{\mathrm{NP}}
 \newcommand{\oneflin}{1\mbox{-}\mathrm{FLIN}}
 \newcommand{\onedlin}{1\mbox{-}\mathrm{DLIN}}
 \newcommand{\reg}{\mathrm{REG}}
 \newcommand{\cfl}{\mathrm{CFL}}
 \newcommand{\dcfl}{\mathrm{DCFL}}
\newcommand{\comb}[2]{\left({\small \begin{array}{c} #1 \\%
      #2 \end{array} }\right)}
 \newcommand{\tinycomb}[2]{\left({\tiny \begin{array}{c} #1 \\%
      #2 \end{array} }\right)}
 \newcommand{\IFF}{\Longleftrightarrow}
 \def\bbox{\vrule height6pt width6pt depth1pt}
\theoremstyle{plain}
 \newtheorem{theorem}{Theorem}[section]
 \newtheorem{lemma}[theorem]{Lemma}
 \newtheorem{proposition}[theorem]{Proposition}
 \newtheorem{claim}{Claim}
 \newenvironment{proof}{\par \noindent
            {\bf Proof. \hs{2}}}{\hfill$\Box$ \vspace*{3mm}}
 \newenvironment{proofof}[1]{\vspace*{5mm} \par \noindent
         {\bf Proof of #1.\hs{2}}}{\hfill$\Box$ \vspace*{3mm}}
 \newcommand{\ceilings}[1]{\lceil #1 \rceil}
 \newcommand{\floors}[1]{\lfloor #1 \rfloor}
\newif\ifnotesw\noteswtrue% T to show box & marginal notes; F suppresses.
\ifnotesw\marginpar[\hfill\(\top\)]{\(\top\)}\fi}%
\ifnotesw\marginpar[\hfill\(\bot\)]{\(\bot\)}\fi}
\newcommand{\mnote}[1]%
   {\ifnotesw\marginpar%
	  [{\scriptsize\begin{minipage}[t]{\marginparwidth}
	  \raggedleft#1%
		  \end{minipage}}]%
	  {\scriptsize\begin{minipage}[t]{\marginparwidth}
	  \raggedright#1%
		  \end{minipage}}%
    \fi}
\newcommand{\ignore}[1]{}
\newcommand{\track}[2]{[{\tiny \begin{array}{c} #1 \\%
      #2 \end{array} }]}
\newcommand{\fpartial}{\mathrm{partial}}
\begin{document}
%%%%%%%%%%%%%%%%%%
\begin{center}
{\Large {\bf Immunity and Pseudorandomness of \s\\
Context-Free Languages}} \bs\\
{\sc Tomoyuki Yamakami}\footnote{Affiliation at the time of the first version: School of Computer Science and Engineering, University of Aizu, 90 Kami-Iawase, Tsuruga, Ikki-machi, Aizu-Wakamatsu, Fukushima 965-8580, Japan.} \bs\\
\end{center}

\pagestyle{plain}

%%%%%%%%%%%%%%%%%%
\begin{quote}
{\bf Abstract.}\hs{1}
We discuss the computational complexity of context-free languages,  concentrating on two well-known structural properties---immunity and pseudorandomness. An infinite language is REG-immune (resp., CFL-immune) if it contains no infinite subset that is a regular (resp., context-free) language.
We prove that (i) there is a context-free REG-immune language outside REG/$n$ and  (ii) there is a REG-bi-immune language that can be computed deterministically using logarithmic space.
We also show that (iii) there is a CFL-simple set, where a CFL-simple language is an infinite context-free language whose complement is CFL-immune. Similar to the REG-immunity, a 
REG-primeimmune language has no polynomially dense subsets that are also regular. 
We further prove that (iv) there is a context-free language that is 
REG/$n$-bi-primeimmune. 
Concerning  pseudorandomness of context-free languages, we show that (v) CFL contains   REG/$n$-pseudorandom languages. Finally, we prove that  (vi) against REG/$n$, there exists an almost 1-1 pseudorandom generator computable in nondeterministic pushdown automata equipped with a write-only output tape and (vii) against REG, there is no almost 1-1 weakly pseudorandom generator computable  deterministically in linear time by a single-tape Turing machine.  

\ms

{\bf Keywords:} regular language, context-free language, immune, simple, primeimmune, pseudorandom, pseudorandom generator, swapping lemma

{ACM Subject Classification:} F.4.3, F.1.1, F.1.3
\end{quote}

%%%%%%%%%%%%%%%%%%
\section{Motivations and a Quick Overview}

The notion of {\em context-free languages} is one of the most fundamental concepts in formal language theory. Besides its theoretical interest, the context-freeness  has drawn, since the 1960s, practical applications in key fields of computer science, including programing languages, compiler implementation, and markup languages, mainly attributed to unique traits of context-free grammars or phrase-structure grammars. Some of the traits can be highlighted by, for instance,  pumping and swapping lemmas \cite{BPS61,Yam08}, normal form theorems \cite{Cho59,Gre65}, and undecidability theorems  \cite{BPS61,GR63}, all of which reveal certain hidden 
substructures of the context-free languages.  
The literature over half a century has successfully explored numerous basic properties (inclusive of operational closure, normal forms, and minimization) of the family $\cfl$ of all context-free languages. 
We wish to continue promoting our understandings of  $\cfl$ further. 
This family $\cfl$ contains a number of non-regular languages, such as $L_{eq}=\{0^n1^n\mid n\geq0\}$ and $Equal =\{w\in\{0,1\}^*\mid \#_{0}(w)=\#_{1}(w)\}$, where 
$\#_{b}(w)$ denotes the number of $b$'s in $w$. 
An effective use of a pumping lemma, for example, easily separates them from the family $\reg$ of regular languages (see, \eg \cite{HMU01} for their proofs). Nonetheless, these two context-free languages look quite different in nature and in complexity. 
How different is one language from another? How can we exactly describe 
a ``complex'' nature of those languages? 
These questions that arise naturally motivate us 
to search for a suitable ``complexity measure.''   
Since  time-complexity is not a suitable complexity measure for the context-free languages, another simple way to scale their complexity
is to show ``structural'' differences among those languages.
  
Up to now, numerous structural properties have been proposed for  
polynomial-time complexity classes, such as $\p$ (deterministic polynomial-time class) and $\np$ (nondeterministic polynomial-time class), and have been studied to understand their behaviors and also  characteristics. Many of those properties have arisen naturally in a context of answering long-unsettled questions, including 
the famous $\p=?\np$ question (see, \eg \cite{BDG88} for those properties). 
To measure the complexity of each context-free language, we intend to target two well-known structural properties---{\em immunity} and {\em pseudorandomness}---which have been studied since the 1940s in computational complexity theory and computational cryptography. These two properties are known to be closely related. In this paper, we shall spotlight them within a framework of formal language theory. This framework makes it possible to prove many properties (such as the existence of $\cfl$-immune languages), without any unproven assumption or 
any relativization, by taking approaches that  are quite different from standard ones in a setting of polynomial-time bounded computation.

%%%%%%%%%%%%%%
In the first part of this paper (Sections \ref{sec:immunity-notion}--\ref{sec:p-dense-immune}), our special attention goes to languages that have only ``computationally-hard'' non-trivial subsets. 
Those languages, known as {\em immune} languages and {\em simple} languages, naturally possess high complexity. 
Formally, given a fixed family $\CC$ of languages, 
an infinite language is {\em $\CC$-immune} if it has no infinite 
subset in $\CC$, and a {\em $\CC$-simple} language is an infinite language in $\CC$ whose complement is $\CC$-immune. Significantly, the $\CC$-immunity satisfies a {\em self-exclusion property}: $\CC$ cannot be $\CC$-immune. 
Notice that  the notion of simplicity has played a key role in the theory of $\np$-completeness (see, \eg \cite{BDG88}).  In addition, a language 
is called {\em $\CC$-bi-immune} if its complement and itself are both $\CC$-immune. 

These notions of immunity and simplicity date back to the 1940s, in which they were first conceived by Post \cite{Pos44} for recursively enumerable languages (see, \eg \cite{Rog67}). Their resource-bounded analogues were discussed later in the 1970s by Flajolet and Steyaert \cite{FS74}. During the 1980s, Ko and Moore \cite{KM81} intensively studied 
such limited immunity, whereas Homer and Maass \cite{HM83} explored resource-bounded simplicity. The bi-immunity notion was introduced in mid-1980s by Balc{\'a}zar and Sch{\"o}ning \cite{BS85}.  
Since then, numerous variants of immunity and simplicity (for instance, strong immunity, almost immunity, balanced immunity, and hyperimmunity) have been proposed and studied extensively (see, \eg \cite{BDG88,YS05} for references therein). 

Despite the past efforts in a setting of polynomial-time  bounded computation, the immunity notion has eluded from our full understandings; for instance, it has been open whether there exists a $\p$-immune set in $\np$ or even an $\np$-simple set since the existence of such  a set immediately yields a class separation between $\np$ and $\co\np$.  
Only in relativized worlds, we can prove directly 
the existence of those immune and simple sets (see, \eg \cite{Bal84,BS85,HM83,Lis99,SB84}). 
While there is a large volume of work on the immunity of polynomial-time complexity classes, there has been little study done on the immunity of the  context-free languages since the work of Flajolet and Steyaert. 
We expect that an 
analysis of $\reg$-immunity inside $\cfl$ would bring into new light a structural difference among various context-free languages. 
For instance, the aforementioned context-free language $L_{eq}$ is 
 $\mathrm{REG}$-immune \cite{FS74}, whereas its accompanied  language $Equal$ is not $\mathrm{REG}$-immune. 
Moreover, we can prove many structural properties with no extra unproven assumptions or even no relativization. For instance, unlike the case of $\np$-simplicity, a direct argument demonstrates that $\mathrm{CFL}$-simple languages actually exist. 
As those examples suggest, {\em context-freeness} provides tremendous advantages of proving immunity as well as 
simplicity over polynomial-time complexity classes. 

Nonetheless, all questions concerning the $\reg$-immunity in $\cfl$ have not settled in this paper. One of those unsettled questions is related to 
$\reg$-bi-immunity. 
It is unclear that $\mathrm{REG}$-bi-immune languages actually exist inside $\cfl$. At our best, we can prove that the language class $\mathrm{L}$ (deterministic logarithmic-space class) contains $\reg$-bi-immune languages.  
Another unsolved question concerns a density issue of immune languages. Notice that all known $\reg$-immune languages $L$ in $\cfl$ have exponentially-small density rate $|L\cap\Sigma^n|/|\Sigma^n|$. The $\reg$-immune language  
$L_{eq}$, for instance,  has density rate $|L_{eq}\cap\{0,1\}^n|/2^n \leq  1/2^{n}$ 
for each even length $n$; in contrast, $Equal$, which is not even $\reg$-immune,  
has its density rate $|Equal\cap\{0,1\}^n|/2^n \geq 1/n$ 
for any sufficiently large even number $n$.  
Naturally, we can ask whether there exists any context-free  
$\reg$-immune language whose density $|L\cap\Sigma^n|$ is lower-bounded by a ``polynomial'' fraction, \ie $1/p(n)$ for a certain non-zero polynomial $p$. Such a density condition is referred to as {\em polynomially dense} or {\em p-dense}. 
In this paper, as the first step toward the above open question, 
we can show the existence of 
a p-dense $\reg$-immune language in $\mathrm{L}$. 
The difficulty of proving those structural properties 
of $\cfl$ might indicate a limitation of the expressing power of context-freeness as languages. 

Recall that $\CC$-immunity requires the non-existence of 
an infinite subset in $\CC$. Is there any language that lacks only p-dense subsets (instead of all infinite subsets) in $\CC$? Such a natural question gives rise to a variant of $\CC$-immunity, referred to as {\em $\CC$-primeimmunity}. Now, we turn our attention to this 
new notion inside $\cfl$. 
With a slightly adroit argument, we can prove that an 
``extended'' language of $Equal$, $Equal_{*} = \{aw\mid a\in\{\lambda,0,1\},w\in Equal\}$, is $\reg/n$-primeimmune, where $\reg/n$ is obtained from $\reg$ by supplementing appropriate ``advice'' of size $n$ \cite{TYL04,Yam08}. In stark contrast to the  $\reg$-bi-immunity, 
we can show that $\reg$-bi-primeimmune languages (even $\reg/n$-bi-primeimmune languages) exist inside $\cfl$.

%%%%%%%%%%%%%

The second part of this paper (Sections \ref{sec:pseudorandom}--\ref{sec:generator}) is exclusively devoted to a property of computational randomness, or {\em pseudorandomness}. 
An early computational approach to ``randomness'' began in the 1940s. Church's \cite{Chu40} random 0-1 sequences, for instance, 
demand that every infinite subsequence should contain 
{\em asymptotically} the same number of 0s and 1s. This line of study on computational randomness, also known as {\em stochasticity}, concerns asymptotic behaviors of random sequences.   It has been known a close 
connection between stochasticity and bi-immunity. 

To suit our study of the context-free languages, however, we rather 
look into  ``non-asymptotic'' behaviors of randomness 
inside languages. This paper  
discusses the following type of ``random'' languages. We say that a language $L$ is {\em  $\CC$-pseudorandom} if, for every language $A$ in $\CC$, the characteristic function $\chi_{A}$ agrees with $\chi_{L}$ on ``nearly'' $50\%$ of strings of each length, where ``nearly'' means ``with a negligible margin of error.'' Our notion can be seen as a variant of Wilber's \cite{Wil83} randomness, which dictates an asymptotic behavior of  $\chi_{L}$ and $\chi_{A}$.  

Similar in the case of primeimmunity, p-denseness requires our special attention. Targeting p-dense languages, we introduce another ``randomness'' notion, called {\em weak $\CC$-pseudorandomness}, as a non-asymptotic variant of M{\"u}ller's \cite{Mul93} balanced immunity,  Loveland's \cite{Lov66} unbiasedness, and weak-stochasticity of Ambos-Spies 
\etalc~\cite{AMWZ96}.
Loosely speaking, a language $L$ is weak $\CC$-pseudorandom if the density rate $|L\cap A\cap\Sigma^n|/|A\cap\Sigma^n|$ is close to $1/2$ for every p-dense language $A$ in $\CC$. 

A typical example of $\reg/n$-pseudorandom language is the set $IP_{*}$ whose strings are of the form $auv$ with $a\in\{\lambda,0,1\}$ and $|u|=|v|$ such that the binary inner product between $u^R$ and $v$ is odd. 
A close connection between pseudorandomness and primeimmunity  
draws a conclusion that $IP_{*}$ is also $\reg/n$-bi-primeimmune. By  clear contrast, the aforementioned language $Equal_{*}$, for instance, can separate the notion of $\reg/n$-primeimmunity 
from the notion of weak $\reg/n$-pseudorandomness. 

In the early 1980s, Blum and Micali \cite{BM84} studied {\em pseudorandom generators}, which produce unpredictable sequences. Our formulation of pseudorandom generators, attributed to Yao \cite{Yao82}, uses indistinguishability from uniform sequences. Loosely speaking, a pseudorandom generator is a function producing a string that looks random for any target adversary (in this case, the generator is said to {\em fool} it). In our language setting, we call a function mapping $\Sigma^*$ to $\Sigma^*$ with stretch factor $s(n)$ (that is, $|f(x)|=s(|x|)$) a {\em pseudorandom generator} against a language family $\CC$ if 
$G$ fools every language in $\CC$. Our pseudorandom generator actually 
tries to fool languages in a sense that, over string inputs of each length $n$, the outcome distribution of the generator is indistinguishable 
from the strings of length $s(n)$; namely, 
the function $\ell(n)=|\prob_{x}[\chi_{A}(x)=1] - \prob_{y}[\chi_{A}(y)=1]|$ has negligibly small values, where $x$ and $y$ are chosen uniformly at random from $\Sigma^n$ and $\Sigma^{s(n)}$, respectively. We can 
prove that, against the language family $\reg/n$, there exists an almost 1-1 pseudorandom generator computable by a nondeterministic pushdown automaton equipped with an output tape.   As a limitation of the power of 
generators, we can show that, even against $\reg$, there is no almost 1-1 pseudorandom generator computable by a one-tape one-head linear-time deterministic Turing machine.

%%%%%%%%%%%%%%%%%%%
\section{Foundations}\label{sec:notation}

The {\em natural numbers} are nonnegative integers and we write $\nat$ to denote the set of all natural numbers. We set $\nat^{+}=\nat-\{0\}$ for convenience. For any two integers $m,n$ with $m\leq n$, the notation $[m,n]_{\integer}$ stands for the integer interval $\{ m,m+1,m+2,\ldots,n\}$. The {\em symmetric difference} between two sets $A$ and $B$, denoted $A \triangle B$, is the set $(A-B)\cup (B-A)$. 
In this paper, all logarithms are assumed to have base two unless otherwise stated.
Let $\log^{(1)}{n}=\log{n}$ and $\log^{(i+1)}{n} = \log(\log^{(i)}{n})$ for each number $i\in\nat^{+}$. 
A function $\mu$ from $\nat$ to $\real^{\geq0}$ (all nonnegative reals) is called {\em noticeable} if there exists a non-zero polynomial $p$ such that $\mu(n)\geq 1/p(n)$ for all but finitely many numbers $n$ in $\nat$.  By contrast, $\mu$ is called {\em negligible} if we have  $\mu(n)\leq 1/p(n)$  for any non-zero polynomial $p$ and for all sufficiently large numbers $n\in\nat$. 

Our {\em alphabet}, often denoted $\Sigma$, is always a nonempty finite set. A {\em string} is a series of symbols taken from $\Sigma$, and the {\em length} of a string $x$ is the number of symbols in $x$ and is denoted $|x|$. 
The {\em empty string} is always denoted $\lambda$ and,  
for two strings $x$ and $y$, $xy$ denotes the {\em concatenation} of $x$ and $y$. In particular, $\lambda x$ coincides with $x$. The notation $\Sigma^n$ denotes the set of all strings of length $n$. 
For any string $x$ of length $n$ and for any index $i\in[0,n]_{\integer}$,  $pref_{i}(x)$ is the substring of $x$, made up with the first $i$ symbols of $x$. In particular, we have $pref_{0}(x)=\lambda$.   
For each string $w\in\Sigma^*$ and any symbol $a\in\Sigma$, 
the number of $a$'s appearing in $w$ is represented by $\#_{a}(w)$.  
A {\em language} over an alphabet $\Sigma$ is a subset of $\Sigma^*$, and the {\em characteristic function} $\chi_{A}$ of $A$ is defined as $\chi_{A}(x)=1$ if $x\in A$ and $\chi_{A}(x)=0$ otherwise for every string $x\in\Sigma^*$.

For any language $L$ over $\Sigma$, the {\em complement} 
of $L$ (\ie  $\Sigma^* - L$) is often denoted $\overline{L}$ whenever $\Sigma$ is clear from the context. Furthermore, the {\em complement} of a family $\CC$ of languages is the collection of all languages whose complements are in $\CC$. We use the conventional notation $\co\CC$ to denote the complement of $\CC$. For simplicity, 
the notation $dense(L)(n)$ expresses the cardinality of the set $L\cap\Sigma^n$; that is, $dense(L)(n)= |L\cap\Sigma^{n}|$. A language $L$ over $\Sigma$ 
is called {\em (polynomially) sparse} if $dense(L)(n)$ is upper-bounded by a certain fixed polynomial in $n$. 

Since this paper mainly discusses {\em regular languages} and {\em context-free languages}, we assume the reader's basic knowledge on 
fundamental mechanisms of one-tape one-head one-way finite automata, possibly equipped with pushdown (or first-in last-out) stacks. See, \eg \cite{HU79,HMU01} for the formal definitions of these finite automata. 
Generally speaking, for each finite automaton $M$, the notation $L(M)$ represents the set of all strings ``accepted'' by $M$ under appropriate accepting criteria. Notice that such criteria may significantly differ if we choose different machine types. Conventionally, we say that $M$ {\em recognizes} a language $L$ if $L=L(M)$. 
Languages recognized by {\em deterministic finite automata} (or dfa's) and {\em nondeterministic pushdown automata} (or npda's) are respectively called {\em regular languages} and  {\em context-free languages}. 
For ease of notation, 
we denote by $\mathrm{REG}$ the family of regular languages and by $\mathrm{CFL}$ the family of context-free languages. 
In addition, {\em deterministic pushdown automata} (or dpda's) recognize only {\em deterministic context-free languages}, and $\mathrm{DCFL}$ denotes the family of all deterministic context-free languages. 

It is known that the language family $\cfl$ is not closed under conjunction (see, \eg \cite{HMU01} for the proof). This fact inspires us to introduce a restricted conjunctive closure of $\cfl$. For any positive integer $k$, the {\em $k$ conjunctive closure} of $\cfl$, denoted $\cfl(k)$, is the collection of all languages $L$ such that there are $k$ languages $L_1,L_2,\ldots,L_k$ in $\cfl$ for which $L = L_1\cap L_2\cap\cdots \cap L_k$. By its definition, $\cfl(1)$ coincides with $\cfl$ itself.

To explain the notion of {\em advice}, we first adapt a ``track'' notation $\track{x}{y}$ from \cite{TYL04}. For any pair of symbols $\sigma\in\Sigma_1$ and $\tau\in\Sigma_2$, the notation $\track{\sigma}{\tau}$ denotes a new symbol made from $\sigma$ and $\tau$. For two strings $x=x_1x_2\cdots x_n$ and $y=y_1y_2\cdots y_n$ of the same length $n$, the notation $\track{x}{y}$ is shorthand for the string $\track{x_1}{y_1}\track{x_2}{y_2} \cdots\track{x_n}{y_n}$.  
An {\em advice function} is a map from $\nat$ to $\Gamma^*$, where $\Gamma$ is an appropriate alphabet. For any family $\CC$ of languages, the advised class $\CC/n$ denotes the collection of languages $L$ over an alphabet $\Sigma$ for which there exist another alphabet $\Gamma$, an advice function $h:\nat\rightarrow\Gamma^*$, and a language $A\in\CC$ such that, for every string $x\in\Sigma^*$, (i) $|h(|x|)|=|x|$ (\ie length preserving) and (ii)  $x\in L$ iff $\track{x}{h(|x|)}\in A$ \cite{TYL04,Yam08}.

As an additional computation model, we introduce the notion of 
one-tape one-head off-line Turing machines  
whose tape heads move in all directions. Such machines are succinctly called {\em 1TMs}. 
All tape cells of an infinite input/work tape are indexed with integers and an input string of length $n$ is given in the cells indexed between $1$ and $n$ surrounded by two designated {\em endmarkers}. We take a notation $\mathrm{1}\mbox{-}\mathrm{DTIME}(t(n))$ from \cite{TYL04} to denote the collection of all languages that are recognized within time $t(n)$ 
by those 1TMs. As a special case, we write $\onedlin$ for $\mathrm{1\mbox{-}DTIME}(O(n))$. It is well-known that $\reg= 
\onedlin =  \mathrm{1\mbox{-}DTIME}(o(n\log{n}))$ \cite{Hen65,Kob85}.

To handle (multi-valued partial) functions, we further consider  Turing machines that produce (possibly) many output strings at once. Conventionally, whenever a single-tape machine halts along the tape that contains only a block of non-blank symbols beginning at the left endmarker  and surrounded only by blanks, we treat the string given in this block as an {\em outcome} of the machine. A (partial) function $f$ from $\Sigma^*$ to $\Gamma^*$, where $\Sigma$ and $\Gamma$ are two alphabets, is called {\em length preserving} if $|f(x)|=|x|$ for any string $x$ in the domain of $f$.

Let us introduce several function classes, which are natural extensions of the language families $\reg$ and $\cfl$. The function class $\oneflin$ is the set of all {\em single-valued total} functions computable in time $O(n)$ by deterministic 1TMs. Similarly, the notation $\oneflin(\fpartial)$ expresses the set of all {\em single-valued partial} functions $f$ such that there exists a linear-time deterministic 1TM $M$ that starts with input $x$ and halts with output $f(x)$ by entering an accepting state whenever $f(x)$ is defined; $M$ always 
enters a rejecting state when $f(x)$ is not defined. 

We expand single-valued functions to multi-valued functions, which 
produce sets of values. We define $\mathrm{1\mbox{-}NLINMV}$ as 
the class of all  {\em multi-valued partial} functions $f$ for which there exists a nondeterministic 1TM $M$, provided that all computation (both accepting and rejecting) paths terminates with certain output values in time $O(n)$, together with the condition that $f(x)$ consists of all output values produced along accepting paths. Notice that, when $f(x)=\emptyset$, there should be no accepting path. See \cite{TYL04} for their basic properties.

The original npda model was introduced to recognize ``languages.'' 
Let us expand this model to compute (partial) functions. For this purpose, we equip an npda  with an additional {\em output tape} and its associated tape head. Now, our npda has two tapes: a {\em read-only} input tape and a {\em write-only} output tape. 
This new npda acts as a standard npda with a single stack except for moves of an output-tape head. In the write-only output tape, its tape head always moves to the right whenever it writes a non-blank symbol in its tape cell. 
Here, we allow the tape head to stay still on a blank symbol 
as long as it does not write any non-blank symbol. Since the head moves only to a new blank cell, it cannot read any meaningful symbol that have already written in the output tape. Along each computation path, we define an {\em output} of the npda as follows.  When the npda enters an accepting state, we treat the string produced on the output tape as an output of the machine. On the contrary, when the machine enters a rejecting state, we assume that the machine produces no output along this path although 
there may be non-blank symbols left on the output tape. 
Hence, the machine can produce at least one output value or no output value at all. 
Therefore, such an npda in general computes a multi-valued partial function.
Let $\mathrm{CFLMV}$ denote the collection of all multi-valued partial functions that can be produced by those npda's. Moreover, 
$\mathrm{CFLSV}$ consists of all {\em single-valued partial} functions in $\mathrm{CFLMV}$. When the functions $f$ are limited to be total (\ie $f(x)$ is always defined), we use the notation $\mathrm{CFLSV_t}$. Note that, for every language $L$, $L\in\cfl$ iff $\chi_{L}\in\mathrm{CFLSV_t}$. 

%%%%%%%%%%%%%%%%%%%
\section{Resource-Bounded Immunity and Simplicity}\label{sec:immunity-notion}

Intuitively, an {\em immune} language contains finite subsets and only infinite subsets that are ``hard'' to compute; in other words, it lacks any non-trivial ``easy'' subset. In contrast, a {\em simple} language inherits the immunity only for its complement. Such languages turn out to possess quite high complexity. The original notions of immunity and simplicity are rooted in the 1940s and later adapted to computational complexity theory in the 1970s with various restrictions on their computational resources. 

The notion of resource-bounded immunity for an arbitrary family $\CC$ of languages can be introduced in the following abstract way. A language $L$ is said to be {\em $\CC$-immune} if (i) $L$ is infinite and (ii) no infinite subset of $L$ exists in $\CC$. When a language family $\DD$ contains a $\CC$-immune language, we conveniently say 
that $\DD$ is {\em $\CC$-immune}.
Since $\CC$ cannot be $\CC$-immune, if $\DD$ is $\CC$-immune then it immediately follows that $\DD\nsubseteq\CC$. On the contrary, the separation $\DD\nsubseteq \CC$ cannot, in general, guarantee the existence of $\CC$-immune languages inside $\DD$. 
By this reason, a separation between two language families by immune languages is sometimes referred to as a {\em strong separation}. 
In a polynomial-time setting, for instance, even if assuming that $\p\neq\np$, it is not known whether there is a $\p$-immune language in $\np$ or equivalently $\np$ is $\p$-immune.

%%%%%%%
\subsection{Existence of Immune and Simple Languages}\label{sec:existence-immune}

Within a framework of formal language theory, we shall discuss the immunity of two well-known families of languages:  $\reg$ and $\cfl$. 
Earlier, Flajolet and Steyaert \cite{FS74} presented two examples: a $\reg$-immune language $L_{eq} = \{0^n1^n \mid n\in\nat\}$ and a $\cfl$-immune language $L_{3eq}=\{a^nb^nc^n\mid n\in\nat\}$. Notice that, in contrast, similar non-regular languages $Equal =\{x\in\{0,1\}^*\mid \#_0(x)=\#_1(x)\}$ and $3Equal =\{x\in\{0,1,2\}^*\mid \#_0(x)=\#_1(x)=\#_2(x)\}$ are not $\mathrm{REG}$-immune, because  
two regular languages $\{ (01)^n \mid n\in\nat \}$ and $\{ (012)^n \mid n\in\nat \}$ are respectively infinite subsets of $Equal$ and of $3Equal$. This clear contrast signifies a ``structural'' difference among 
those languages. We shall see more examples of immune languages. 

Since $\reg\subseteq \cfl$, the $\cfl$-immunity clearly implies the $\reg$-immunity but the converse does not hold because, for instance, $L_{eq}$ is $\reg$-immune and also belongs to $\cfl$. Since $L_{eq}$ and $L_{3eq}$ are {\em sparse} languages (because, \eg  $dense(L_{eq})(n) \leq 1$ for all lengths $n\in\nat$),  they  belong to the advised class $\reg/n$. Therefore, since  
$L_{eq}\in\dcfl$ and $L_{3eq}\in\cfl(2)$, the language family  $\dcfl\cap\reg/n$ is $\reg$-immune, and $\cfl(2)\cap\reg/n$ (thus $\cfl(2)\cap\cfl/n$) is $\cfl$-immune.  
In addition to these results, we remark that 
the language family $\mathrm{DCFL}-\reg/n$ is also $\reg$-immune.
A simple example is the ``marked'' language $Pal_{\#} =\{ w\#w^{R}\mid w\in\{0,1\}^*\}$ over the ternary alphabet $\{0,1,\#\}$, 
where $\#$ is used only as a {\em separator}. Notice that a use of this separator is crucial because a corresponding unmarked version $Pal = \{ww^{R}\mid w\in\{0,1\}^*\}$ (even-length palindromes) 
is no longer $\reg$-immune. 
The $\reg$-immunity of $\dcfl - \reg/n$ can be obtained simply by applying a standard pumping lemma for regular languages \cite{BPS61} (for the immunity of $Pal_{\#}$) and a swapping lemma for regular languages \cite{Yam08} (for the non-membership of $Pal_{\#}$ to $\reg/n$).  
When turning to the $\cfl$-immunity, on the contrary, 
it is not known whether $\cfl(2)-\cfl/n$ is $\cfl$-immune. The bast we can show at present is that $\mathrm{L}-\cfl/n$ is $\cfl$-immune, where $\mathrm{L}$ consists of all languages recognized by deterministic Turing machines with a single read-only input tape and a 
logarithmic-space bounded work tape.  
A typical example is the marked language 
$3Dup_{\#} = \{w\# w\# w\mid w\in\{0,1\}^*\}$. 
A standard pumping lemma for context-free languages \cite{BPS61} 
proves the $\cfl$-immunity of $3Dup_{\#}$; 
moreover, a direct use of a swapping lemma for context-free 
languages \cite{Yam08} 
proves that $3Dup_{\#}\not\in \cfl/n$. Since $3Dup_{\#}\in \mathrm{L}$, the $\cfl$-immunity of $\mathrm{L}-\cfl/n$ follows immediately.   

The immunity notion has given rise to the notion of {\em simplicity}. In general, a language $L$ is called {\em $\CC$-simple} if (i) $L$ is infinite, (ii) $L$ is in $\CC$, and (iii) $\overline{L}$ is $\CC$-immune. 
The existence of such a $\CC$-simple language clearly leads to a class separation $\CC\neq \co\CC$. {}Because of this implication, we do not know whether $\np$-simple languages exist (since, otherwise, $\np\neq\co\np$ follows). It is therefore natural to ask if $\cfl$-simple languages actually exist. In what follows, 
we prove the existence of such $\cfl$-simple languages.

\begin{proposition}\label{CFL-simple}
There exist $\cfl$-simple languages. Moreover, the complements of some of those languages belong to $\cfl(2)\cap\reg/n$.
\end{proposition}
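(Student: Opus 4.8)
The plan is to construct a single context-free language that is $\cfl$-simple and whose complement lies in $\cfl(2)\cap\reg/n$. A natural candidate, in the spirit of the marked languages $Pal_\#$ and $Dup_\#$ used above, is a ``padded'' or ``marked'' version of a language built around $L_{3eq}=\{a^nb^nc^n\mid n\in\nat\}$, whose complement we already know to behave well: $\overline{L_{3eq}}$ is in $\cfl$ (indeed in $\cfl(2)$), and what we want is to arrange matters so that $\overline{L_{3eq}}$-like sets sit inside a $\cfl$ language while the $L_{3eq}$-like part provides the immune complement. Concretely, I would take $S = \{0,1,\#\}^*\#\{a,b,c\}^* \setminus (\{0,1,\#\}^* \# L_{3eq})$-type constructions, or more cleanly, work with $S = \overline{a^*b^*c^*} \cup (a^*b^*c^* \setminus L_{3eq})$ so that $\overline{S} = L_{3eq}$. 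The point is: choose $S$ so that $S$ itself is context-free and $\overline{S}$ equals (or closely tracks) a known $\cfl$-immune set such as $L_{3eq}$.

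First I would verify that the chosen $S$ is infinite and in $\cfl$. For the suggested $S = \overline{a^*b^*c^*}\,\cup\,(a^*b^*c^*\setminus L_{3eq})$, note $\overline{a^*b^*c^*}$ is regular, and $a^*b^*c^*\setminus L_{3eq} = \{a^ib^jc^k \mid i\neq j \text{ or } j\neq k\}$, which is a standard context-free language (it is $\{a^ib^jc^k \mid i\neq j\}\cup\{a^ib^jc^k\mid j\neq k\}$, a union of two $\cfl$ sets, hence $\cfl$). So $S\in\cfl$. Second, $\overline{S} = L_{3eq}$, which Flajolet–Steyaert showed is $\cfl$-immune; hence $S$ is $\cfl$-simple by definition. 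Third, for the moreover clause I would check the location of $\overline{S}=L_{3eq}$: it lies in $\cfl(2)$ since $L_{3eq} = \{a^ib^jc^k \mid i=j\}\cap\{a^ib^jc^k\mid j=k\}$ (intersecting two $\cfl$ sets — one also intersects with the regular, hence $\cfl$, set $a^*b^*c^*$, which is harmless), and it lies in $\reg/n$ because $L_{3eq}$ is effectively tally: $dense(L_{3eq})(n)\le 1$ for every $n$ (nonzero only when $n=3m$, where it equals $1$), so a length-$n$ advice string encoding whether $n\equiv 0\pmod 3$ together with the unique witness $a^mb^mc^m$ lets a dfa decide membership — this is exactly the argument used earlier for $L_{eq}\in\reg/n$.

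The main obstacle I anticipate is not the $\cfl$-immunity of the complement — that is already granted by the cited Flajolet–Steyaert result — but rather making sure all the \emph{closure bookkeeping} is clean: that the specific $S$ I pick is genuinely in $\cfl$ (unions are fine, but one must resist the temptation to intersect, since $\cfl$ is not closed under intersection), and that $\overline{S}$ is \emph{exactly} a $\cfl$-immune set rather than merely containing one. If one instead wants the complement to lie in $\cfl(2)\cap\reg/n$ while keeping $S$ itself in a small class, the cleanest route is to make $\overline{S}$ literally equal to $L_{3eq}$, at which point the ``moreover'' is immediate. A secondary subtlety is the $\reg/n$ membership: one must confirm the advice is length-preserving and that the underlying set (here the track-encoded witness check) is genuinely regular; but since $L_{3eq}$ is tally up to a fixed alphabet renaming, this mirrors the $L_{eq}\in\reg/n$ observation already made in the paper and poses no real difficulty. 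I would therefore present the proof as: (1) exhibit $S$ with $\overline{S}=L_{3eq}$; (2) cite $\reg$/$\cfl$ closure facts to place $S\in\cfl$; (3) invoke $L_{3eq}$'s $\cfl$-immunity for simplicity; (4) observe $L_{3eq}\in\cfl(2)\cap\reg/n$ by the intersection-of-two-$\cfl$-sets presentation and the tally-advice argument.
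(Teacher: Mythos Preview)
Your proposal is correct and takes essentially the same approach as the paper: the paper's $\cfl$-simple language is $\overline{L_{keq}}$ for $k\geq 3$, and in the case $k=3$ this is exactly your $S=\overline{L_{3eq}}$, with the same decomposition $\overline{L_{3eq}}=\overline{a^*b^*c^*}\cup\{a^ib^jc^k\mid i\neq j\text{ or }j\neq k\}$ to place it in $\cfl$, the same intersection $\{a^ib^ic^k\}\cap\{a^ib^jc^j\}$ to place $L_{3eq}$ in $\cfl(2)$, and the same tally-advice argument for $\reg/n$. The only differences are cosmetic: the paper states the result for all $k\geq 3$ and reproves the $\cfl$-immunity of $L_{keq}$ via the pumping lemma rather than citing Flajolet--Steyaert.
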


Our example of $\cfl$-simplicity is the complement of a language $L_{keq}$ ($k\geq 3$), which is a natural generalization of $L_{3eq}$. Let $k\geq 3$ be fixed.  We define $L_{keq} = \{\sigma_1^n\sigma_2^n\cdots \sigma_k^n\mid n\in\nat\}$ over the $k$-letter alphabet $\Sigma_k =\{\sigma_1,\sigma_2,\ldots,\sigma_k\}$.  
We shall show that the complement of $L_{keq}$ is indeed $\cfl$-simple. This gives a clear contrast with the fact that both the language $3Equal$ 
(associated with $L_{3eq}$) and its complement are not even $\reg$-immune. 

\begin{proofof}{Proposition \ref{CFL-simple}}
Let $k$ be any integer at least $3$. 
We intend to show that (1) $\overline{L_{keq}}$ is in $\mathrm{CFL}$, (2) $L_{keq}$ is in $\cfl(2)\cap\reg/n$, and (3) $L_{keq}$ is $\cfl$-immune. 

(1) Our first claim is that $\overline{L_{keq}}$ belongs to $\mathrm{CFL}$. To simplify our proof, we shall argue only on the case $k=3$. Let us introduce two additional languages 
$L_{3} = \{\sigma_1^k\sigma_2^l\sigma_3^m \mid k,l,m\in\nat\}$ and 
$L_{3neq} = \{ \sigma_1^k\sigma_2^l\sigma_3^m \mid k\neq l, l\neq m, \text{ or } k\neq m\}$. Note that
$L_{3neq}$ equals the union of the following three sets: 
$\{\sigma_1^k\sigma_2^l\sigma_3^m \mid k\neq l,m\geq0\}$, 
$\{\sigma_1^k\sigma_2^l\sigma_3^m \mid l\neq m,k\geq0\}$, and 
$\{\sigma_1^k\sigma_2^l\sigma_3^m \mid m\neq k,l\geq0\}$, all of which are apparently context-free. 
Since $\mathrm{CFL}$ is closed under union, 
$L_{3neq}$ should belong to $\mathrm{CFL}$.
Moreover, since $\overline{L_{3eq}} = L_{3neq}\cup \overline{L_{3}}$
and $\overline{L_3}\in\mathrm{REG}\subseteq \mathrm{CFL}$, the language $\overline{L_{3eq}}$ is also in $\mathrm{CFL}$. 

(2) To show that $L_{keq}\in\reg/n$, choose an advice function $h$ 
defined as $h(n)=\sigma_1^{n/k}\sigma_2^{n/k}\cdots \sigma_{k}^{n/k}$ for all numbers $n\equiv 0$ ($\mathrm{mod}\;k$) and $h(n)=0^n$ for all the other $n$'s. 
If we define $S=\{\track{w}{w}\mid w\in\Sigma_k^*\}$,
then $\track{w}{h(|w|)}$ is in $S$ exactly when $w=h(|w|)$, which means  that $w\in L_{keq}$. Thus, $L_{keq}$ belongs to $\reg/n$. 
To show that $L_{keq}\in \cfl(2)$, let us deal only with the case where $k=2m$ and $m=2j+1$ for a certain number $j\in\nat^{+}$, since the other cases are similar. We introduce two useful languages $L_1$ and $L_2$ defined as follows: $L_1$ (resp., $L_2$) consists of all strings of the form $\sigma_1^{n_1}\sigma_2^{n_2}\cdots \sigma_{k}^{n_k}$ such that $n_{i}=n_{k+1-i}$ for all indices $i\in[1,m]_{\integer}$ (resp., $n_{2i+1}=n_{2i+2}$  and $n_{2i+m+1}=n_{2i+m+2}$ for all $i\in[0,j-1]_{\integer}$). 
Clearly, $L_1$ and $L_2$ are both context-free. Since the target language $L_{keq}$ can be expressed as $L_1\cap L_2$, $L_{keq}$ belongs to $\cfl(2)$. 

(3) Finally, we shall check the $\cfl$-immunity of $L_{keq}$. Assume that there exists an infinite subset $A\in\cfl$ of $L_{keq}$. To this $A$, we then apply a standard pumping lemma for context-free 
languages.\footnote{[Pumping Lemma for 
Context-Free Languages]\hs{1}
Let $L$ be any infinite context-free language. There exists a positive number $m$ such that, for any $w\in L$ with $|w|\geq m$, $w$ can be decomposed as $w = uvxyz$ with the following three conditions: (i) $|vxy|\leq m$, (ii) $|vy|\geq1$, and $uv^ixy^iz$ is in $L$ for any $i\in\nat$. See \cite{BPS61,HMU01}.}  
Let $m$ be a pumping-lemma constant. Choose $w=\sigma_1^n\sigma_2^n\cdots \sigma_k^n$ in $A$ with $n\geq m$. Take a decomposition $w=uvxyz$ with $|vxy|\leq m$ and $|vy|\geq 1$ such that $uv^jxy^jz$ is in $A$ for every index 
$i\in\nat$. Since $|vxy|\leq m\leq n$, there exists an index $i$ such that 
$vxy$ is a substring of either $\sigma_i^n$ or $\sigma_i^n\sigma_{i+1}^n$. Thus, 
we need to examine only two cases: (i) $v$ and $y$ are both substrings of $\sigma_i^n$ or (ii) $v$ is a substring of $\sigma_i^n$ and $y$ is a substring of $\sigma_{i+1}^n$. In either case, the string $uv^2xy^2z$ cannot belong to $A$.
This is absurd, and therefore $A$ does not exist. We thus reach the desired  conclusion of the $\cfl$-immunity of $L_{keq}$.
\end{proofof}

Notice that our $\cfl$-simple languages $\overline{L_{keq}}$ is not even $\reg$-immune because, for instance, the language $\overline{L_{3}}$ is an infinite regular subset of $\overline{L_{3eq}}$. This immediately raises a  natural question of 
whether there exist $\reg$-immune $\cfl$-simple languages.

%%%%%%%%%%
\subsection{Properties of Immune Languages}\label{sec:properties}

Immune languages lack infinite subsets of certain complexity, 
and therefore, as we have presented in the previous subsection, 
they are of quite high complexity. To improve our understandings of 
the $\reg$-immunity, we wish to examine this notion 
by studying its relationships to three existing 
notions---nonregularity, quasireduction, and hardcore. 
The first notion relates to a nonregularity measure, which leads to 
another characterization of the $\reg$-immunity. The {\em nonregularity} $N_{L}(n)$ of a language $L$ at $n$ is the total number of equivalence classes  in $\Sigma^n/\equiv_{L}$, where the relation $\equiv_{L}$ is defined as:  $x\equiv_{L}y$ iff $\forall z\in\Sigma^*[xz\in L\IFF yz\in L]$.  

\begin{proposition}\label{non-regularity}
A language $L$ is $\reg$-immune iff $L$ is infinite and, for every infinite subset $A$ of $L$ and for every constant $c>0$, $N_{A}(n)>c$ holds for an infinite number of indices $n\in\nat$. 
\end{proposition}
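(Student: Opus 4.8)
The plan is to deduce the proposition from the Myhill--Nerode theorem, with essentially no combinatorics on context-free languages. Throughout I read $N_A(n)$ as the number of $\equiv_A$-classes that contain some string of length at most $n$; this is the reading under which the statement holds (under the ``length exactly $n$'' reading, $\{0^{2^k}\mid k\in\nat\}$ is already a $\reg$-immune counterexample whose number of classes at each single length stays bounded), and its one crucial consequence is that $n\mapsto N_A(n)$ is non-decreasing. I would first dispose of the case $L$ finite (both sides of the stated equivalence are then false, so it holds trivially), and then, using only the definition of $\reg$-immunity, observe that for infinite $L$ the claim amounts to: every infinite $A\subseteq L$ is non-regular if and only if for every infinite $A\subseteq L$ and every constant $c>0$ we have $N_A(n)>c$ for infinitely many $n$. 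It therefore suffices to prove, for a single infinite language $A$, the chain of equivalences: $A$ is non-regular $\IFF$ $N_A$ is unbounded $\IFF$ for every $c>0$ the inequality $N_A(n)>c$ holds for infinitely many $n$; the last equivalence is immediate from the monotonicity of $N_A$.

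For the remaining equivalence, the forward half is easy: if $A$ is regular, a minimal deterministic finite automaton for $A$ has finitely many states, say $k$, so $\equiv_A$ has index at most $k$ and hence $N_A(n)\le k$ for all $n$, i.e.\ $N_A$ is bounded. For the converse I would contrapose: assuming $N_A(n)\le c$ for all but finitely many $n$, monotonicity upgrades this to $N_A(n)\le c$ for \emph{every} $n$; since each $\equiv_A$-class has a shortest representative of some finite length and is counted by $N_A(n)$ for every larger $n$, the existence of more than $c$ classes would force $N_A(n)\ge c+1$ once $n$ exceeds the shortest-representative lengths of any $c+1$ of them, a contradiction. Hence $\equiv_A$ has finite index and $A$ is regular. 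Chaining the equivalences and quantifying over all infinite $A\subseteq L$ then yields the proposition.

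I do not expect a genuine obstacle here; the steps are routine once the framework is fixed. The two points that need care are: (a) pinning down the cumulative meaning of $N_A(n)$ so that the passage from ``$N_A(n)\le c$ eventually'' to ``$N_A(n)\le c$ always'' is legitimate --- this monotonicity is the load-bearing step --- and (b) keeping the nested quantifiers straight, so that the single-language equivalence ``$A$ non-regular iff $N_A$ unbounded'' is correctly recognized as exactly what $\reg$-immunity of $L$ unfolds into.
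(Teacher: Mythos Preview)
Your proposal is correct and follows essentially the same Myhill--Nerode route as the paper: both directions are proved by contraposition, with ``$A$ regular $\Rightarrow N_A$ bounded'' coming from the finiteness of the state set, and ``$N_A$ bounded $\Rightarrow A$ regular'' coming from building a DFA on the equivalence classes.

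Your care about the definition of $N_A(n)$ is warranted and in fact sharper than the paper. The paper literally defines $N_L(n)$ as the number of classes of $\Sigma^n/\equiv_L$ (length \emph{exactly} $n$), and its proof of the only-if direction silently treats this as the global index of $\equiv_A$ when constructing the DFA; it never addresses the passage from ``bounded at each level'' to ``finitely many classes overall.'' Your counterexample $\{0^{2^k}\mid k\in\nat\}$ over $\{0,1\}$ indeed has exactly two $\equiv_A$-classes at every length $n\ge 1$ while being non-regular, so under the literal per-length reading the proposition is false and the paper's proof has a gap at precisely the point you flag. Your cumulative reading (classes with a representative of length $\le n$) repairs this cleanly via monotonicity, and is the interpretation under which both your argument and the paper's intended argument go through.
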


This proposition is a natural extension of the so-called {\em Myhill-Nerode Theorem} \cite{HU79}, which bridges between the nonregularity and $\reg$. We include its proof for completeness. 

\begin{proofof}{Proposition \ref{non-regularity}}
(If -- part) We prove a contrapositive. Assume that $L$ has an infinite subset $A$ in $\reg$. Since $A\in\reg$, by the Myhill-Nerode Theorem, 
the cardinality of the set $\Sigma^*/\equiv_{A}$ is finite.  In other words, $N_{A}(n)$ is upper-bounded by a certain constant, 
which is not depending on $n$.

(Only If -- part) Let $L$ be $\reg$-immune. Assume that there are an infinite subset $A$ of $L$ and a constant $c>0$ for which $N_{A}(n)\leq c$ for all but finitely many $n\in\nat$. Let $\{A_1,A_2,\ldots,A_{c}\}$ denote all equivalence classes in $\Sigma^*/\equiv_{A}$. Take the lexicographically minimal string, say, $a_i$ from each set $A_i$. Consider a dfa $M$ with its transition function $\delta$ defined by: $\delta(i,\sigma)=j$ iff $a_i\sigma\equiv_{A} a_j$. The set of final states is $F=\{i\mid a_i\in A\}$. It is not difficult to check that $M$ indeed recognizes $A$. This implies that $A$ is regular, a contradiction against the $\reg$-immunity of $L$. 
\end{proofof}

Our notion of $\onedlin$-m-quasireduction gives the second 
characterization to the $\reg$-immunity. 
Let us recall from Section \ref{sec:notation} 
the partial function class $\oneflin(\fpartial)$. 
A {\em $\onedlin$-m-quasireduction} from $L$ to $A$ is a single-valued partial function $f$ that satisfies the following two conditions: for every string $x$, (i) when $f(x)$ is defined, $x\in L$ iff $f(x)\in A$ and (ii) $f$ is in $\oneflin(\fpartial)$.

\begin{lemma}
The language $L$ is $\reg$-immune iff $L$ is infinite and for any set $A$ and for any $\onedlin$-m-quasireduction $f:L\rightarrow A$ and for any $u\in A$, $f^{-1}(u)$ is finite. 
\end{lemma}

\begin{proof}
(If -- part) Assume that an infinite language $L$ is not $\reg$-immune. Take an infinite regular subset $A\subseteq L$. Choose an element 
$u_0\in A$ and, for every string $x$, define $f(x)=u_0$ if $x\in A$ and undefined otherwise. Since $f^{-1}(u_0)$ coincides with $A$,  
$f^{-1}(u_0)$ is infinite. Moreover, $f$ belongs to $\oneflin(\fpartial)$ since $A\in\reg$.  Thus, $f$ is a 
$\onedlin$-m-quasireduction from $L$ to $A$. 

(Only If -- part) Assume that we have an infinite set $L$, another set $A$, a $\onedlin$-m-quasireduction $f:L\rightarrow A$, and an element $u_0\in A$ such that $B=_{def}f^{-1}(u_0)$ is infinite. Since $f\in\oneflin(\fpartial)$, take a linear-time deterministic 1TM $M$ that computes $f$. Note that, for every input $x$, $x\in B$ iff $M(x)$ halts in an accepting state and outputs $u_0$. Hence, $B$ is in $\reg$. 
Therefore, $L$ has an infinite regular subset. 
\end{proof}

Next, we give the third characterization of the $\reg$-immunity using a notion of ``hardcore''; however, our definition of ``hardcore'' differs 
from a time-restricted definition of {\em (polynomial) hardcore} for polynomial-time bounded computation (see, \eg \cite{BDG88} for its definition). With a use of an npda, we rather impose a space 
restriction on the size of a stack used by the npda. 
To be more accurate, for any npda   $M=(Q,\Sigma,\Gamma,\delta,q_0,z,F)$, any constant $k\in\nat$, and any input string $x\in\Sigma^*$, we introduce the notation $M(x)_k$ defined as follows: (1) $M(x)_k = 1$ if there is an accepting path of $M$ on the input $x$ with stack size at most $k$; (2) $M(x)_k=0$ if all computation paths of $M$ on $x$ are rejecting paths with stack size at most $k$; and (3) $M(x)_k$ is {\em undefined} otherwise.
A context-free language $A$ is called a {\em $\reg$-hardcore} for a language $L$ if, for any constant $k\in\nat$ and any npda $M$ recognizing $A$, there exists a finite set $B\subseteq L$ such that $M(x)_k$ is undefined for all strings $x\in L - B$.  

\begin{proposition}
The following two statements are equivalent. Let $L$ be any infinite context-free language.
\begin{enumerate}\vs{-1}
\item The language $L$ is $\reg$-immune.
\vs{-2}
\item The language $L$ is a $\reg$-hardcore for $L$.
\end{enumerate}
\end{proposition}

\begin{proof}
(1 $\Rightarrow$ 2) We shall prove a contrapositive. Let $L$ be any infinite  context-free language. Assuming that $L$ is not a $\reg$-hardcore for $L$, we plan to prove that $L$ has an infinite regular subset. 
There exist a constant $k\in\nat$  and an npda $M$ with $L(M)=L$ such that, for every finite set $B\subseteq L$, $M(x)_k$ is defined (\ie $M(x)_k\in\{0,1\}$) for a certain input $x\in L-B$. Now, let us  introduce a new npda $N$ as follows: on input $x$, $N$ simulates $M$ on $x$ nondeterministically and, along each computation path, whenever its stack size exceeds $k$, it immediately rejects $x$. Consider the set $L(N)$ of all strings accepted by $N$. By the definition of $N$, it follows that $L(N)\subseteq L$. 

First, we claim that $L(N)$ is regular. Since $k$ is a fixed constant, we can express the entire content of the stack as a certain new internal state. Tracking down this state, we can simulate $N$ using a certain nondeterministic finite automaton (or nfa). This implies that $L(N)$ is regular.
Next, we claim that $L(N)$ is infinite. 
For every finite subset $B$ of $L$, a certain string $x\in L-B$ satisfies 
$M(x)_{k}\in\{0,1\}$; hence, $x\in L(N)$. {}From this property, we can conclude that $L(N)$ is infinite. Therefore, $L(N)$ is an infinite regular subset of $L$. 

(2 $\Rightarrow$ 1) We first assume that an infinite context-free language 
$L$ is not $\reg$-immune. This means that there exists a dfa $M$ for which $L(M)\subseteq L$ and $L(M)$ is infinite. Since $L$ is context-free, 
take an npda $N$ that recognizes $L$. Now, let us define a new npda $M'$ as follows: on input $x$, $M'$ splits its computation into two nondeterministic computation paths and then simulates $M$ and $N$ along these paths separately. Clearly, $L(M')=L(M)\cup L(N) = L$. 
Choose $k=1$ and consider $M'(x)_k$. For every string $x\in L(M)$, $M'(x)_k=1$ follows since $M$ is a dfa and uses no stack space. 
Let $B$ be any 
finite subset of $L$. Because $L(M)-B$ is infinite within $L$, there exists a string $x$ in $L-B$ for which $M'(x)_k=1$. This implies that $L$ cannot be a $\reg$-hardcore for $L$.
\end{proof}

%%%
%%%
\ignore{
At the end of this section, we shall discuss a slightly weaker immunity notion, 
known as {\em almost immunity}.
A language $L$ is said to be {\em almost $\CC$-immune} if $L$ is the union of a $\CC$-immune set and a set in $\CC$. Since $\CC$-immune languages are almost $\CC$-immune, $\cfl$ naturally contains almost $\CC$-immune languages. Let us consider a simple example $L=\{0^nx\mid x\in\{0^n,1^n\},n\in\nat\}$. This language $L$ is almost $\reg$-immune (because $L=\{0^{2n}\mid n\in\nat\}\cup L_{eq}$) but obviously not $\reg$-immune. When an infinite language $L$ is not almost $\CC$-immune, it is said to be {\em $\CC$-levelable}. Since every $\CC$-levelable language is not $\CC$-immune, the levelability of a language strengthens its non-immunity. A language family $\DD$ is {\em $\CC$-levelable} if $\DD$ contains a $\CC$-levelable language. 
Concerning $\np$-levelability, all ``known'' $\np$-complete languages are $\np$-levelable. 

Let us demonstrate two examples of $\reg$-levelable languages. We have already seen in Section \ref{sec:immunity-notion} that $Equal$ and $Pal$ are not $\reg$-immune. We shall strengthen this fact by showing that $Equal$ and $Pal$ are both $\reg$-levelable. Note that $Equal$ is in $\cfl\cap\reg/n$ and $Pal$ is in $\cfl-\reg/n$ \cite{Yam08}. 

\begin{proposition}\label{CFL-vs-levelable}
The languages $Equal$ and $Pal$ are both $\reg$-levelable. 
\end{proposition}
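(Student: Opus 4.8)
The plan is to show that each of $Equal$ and $Pal$ fails to be \emph{almost} $\reg$-immune, i.e., that neither can be written as the union of an infinite $\reg$-immune set and a regular set. Recall that if $L = R \cup I$ with $R$ regular and $I$ being $\reg$-immune, then $L - R \subseteq I$, and any infinite regular subset of $L$ would have to be ``absorbed'' by $R$ in the sense that its intersection with $\overline{R}$ is finite (otherwise $I$ would contain an infinite regular subset, namely that intersection, contradicting $\reg$-immunity). So the strategy for each language $L$ is: \emph{suppose} $L = R \cup I$ with $R \in \reg$ and $I$ infinite $\reg$-immune, and derive a contradiction by exhibiting an infinite regular subset $A \subseteq L$ such that $A \cap \overline{R} = A - R$ is still infinite; this forces $A - R$ to be an infinite regular subset of $I$, contradicting immunity.

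For $Equal$, the key observation is that there is a large supply of ``disjointly supported'' infinite regular subsets. For instance, for each fixed string $p \in \{0,1\}^*$ with $\#_0(p) = \#_1(p)$, the set $A_p = \{p(01)^n \mid n \in \nat\}$ is an infinite regular subset of $Equal$. More usefully, consider the single regular language $A = \{(01)^n \mid n \in \nat\} \cup \{0(01)^n1 \mid n\in\nat\} \cup \cdots$; but the cleanest route is: given the hypothetical regular $R$, note $\overline{R}$ is regular, and argue that $Equal \cap \overline{R}$ must contain an infinite regular subset. Concretely, since $Equal$ is ``regular-rich'' — it contains infinitely many infinite regular subsets with pairwise-finite intersections, e.g.\ the family $\{\, w_k (01)^* : k \in \nat \,\}$ for a suitable sequence $w_k$ of balanced prefixes of distinct lengths — a pigeonhole/union argument shows that at least one of these (or an infinite regular union of tails of them) survives removal of $R$. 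The point is that a single regular language $R$ has, by the pumping lemma, a ``periodic skeleton'' and cannot cover all of the infinitely many arithmetic-progression-like subsets of $Equal$ while leaving only finitely much of each; if it did, $R$ itself would essentially contain $Equal$, but $Equal \notin \reg$. I would formalize ``essentially contain'' via: if $A_p - R$ is finite for every $p$ in an infinite index set, then (choosing the $A_p$ to cover $Equal$) $Equal - R$ is finite, so $Equal = R \cup (\text{finite set}) \in \reg$, contradiction — hence some $A_p - R$ is infinite and regular (being the difference of a regular set and a finite set), and it sits inside $I$, the desired contradiction.

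For $Pal$, I would run the same scheme with a family of infinite regular subsets of $Pal$. A convenient one: for each $k$, the set $B_k = \{0^i 1^{2k} 0^i \mid i \in \nat\}$ — wait, that is not regular; instead use $B_k = \{ x\, u\, x^R : x \text{ ranges so as to make this regular}\}$. The honest choice is $B_k = \{ 0^i\, (1 0^{2k} 1)\, 0^i : i \in \nat\}$, still not regular. The correct gadget: fix a palindrome $p = p^R$ and take $A_p = \{ 0^n p 0^n : n \in \nat \}$ — not regular either. Since $Pal$'s infinite regular subsets all have the flavor $\{ w : w = w^R\} \cap (\text{regular})$, the simplest genuine infinite regular subset is $\{ a^n : n \in \nat\}$ for a single letter — but $a^n$ is a palindrome, so $\{0\}^* \subseteq Pal$! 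That is the key: $0^* \subseteq Pal$ and $1^* \subseteq Pal$, and more generally $\{u\,0^n\,u^R : n\in\nat\}$ for any fixed $u$, and these are regular. Then $\{u\,0^*\,u^R : u \in \{0,1\}^k\}$ gives, for each $k$, finitely many infinite regular subsets whose union over all $u$ of length $k$ is $\{ w 0^* w^R\}$-like and whose tails cover unboundedly long palindromes. Running the covering/pigeonhole argument: if each such $A_u - R$ were finite, then $Pal - R$ would be finite (as the relevant $A_u$ cover cofinitely much of $Pal$), forcing $Pal \in \reg$, contradicting $Pal \notin \reg$ (indeed $Pal \notin \reg/n$). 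So some $A_u - R$ is infinite and regular inside $I$, contradiction.

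The main obstacle I anticipate is making the ``covering'' step rigorous for each language: I must exhibit, for $Equal$ and for $Pal$ respectively, an explicit countable family $\{A_j\}_{j}$ of infinite regular subsets such that (a) each $A_j \subseteq L$, and (b) $L \setminus \bigcup_j A_j$ is finite (or at least: any regular $R$ with $A_j \setminus R$ finite for all $j$ must satisfy $L \setminus R$ finite). For $Pal$ this is clean using $A_u = \{u\,0^n\,u^R : n \in \nat\}$ ranging over all $u \in \{0,1\}^*$, since every palindrome $w = w^R$ of length $\geq 1$ with a long block of $0$'s in the middle lies in some $A_u$, and one checks that those with no long central $0$-block form a regular set that can be folded into $R$ separately — here the swapping lemma (Lemma~\ref{pumping-swapping-lemma}(3)) or a direct Myhill--Nerode count handles the residual. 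For $Equal$ the analogous family indexed by balanced prefixes works the same way. Once the family is in hand, the pigeonhole deduction ``if all $A_j - R$ finite then $L - R$ finite then $L$ regular, contradiction'' is routine, and the difference $A_{j_0} - R$, being regular and infinite and contained in the immune component, delivers the contradiction that establishes $\reg$-levelability.
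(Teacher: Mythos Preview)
Your covering/pigeonhole step has a genuine gap. You argue: take a countable family $\{A_p\}_p$ of infinite regular subsets of $L$ whose union is (cofinite in) $L$; since each $A_p - R$ is a regular subset of the immune part $I$, each $A_p - R$ is finite; hence $L - R = \bigcup_p (A_p - R)$ is finite, forcing $L\in\reg$. But a countable union of finite sets need not be finite, so the last inference fails. Nor can you repair this by passing to a \emph{finite} subfamily, since a finite union of regular sets is regular and $Equal,\,Pal\notin\reg$. You do flag the weaker target ``any regular $R$ with $A_j\setminus R$ finite for all $j$ must satisfy $L\setminus R$ finite,'' but you never prove it, and there is no evident reason it should hold: the Myhill--Nerode bound you would extract (e.g.\ $N_p\le k$ for $A_p=p(01)^*$) only tells you $p(01)^n\in R$ for $n\ge k$, which says nothing about $p$ itself, and $p$ ranges over infinitely many balanced strings. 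In effect you are trying to show that $Equal\setminus R$ (assumed $\reg$-immune) contains an infinite regular subset, which is exactly the statement at issue.

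The paper avoids this entirely by invoking Lemma~\ref{auto-levelable}: any non-regular $L$ that is $\onedlin$-m-autoreducible via a length-increasing, $\onedlin$-invertible autoreduction is $\reg$-levelable. One then simply exhibits such autoreductions: $f(x)=x01$ for $Equal$ and $f(x)=0x0$ for $Pal$. Both are trivially length-increasing, $\onedlin$-computable and $\onedlin$-invertible, and satisfy $x\in L\iff f(x)\in L$. The lemma's proof (an orbit argument under $f$) is where the real work happens, and it does not rely on any covering of $L$ by regular subsets.
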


To show this proposition, we need a general statement on a necessary condition for a language to be $\reg$-levelable. In our $\reg$ setting, we need to require a slightly different conditions (in comparison to a polynomial-time setting, see \cite{YS05}). We say that a language $L$ is {\em $\onedlin$-m-autoreducible} if there exist a function $f$ (called an autoreduction) and a linear-time one-tape one-head Turing machine $M$ such that, for every string $x$, (1) $M$ on the input $x$ outputs $f(x)$ and (2) $x\in L$ iff $f(x)\in L$. We say that a function $f$ is {\em length increasing} if $|f(x)|>|x|$ for every string $x$. We say that a function $f$ is {\em $\onedlin$-invertible} if there exists a one-tape one-head off-line linear-time deterministic Turing machine $M$ such that $M(f(x))$ outputs $x$ for every string $x$.

The proof of the following lemma is a simple modification of \cite[Lemma 5.4]{YS05}, which is based on an argument in \cite{Rus86}. We include the proof only for completeness.

\begin{lemma}\label{auto-levelable}
Let $L$ be any non-regular language. If $L$ is $\onedlin$-m-autoreducible by an autoreduction $f$ that is length-increasing and $\onedlin$-invertible, then $L$ and $\overline{L}$ are both $\reg$-levelable.
\end{lemma}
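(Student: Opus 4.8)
The plan is to show that a non-regular, $\onedlin$-m-autoreducible language $L$ (with a length-increasing, $\onedlin$-invertible autoreduction $f$) is $\reg$-levelable by arguing that $L$ cannot be decomposed as $R\cup I$ with $R\in\reg$ and $I$ being $\reg$-immune; symmetrically for $\overline{L}$. Since autoreducibility transfers to the complement (condition (2), $x\in L$ iff $f(x)\in L$, is equivalent to $x\in\overline{L}$ iff $f(x)\in\overline{L}$, using the same machine $M$), it suffices to handle $L$ itself, and the $\overline{L}$ case follows verbatim.

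So suppose toward a contradiction that $L=R\cup I$ with $R\in\reg$ and $I$ either finite or $\reg$-immune. The key idea, following Russo's argument as adapted in \cite{YS05}, is to use the autoreduction $f$ to ``pump'' an infinite easy subset out of $L$, contradicting the immunity of $I$. First I would dispose of the trivial case: if $I$ is finite then $L=R\cup I$ is regular (finite modifications preserve regularity), contradicting non-regularity of $L$. So $I$ is infinite and $\reg$-immune. Now consider iterating $f$: since $f$ is length-increasing, for any string $x$ the orbit $x, f(x), f(f(x)),\ldots$ strictly increases in length, and membership in $L$ is preserved along the orbit. The strategy is to build, inside $\reg$, an infinite subset of $L$ all of whose elements eventually land in $R$ under iteration of $f$ — roughly, the set of strings whose $f$-orbit meets $R$. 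Because $R$ is regular and $f$ is $\onedlin$-computable and $\onedlin$-invertible, one shows this set (or a suitable infinite sub-piece of it, e.g. a single level or a regularly-described family of $f$-preimages of elements of $R$) is itself regular, using $\reg=\onedlin=\onedlin(\fpartial)$-type closure facts: running the inverting machine and then a dfa for $R$ stays within linear time on a one-tape head, hence stays in $\reg$. That infinite regular set is a subset of $L$; if it meets $I$ in an infinite set we contradict $\reg$-immunity of $I$, and if it is almost entirely inside $R$ then $R$ already ``covers'' an infinite regular chunk in a way that, combined with $L$ being non-regular, forces $I$ to carry an infinite regular subset after all. The precise bookkeeping — deciding which level or preimage-family to extract so that it is both infinite and manifestly regular — is where the length-increasing and $\onedlin$-invertible hypotheses are used essentially.

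The main obstacle I anticipate is exactly this extraction step: making sure the infinite subset we pull out of $L$ is genuinely in $\reg$ and not merely in $\cfl$ or $\mathrm{L}$. In the polynomial-time setting of \cite{Rus86,YS05} one has a lot of slack (polynomial-time closure under composition, inversion via search, etc.), but here we are confined to one-tape linear time, so every composition of the autoreduction, its inverse, and a dfa for $R$ must be checked to still run in time $O(n)$ on a single head — which is why the hypotheses insist on $\onedlin$-computability and $\onedlin$-invertibility of $f$ rather than mere polynomial-time versions. I would handle this by composing at the level of automata/transducers: build a single one-tape linear-time machine that, on input $x$, first inverts one step of $f$ (or guesses and verifies a preimage, staying linear), then simulates the dfa for $R$, and argue by the $\reg=\onedlin$ characterization (Hennie's theorem, cited in the excerpt) that the resulting language is regular. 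Once that is in place, the contradiction with $\reg$-immunity of $I$ is immediate, and the symmetric statement for $\overline{L}$ is obtained by replacing $L$ with $\overline{L}$ throughout, which is legitimate since the autoreduction, its length-increasing property, and its $\onedlin$-invertibility are all insensitive to complementation.
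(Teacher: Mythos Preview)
Your general strategy follows the Russo-style argument the paper adapts from \cite{YS05}, and you correctly isolate the main difficulty: extracting a subset of $L$ that is simultaneously infinite and provably in $\reg$. However, your concrete proposal does not pin down the right object, and your case analysis in the last two sentences of the second paragraph is too vague to close the argument.

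The set you first reach for --- all $x$ whose $f$-orbit eventually meets $R$ --- is $\bigcup_{i\ge 0} f^{-i}(R)$, and there is no reason this lies in $\onedlin$: deciding membership requires iterating the inverter an input-dependent number of times, and you give no argument that this stays linear on a single tape. Your fallback (``a single level'') is closer, but you do not develop it. The paper's proof makes this precise via two specific sets and a clean dichotomy. First, set $D=\{x:x\notin R\text{ and }f(x)\in R\}=\overline{R}\cap f^{-1}(R)$; this is regular because $f\in\oneflin$ gives $f^{-1}(R)\in\onedlin=\reg$, and $\reg$ is closed under Boolean operations. Since $f(x)\in R\subseteq L$ forces $x\in L$, we have $D\subseteq L\setminus R\subseteq I$. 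If $D$ is infinite, immunity of $I$ is violated immediately. If $D$ is finite, pick $x_0\in L\setminus R$ (infinite, since $R\subseteq L$ is regular and $L$ is not) with $|x_0|$ exceeding every element of $D$, and form the single orbit $H=\{f^{(i)}(x_0):i\ge 0\}$. This $H$ is infinite (length-increasing $f$) and contained in $L$; the paper asserts $H\in\reg$ via $\onedlin$-invertibility of $f$ (to test $y\in H$, iterate the inverter down toward $x_0$). Now either $H\cap R=\emptyset$, whence $H\subseteq I$ and immunity fails, or some minimal $k\ge 1$ satisfies $f^{(k)}(x_0)\in R$; then $f^{(k-1)}(x_0)\in D$ but has length $\ge|x_0|>\max\{|z|:z\in D\}$, a contradiction.

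Your alternative second case (``if it is almost entirely inside $R$ then $R$ already covers an infinite regular chunk\ldots forces $I$ to carry an infinite regular subset'') does not correspond to a step that actually goes through; the dichotomy that works is the $D$ finite/infinite split above, with the orbit of a \emph{single} point handling the finite branch. Note also that $\onedlin$-invertibility is used only to place the orbit $H$ in $\reg$; the regularity of $D$ needs just forward $\onedlin$-computability of $f$.
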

 
\begin{proof}
Assume that $L$ is almost $\reg$-immune and $\onedlin$-autoreducible by an autoreduction $f$ that is length-increasing and $\onedlin$-invertible. Take $B\in\reg$ and $C$, which is $\reg$-immune such that $L= B\cup C$. Define $D=\{x\mid x\not\in B,f(x)\in B\}$. Clearly, $D\in\reg$. We want to show that $D$ is infinite, leading to a contradiction against the immunity of $C$. If $D$ is finite, then $C-(B\cup C)$ is infinite. Take $z_0\in D$, which is the lexicographically largest element. Let $x\in C-(B\cup C)$, which is minimal such that $|x|>|z_0|$. Define $H=\{f^{(i)}(x)\mid i\in\nat\}$, where $f^{(i)}(x)$ denotes the $i$-fold composition of $f$ on $x$ (in particular, $f^{(0)}(x)=x$). Since $f$ is $\onedlin$-invertible, $H$ is in $\reg$. We claim that $H\cap B\neq\emptyset$, because, otherwise, $H$ is an infinite subset of $C$, a contradiction. Thus, $f^{(k)}(x)\in D$ for a certain number $k$. This implies that $|f^{(k)}(x)|\leq|z_0|<|x|$, a contradiction. 
\end{proof}

Proposition \ref{CFL-vs-levelable} is now easily proven by Lemma \ref{auto-levelable}.

\begin{proofof}{Proposition \ref{CFL-vs-levelable}}
Following Lemma \ref{auto-levelable}, it suffices to show that $Equal$  
and $Pal$ are both $\onedlin$-m-autoreducible by certain autoreductions that are length-increasing and $\onedlin$-invertible.
First, we consider the case $Equal$. Define our desired autoreduction $f$ as $f(x)=x01$. It is easy to see that $x\in Equal$ iff $f(x)\in Equal$. Moreover, $f$ is length-increasing and $\onedlin$-invertible. 
Next, we show that $Pal$ is length-increasing $\onedlin$-m-autoreducible. In this case, define our autoreduction $f$ as $f(x)=0x0$. Obviously, it holds that $x\in Pal$ iff $f(x)\in Pal$. Obviously, $f$ is length-increasing and $\onedlin$-invertible. 
\end{proofof}
}
%%%
%%%

%%%%%%%%%%%%%%%%%%%%%%%
\subsection{Complexity of Bi-Immune Languages}\label{sec:bi-immunity}

The existence of natural $\reg$-immune languages within $\cfl$ encourages us to search for much ``stronger'' immune languages in $\cfl$. One such candidate is  another variant of $\CC$-immunity, known as 
$\CC$-bi-immunity \cite{BS85}, where a language $L$ is {\em $\CC$-bi-immune} if $L$ and its complement $\overline{L}$ are both $\CC$-immune. For brevity, a language family $\DD$ is said to be {\em $\CC$-bi-immune} if there is a $\CC$-bi-immune language in $\DD$. 
In the literature, time-bounded bi-immunity has been known to be related to the notion of {\em genericity}, which corresponds to certain finite-extension diagonalization arguments (see, \eg \cite{AFH88,YS05} for its connection).

Is there any $\reg$-bi-immune language in $\cfl$? All the  examples of context-free $\reg$-immune languages shown in Section 
\ref{sec:existence-immune} appear to lack the $\reg$-bi-immunity property. 
Related to the open question on the existence of $\reg$-immune $\cfl$-simple languages, discussed in Section \ref{sec:existence-immune}, if $\cfl$ is not $\reg$-bi-immune, then no $\cfl$-simple language can be $\reg$-immune. Unfortunately, we are unable to answer the question at this point; instead, we shall prove that the language family 
$\mathrm{L}\cap\reg/n$ is $\reg$-bi-immune.

\begin{proposition}\label{reg-bi-immune}
The languages family $\mathrm{L}\cap\reg/n$ is $\reg$-bi-immune.
\end{proposition}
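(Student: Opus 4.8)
The plan is to exhibit a single language $L$ that lies in $\mathrm{L}\cap\reg/n$ and is $\reg$-bi-immune, i.e.\ both $L$ and $\overline{L}$ have no infinite regular subset. The natural first idea is to merge a $\reg$-immune language with its complementary analogue on disjoint length-classes of strings. Concretely, I would partition $\Sigma^*$ by parity (or by some easily-decidable arithmetic condition on the length), and on even lengths place a known $\reg$-immune set while on odd lengths place the \emph{complement} of another known $\reg$-immune set; then the global complement $\overline{L}$ restricted to even lengths is co-immune-like and restricted to odd lengths contains the immune set. To make this work one must be careful: a regular subset of $L$ could live entirely within one parity class, so we need $L$ itself to be immune when restricted to even lengths \emph{and} $\overline{L}$ immune when restricted to odd lengths — i.e.\ both pieces should contribute immunity in the appropriate direction. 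A clean way: take a ternary-alphabet construction in the spirit of $Pal_{\#}$ used in Proposition~\ref{L-sharp-eq-immune}. Since $Pal_{\#}=\{w\#w^R\mid w\in\{0,1\}^*\}$ is $\reg$-immune, deterministic-logspace computable, and in $\reg/n$ (its tally-like density makes the advice trivial; more carefully, the proof of Proposition~\ref{L-sharp-eq-immune} shows it \emph{fails} to be in $\reg/n$, so I instead need a set that \emph{is} in $\reg/n$). So I would use $L_{eq}=\{0^n1^n\mid n\ge 0\}$-type building blocks, which are in $\dcfl\cap\reg/n\subseteq\mathrm{L}\cap\reg/n$ and $\reg$-immune by Flajolet--Steyaert \cite{FS74}.

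The construction I would actually carry out: let $\Sigma=\{0,1\}$ and set
\[
L \;=\; \{\,0^n1^n \mid n\ge 1\,\}\ \cup\ \{\,x\in\Sigma^*\mid |x|\ \text{odd},\ x\neq 0^k1^{|x|-k}\ \text{for all }k\,\}.
\]
On odd lengths this makes $L$ cofinite-in-length-class, so $L$ clearly contains infinite regular subsets — that is wrong. So the correct move is to separate the two immune pieces onto \emph{genuinely disjoint} regions and arrange that $L$ is immune on region $A$ and $\overline L$ is immune on region $B$, with $A,B$ a regular partition of $\Sigma^*$. Take $A=\{x\mid |x|\ \text{even}\}$, $B=\{x\mid|x|\ \text{odd}\}$. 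Put $L\cap A = \{0^n1^n\mid n\ge1\}$ and $L\cap B = B\setminus\{0^n1^{n+1}\mid n\ge 0\}$. Then $L\cap A$ is $\reg$-immune, and $\overline{L}\cap B = \{0^n1^{n+1}\mid n\ge0\}$ is $\reg$-immune (same pumping argument). Now an infinite regular subset of $L$ would, intersected with the regular set $B$, give an infinite regular subset of $L\cap B$; but $L\cap B$ \emph{does} contain big regular subsets (e.g.\ $\{0\}\cdot\Sigma^{2k}$-type sets avoiding the forbidden strings), so this fails too. The genuine fix — and the step I expect to be the real obstacle — is to make \emph{each} parity class itself carry immunity in \emph{both} directions: use a language like $Pal$ on region $B$ but only its immune marked core. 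So I would instead take, on odd lengths, strings of the form $u\,c\,v$ where $c$ is the middle symbol and membership depends on whether $u\in$ (immune set) — but this reintroduces the non-regularity-measure bookkeeping of Section~\ref{sec:properties}.

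Given the difficulty, the approach I would commit to is the \emph{robust} one: encode two independent immune ``signals'' into every string. Work over $\Sigma=\{0,1\}$ and define $L$ so that $x\in L$ iff $x$ has the form $0^a1^b0^c1^d$ with the following rule: if $a=b$ then $x\in L$ (this forces an immune subset into $L$, namely $\{0^n1^n\mid n\ge1\}$), and among strings with $a\neq b$ we exclude exactly those with $c=d$, so $\{0^{a}1^{b}0^n1^n\mid a\neq b\}\subseteq\overline{L}$ is $\reg$-immune, while all \emph{other} strings (not of the four-block shape, or four-block with $a\neq b,c\neq d$) are assigned by a fixed easily-computable pattern that a pumping argument forbids any regular set from tracking in either direction. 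The key steps are then: (1) verify $L\in\mathrm{L}$ by a straightforward logspace block-counting algorithm; (2) verify $L\in\reg/n$ — here the advice encodes, for each length $n$, which of the arithmetic cases are achievable, and the verified condition ($a=b$, or four-block shape with $a\ne b,c\ne d$) is checkable by a dfa reading the track symbols; (3) prove $\reg$-immunity of $L$ via Lemma~\ref{pumping-swapping-lemma}(1): any infinite regular $A\subseteq L$ yields by pumping a string whose $a,b,c,d$ counts are perturbed in a way that leaves the set $L$, since pumping inside a $0^a1^b0^c1^d$ string changes exactly one block length and, for all but finitely many choices, pushes it into $\overline L$; (4) symmetrically prove $\reg$-immunity of $\overline L$ using that $\{0^{a}1^{b}0^n1^n\mid a\neq b\}$ and the non-four-block complement both resist pumping. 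The main obstacle, and where I would spend the most care, is step (3)--(4) simultaneously: ensuring that the \emph{same} assignment rule makes both $L$ and $\overline L$ immune, which requires the forbidden/allowed patterns on the ``generic'' strings to be aperiodic enough that no regular set can stay inside either side — this is exactly the point where a clean choice of encoding (rather than the naive parity split) is essential.
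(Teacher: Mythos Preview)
Your proposal has a genuine gap at the core. Your final construction declares ``if $a=b$ then $x\in L$'' for strings $0^a1^b0^c1^d$, but this immediately gives $L$ an infinite regular subset: the set $\{0\,1\,0\,1^d \mid d\ge 1\}$ is regular, and every string in it has $a=b=1$, hence lies in $L$. So $L$ is not $\reg$-immune. The same problem recurs for strings \emph{not} of four-block shape: wherever you place $\{1^n\mid n\ge 0\}$, that side acquires an infinite regular subset. The underlying confusion is that you are arranging for $L$ and $\overline{L}$ each to \emph{contain} a $\reg$-immune set, but bi-immunity requires that $L$ and $\overline{L}$ themselves \emph{be} immune---a much stronger demand that your block-counting rules do not meet. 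Your $\reg/n$ claim is also unsupported: checking ``$a=b$'' inside a string of length $n$ requires counting that a dfa cannot do, and a single length-$n$ advice string cannot encode the answer for all decompositions $a+b+c+d=n$ simultaneously.

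The paper's approach sidesteps all of this by making membership depend \emph{only on the length} $|w|$: it takes $L_{even}=\{w\mid 2k<\log\log|w|\le 2k+1 \text{ for some }k\}$ and $L_{odd}=\overline{L_{even}}$. Because membership is length-determined, $\reg/n$ is trivial (the advice bit says whether length $n$ is in), and $\mathrm{L}$ membership follows from computing $\lceil\log\log n\rceil$ in logspace. The immunity argument then runs purely on lengths: the intervals $(2^{2^{2k}},2^{2^{2k+1}}]$ grow so fast that, given any string $w\in L_{even}$ and the pumping decomposition $w=xyz$, one can choose the exponent $i$ so that $|xy^iz|$ lands in the next $L_{odd}$-interval, contradicting $xy^iz\in A\subseteq L_{even}$. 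This ``length-only with tower-growing gaps'' idea is exactly the missing ingredient in your attempts; any construction that assigns membership based on internal string structure will struggle to avoid regular subsets on at least one side.
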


How can we prove this proposition? Balc{\'a}zar and Sch{\"o}ning \cite{BS85} employed a diagonalization technique to construct a $\p$-bi-immune language inside $\mathrm{EXP}$ (deterministic exponential-time class). Notice that any $\p$-bi-immune language constructed by such 
a diagonalization 
depends on how to enumerate all languages in $\p$. In our proof below, 
without requiring any enumeration of languages in $\reg$, 
we explicitly present 
two $\reg$-bi-immune languages. Our desired $\reg$-bi-immune languages are 
$L_{even}$ and $L_{odd}$ given as follows: 
\begin{itemize}\vs{-1}
\item $L_{even} = \{w\in\{0,1\}^*\mid \exists k\in\nat\,[2k<\log^{(2)}{|w|}\leq 2k+1]\} \cup\{\lambda\}\cup\{0,1\}^2$, 
and 
\vs{-2}
\item $L_{odd} = \{w\in\{0,1\}^*\mid \exists k\in\nat\,[2k+1<\log^{(2)}{|w|}\leq 2k+2]\} \cup\{0,1\}$. 
\end{itemize}
Notice that these two languages form a partition of $\{0,1\}^*$; namely, 
$L_{even}\cup L_{odd}=\{0,1\}^*$ 
and $L_{even}\cap L_{odd}=\emptyset$. 

\begin{proofof}{Proposition \ref{reg-bi-immune}}
It suffices to show that $L_{even}$ and $L_{odd}$ are both $\reg$-immune because each of them is the complement of the other.  
For brevity, let $\Sigma=\{0,1\}$. We begin with proving the $\reg$-immunity of $L_{even}$ by contradiction.  
Assume that there exists an infinite regular subset $A$ of $L_{even}$.
 We apply to $A$ a standard pumping lemma  for regular 
languages.\footnote{[Pumping Lemma for Regular Languages]\hs{1}
Let $L$ be any infinite regular language. There exists a number  $m>0$ (referred to as a pumping-lemma constant) such that, for any string $w$ of length $\geq m$ in $L$, there is a decomposition $w=xyz$ for which (i) $|xy|\leq m$, (ii) $|y|\geq 1$, and (iii) $xy^iz\in L$ for any $i\in\nat$. See \cite{BPS61,HMU01}.}  
Take a pumping-lemma constant $m>0$ and then choose a string $w$ in $A\cap \Sigma^n$ for a certain length $n$ with $n\geq m+1$. Such $n$ satisfies that 
$2k<\log^{(2)}{n}\leq 2k+1$ for a certain number $k\in\nat$. The pumping lemma provides a decomposition $w=xyz$ with $|xy|\leq m$ and $|y|\geq1$ for which 
$w_i =_{def} xy^iz$ belongs to $A$ for any number $i\in\nat$.
Now, let $\ell =|y|$. 
Toward a contradiction, there are two cases to consider separately.

{\bf Case 1:} Consider the case where $\log^{(2)}{n} = 2k+1$. 
In this case, we choose $i=n+1$. 
Since $1\leq \ell\leq m$ and $m+1\leq n$, 
the length $|w_i|$ is sandwiched by two terms as 
\[
2^{2^{2k+1}}=n <|w_i| = n + (i-1)\ell \leq n + n\ell \leq n(m+1)
 \leq n^2 = 2^{2^{2k+2}}.
\] 
In short, it holds that 
$2k+1<\log^{(2)}{|w_i|}\leq 2k+2$, implying that $w_i$ is in $L_{odd}$. 
Since $A\cap L_{odd}=\emptyset$, it immediately follows that $w_i\not\in A$, a contradiction. 

{\bf Case 2:} Consider the case where $2k< \log^{(2)}{n}<2k+1$. This means that $2^{2^{2k}}<n\leq 2^{2^{2k+1}}-1$. When we choose $i = \ceilings{n(n-1)/\ell}+1$, the length $|w_i|$ can be lower-bounded by
\[
|w_i| = n+(i-1)\ell \geq n + \frac{n(n-1)}{\ell}\cdot \ell = n + n(n-1) = n^2 > 2^{2^{2k+1}}. 
\] 
In contrast, since $n\geq m+1 > m/2$, we can upper-bound $|w_i|$ as
\[
|w_i| < n + \left( \frac{n(n-1)}{\ell} + 1 \right)\cdot \ell = n^2 + \ell \leq n^2 + m < (n+1)^2 \leq 2^{2^{2k+2}}.
\]
These two bounds together imply that $2k+1<\log^{(2)}{|w_i|}<2k+2$, concluding that $w_i\in L_{odd}$, a contradiction
against the fact that $w_i\in A\subseteq L_{even}$.  

{}From the above two cases, we can conclude that $A$ does not exist; in other words, $L_{even}$ is $\reg$-immune, as requested.
Similarly, we can show that $L_{odd}$ is $\reg$-immune. 

We still need to argue that $L_{even}$ and $L_{odd}$ are both in $\mathrm{L}\cap\reg/n$. 
Since $\mathrm{L}\cap\reg/n$ is closed under complementation, 
it suffices to show that 
$L_{even}$ belongs to $\mathrm{L}\cap \reg/n$. 
First, we shall demonstrate that $L_{even}\in\reg/n$. Let us consider the following advice function $h(n)=10^{n-1}$ if $L_{even}\cap\Sigma^n\neq \emptyset$, and $h(n)=0^{n}$ if $L_{odd}\cap\Sigma^n\neq\emptyset$ for any length $n\geq1$; in addition, set $h(0)=\lambda$. Define a set $A$ as $A=\left\{\track{x}{1y}\mid |x|=|y|+1,y\in\{0,1\}^*\right\}$. 
It is obvious that, for every $x$, $x\in L_{even}$ iff $\track{x}{h(|x|)}\in A$.  
Since $A$ is regular, $L_{even}$ therefore belongs to $\reg/n$.
To show that $L_{even}\in\mathrm{L}$, let us consider the following algorithm for $L_{even}$. 
\begin{quote}
On input $x$, if $x=\lambda$ then accept it. Assume that $|x|\geq 1$. 
With access to $w$ written on a read-only input tape, compute $\ceilings{\log^{(2)}{|w|}}$ on its work tape.  If $\ceilings{\log^{(2)}{|w|}}$ is odd, then accept the input; otherwise, reject it. 
\end{quote}
It is not difficult to show that this algorithm recognizes $L_{even}$ using only logarithmic space. This completes our proof of the proposition. 
\end{proofof}

%%%%%%%%%%%%%%%%%%%%%%%%%%%%%%%%%%%%
\section{P-Denseness and Primeimmunity}\label{sec:p-dense-immune}

We begin with a brief discussion on a density issue of $\reg$-immune languages. 
Recall that non-immunity of a language guarantees the existence of a certain infinite subset that is ``computationally easy.'' 
In many cases, these infinite subsets are of {\em low density}. 
In typical examples, there are infinite {\em sparse} subsets $\{(01)^n\mid n\in\nat\}$ and $\{(012)^n\mid n\in\nat\}$ inside $Equal$ and $3Equal$, respectively. 
Notice that all context-free $\reg$-immune languages $L$ described in Section \ref{sec:immunity-notion} satisfy the following density property: its density rate $dense(L)(n)/|\Sigma^n|$ is ``exponentially small'' in terms of a length parameter $n$. The language $Pal_{\#}$, for example, satisfies that $dense(Pal_{\#})(n)/|\Sigma^n|\leq 2^{\floors{n/2}}/3^n$ (thus $dense(Pal_{\#})(n)\leq|\Sigma^n|/(2.2)^n$) for every odd length $n\geq1$.  Naturally, we can question whether there exists a context-free $\reg$-immune language whose density rate is ``polynomially large.'' To be more precise, we call a language $L$ over an alphabet $\Sigma$ {\em polynomially dense} (or {\em p-dense}, in short) exactly when there exist a number $n_0\in\nat$ and a non-zero polynomial $p$ such that 
$dense(A)(n)\geq |\Sigma^n|/p(n)$ for all numbers $n\geq n_0$. Our previous question is now rephrased as: is there any p-dense $\reg$-immune language in $\cfl$, or is $\cfl$ p-dense $\reg$-immune?
It appears that we are unable to settle this question at present. This situation seems to signify the meaningfulness of the notion 
of p-denseness in our study of immunity. Meanwhile, we shall show that $\mathrm{L}\cap\cfl/n$ is indeed p-dense $\reg$-immune. 

\begin{proposition}\label{dense-immune}
The language family $\mathrm{L}\cap\cfl/n$ is p-dense $\reg$-immune.
\end{proposition}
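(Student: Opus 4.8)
The plan is to exhibit a single language that (a) lives in $\mathrm{L}\cap\cfl/n$, (b) is $\reg$-immune, and (c) is p-dense. The natural candidate is a "padded" or "blocked" version of $Pal_{\#}$ that inflates density while preserving the structural obstruction that forces $\reg$-immunity. Concretely, I would take something like $W = \{\, u\#v^{R}\#y \mid u,v\in\{0,1\}^*,\ |u|=|v|,\ \text{$u=v$},\ y\in\{0,1\}^* \,\}$ over the alphabet $\{0,1,\#\}$ — that is, a copy of $Pal_{\#}$ followed by a second separator and an arbitrary free string $y$. The role of $y$ is purely to boost the number of strings of each length: for every length $n$ large enough, a constant fraction of the bits can be chosen freely, so $dense(W)(n)\geq |\Sigma^n|/p(n)$ for a suitable polynomial $p$ (in fact $dense(W)(n)$ is comparable to $|\Sigma^n|$ up to a fixed exponential-in-the-matched-part factor, which for the right choice of block sizes is only polynomially small). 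If a crude "arbitrary tail" pushes density too high to be compatible with $\reg$-immunity, I would instead fix the matched block $u\#v^R$ to have length $\Theta(\log n)$ and let $y$ absorb the rest; then $dense(W)(n)$ is still $\geq |\Sigma^n|/\mathrm{poly}(n)$ while the matched portion is long enough to defeat any fixed regular subset.

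The membership $W\in\mathrm{L}$ is routine: a logspace machine scans for the two $\#$ separators, checks that the first block equals the reverse-complement-of-reverse (i.e. equals the second block after the required transformation) by comparing symbols with two position counters, and accepts regardless of the tail. Checking the palindrome/duplication condition symbol-by-symbol needs only $O(\log n)$ space for the counters, so $W\in\mathrm{L}$. For $W\in\cfl/n$: I would mimic the advice construction already used in the proofs of Propositions~\ref{L-sharp-eq-immune} and~\ref{dense-immune}'s siblings — either show directly that $W\in\cfl/n$ by giving a length-preserving advice string that encodes the positions of the separators and the block lengths (so that an npda need only guess the fold and verify on a single stack), or, if the structure permits, show $W\in\dcfl/n$ or even $W\in\reg/n$ using a track-encoding of the "expected" matched block, in the spirit of the advice $h$ in Proposition~\ref{CFL-simple}(2). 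The key point is that once the advice pins down where the $\#$'s are and how long the matched block is, the remaining verification is a bounded task that a pushdown automaton (or even a finite automaton reading the track) can perform.

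The heart of the argument is $\reg$-immunity, and this is where I expect the main obstacle. Suppose $A\subseteq W$ is an infinite regular language, with pumping-lemma constant $m$ (Lemma~\ref{pumping-swapping-lemma}(1)). Pick a string $w=u\#u\#y\in A$ with $|w|\geq m$ — but note the pumping lemma only guarantees a decomposition $w=xyz$ with $|xy|\le m$, so the pumpable block sits in the \emph{prefix} of length $m$. If $m$ is small relative to the length of the matched part $u\#u$, then the pumped substring lies entirely inside the first copy of $u$ (or straddles the first $\#$), and pumping up destroys the equality of the two matched blocks, contradicting $A\subseteq W$. The delicate case is exactly when the free tail $y$ is allowed to be so large that the matched part $u\#u$ is shorter than $m$; then pumping inside $u$ \emph{could} be compensated by a coincidental change elsewhere — except it cannot, because the second copy of $u$ and the tail $y$ both lie to the right of the pumped region and are untouched. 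So the contradiction still goes through \emph{provided} we have chosen the block-size schedule so that arbitrarily long matched blocks appear among strings in $A$ of unbounded length; this is where I would need $A$'s infinitude to force long matched blocks, which is true only if the matched-block length is a fixed (e.g. logarithmic or fractional) function of total length rather than a bounded constant. Thus the real work is balancing two competing demands: the matched block must be long enough (as a function of $n$) that every sufficiently long string in $W$ carries a "fresh" equality constraint no regular pumping can satisfy, yet short enough that the free tail still delivers p-density. I expect the clean resolution is to let the matched block have length exactly $\lfloor \log n\rfloor$ (rounded so the arithmetic works) and verify that this simultaneously yields $dense(W)(n)\geq |\Sigma^n|/n$ and kills every infinite regular subset via the pumping lemma as above; writing out that this schedule keeps $W$ in $\mathrm{L}\cap\cfl/n$ (the advice just records the single number $\lfloor\log n\rfloor$, which is length-preservingly encodable) is then bookkeeping.
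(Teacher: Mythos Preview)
Your overall strategy---constrain only a logarithmic-size portion of each string and leave the rest free---is exactly the right shape, and it is what the paper does. But your concrete instantiation has a genuine gap in the p-density claim that cannot be patched without changing the construction.

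The problem is the alphabet. You work over $\Sigma=\{0,1,\#\}$ but insist $u,y\in\{0,1\}^*$. Then every string in $W$ of length $n$ lives inside $\{0,1\}^{k}\#\{0,1\}^{k}\#\{0,1\}^{n-2k-2}$, so
\[
dense(W)(n)\;\le\;2^{k}\cdot 2^{\,n-2k-2}\;=\;2^{\,n-k-2},
\qquad\text{while}\qquad |\Sigma^n|=3^n.
\]
Hence $dense(W)(n)/|\Sigma^n|\le (2/3)^{n}$, which is exponentially small \emph{regardless} of how you choose $k$. The loss does not come only from the matched block, as you assert; it comes from every ``free'' position too, since each contributes a factor $2/3$. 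Allowing $y\in\{0,1,\#\}^*$ would repair the density arithmetic, but then parsing becomes ambiguous (extra $\#$'s in $y$) and the $\reg$-immunity argument needs real additional work.

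The paper sidesteps all of this by staying over the binary alphabet and \emph{not} using a matching/palindrome constraint at all. Its witness is
\[
LCenter=\{\,au\,0^{m}10^{m}\,v \mid a\in\{\lambda,0,1\},\ 2^{m}\le |u|=|v|<2^{m+1}\,\},
\]
i.e.\ the only requirement is that the exact centre symbol be $1$, flanked by $m\approx\log n$ zeros on each side, with $u,v$ completely free. This gives $dense(LCenter)(n)=2^{\,n-2m-1}\ge 2^n/n^2$, so p-density is immediate. For $\reg$-immunity, pump down in the prefix $u$ (which has length $\ge 2^{m}>m$, well beyond any pumping constant); this shifts the centre into the left $0^{m}$ block, so the centre symbol becomes $0$ and the pumped string falls out of $LCenter$. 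No matching condition, no separator symbol, no parsing ambiguity. Membership in $\mathrm{L}$ (compute $m$ from $n$, check the middle $2m{+}1$ symbols) and in $\cfl/n$ (advice encodes $m$) are then routine.
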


Let us consider the language $LCenter =\{au0^m10^mv\mid a\in\{\lambda,0,1\}, 2^m\leq |u|=|v|<2^{m+1}\}$ over the alphabet $\{0,1\}$. Notice that $LCenter$ is in $\mathrm{L}\cap \cfl/n$. 
We claim in the following proof that this language is $\reg$-immune and also p-dense. 

\begin{proofof}{Proposition \ref{dense-immune}}
We want to show that $LCenter$ is p-dense $\reg$-immune. 
We first show that $LCenter$ is p-dense. Let $w=au0^m10^mv$ in $LCenter$
with $2^{m}\leq |u|=|v| < 2^{m+1}$. Let $n=|w|$. 
Consider the case where $a=\lambda$. 
In this case, since $2^{m}\leq |u| = (n-2m-1)/2<2^{m+1}$, 
we obtain $2^{m+1}+2m+1 \leq n$, which implies $n^2\geq 2^{2m+1}$. 
Since $dense(LCenter)(n) = 2^{n-2m-1}$, the density rate 
$\frac{dense(LCenter)(n)}{|\Sigma^n|}$ equals  
$\frac{1}{2^{2m+1}}$, which is clearly at least $1/n^2$.  
The other cases where $a\in\{0,1\}$ are similar. 
Therefore, $LCenter$ is p-dense.

Next, we show that $LCenter$ is $\reg$-immune. Assuming otherwise, 
we choose
an infinite subset $A$ of $LCenter$ in $\reg$. As in the proof of Proposition \ref{reg-bi-immune}, we use the pumping lemma for regular languages. 
Take a pumping-lemma constant $m>0$. Let $w = au0^k10^kv$ be any 
string in $A$ with $k>m$ and $2^{k}\leq |u|=|v| <2^{k+1}$. 
Now, assume that $a=\lambda$. The other cases are similar.
Let us take any decomposition $w = xyz$ with $|xy|\leq m$ and $|y|\geq1$ 
such that $xy^iz$ is in $A$ for any number $i\in\nat$.  
Since $|xy|\leq m<k$, $y$ is a substring of $u$. Consider the string 
$xz$.  
Clearly, the center symbol of $xz$ should be $0$. Thus, $xz$ cannot belong to $LCenter$. This is a contradiction against the fact that $xz\in A$. Therefore, $LCenter$ must be $\reg$-immune.
\end{proofof} 

Apart from the $\reg$-immunity, we turn our attention to 
p-dense languages 
that lack only p-dense regular subsets. 
Such languages are referred to as $\reg$-primeimmune. 
More generally, for a language family $\CC$, we say that a language $L$ over $\Sigma$ is {\em $\CC$-primeimmune} if (1) $L$ is p-dense and (2) $L$ has no p-dense subset in $\CC$. A language family $\DD$ is {\em $\CC$-primeimmune} if there exists a $\CC$-primeimmune language in $\DD$. This definition immediately yields, similar to the $\CC$-immunity,  the self-exclusion property: $\CC$ cannot be $\CC$-primeimmune. 

The following obvious relationship holds between p-dense $\reg$-immunity and $\reg$-primeimmunity. If a language $L$ is p-dense but not $\reg$-primeimmune, then $L$ contains a p-dense regular subset, say,  $A$. 
By the definition of p-denseness, $A$ should be infinite and thus $L$ must not be $\reg$-immune. The next lemma therefore follows.

\begin{lemma}\label{dense-immune-prime}
Let $L$ be any language over an alphabet $\Sigma$ 
with $|\Sigma|\geq2$. If $L$ is p-dense $\reg$-immune, 
then $L$ is $\reg$-primeimmune.
\end{lemma}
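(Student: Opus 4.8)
The statement is a near-immediate unwinding of definitions, so the plan is to argue the contrapositive exactly as the paragraph preceding the lemma suggests. Suppose $L$ is p-dense but \emph{not} $\reg$-primeimmune. By the definition of $\reg$-primeimmunity, the failure of condition (2) means that $L$ has a subset $A\in\reg$ that is itself p-dense; that is, there are a number $n_0\in\nat$ and a non-zero polynomial $p$ with $dense(A)(n)\geq |\Sigma^n|/p(n)$ for all $n\geq n_0$.

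The key observation is that this regular subset $A$ must be \emph{infinite}. Indeed, since $|\Sigma|\geq 2$ we have $|\Sigma^n|\geq 2^n$, so $dense(A)(n)\geq 2^n/p(n)\to\infty$ as $n\to\infty$; in particular $dense(A)(n)\geq 1$ for infinitely many $n$, so $A$ contains strings of arbitrarily large length and hence is infinite. Thus $A$ is an infinite regular subset of $L$, which directly contradicts the hypothesis that $L$ is $\reg$-immune (condition (ii) in the definition of $\CC$-immunity forbids any infinite subset in $\CC$).

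Therefore, if $L$ is p-dense $\reg$-immune, no such $A$ can exist, so $L$ satisfies both clauses of $\reg$-primeimmunity (clause (1), p-denseness, is given by hypothesis), and the lemma follows. The only mild subtlety — and the single place where the assumption $|\Sigma|\geq2$ is actually used — is the step showing p-denseness forces infinitude; over a unary alphabet $|\Sigma^n|=1$, so a p-dense language need not be infinite and the implication can fail, which is why the hypothesis is stated. There is no real obstacle here: the argument is two short implications chained through the elementary growth estimate $|\Sigma^n|\geq 2^n$.
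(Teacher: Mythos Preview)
Your proof is correct and follows exactly the contrapositive argument the paper gives in the paragraph preceding the lemma: a p-dense regular subset is infinite, so its existence violates $\reg$-immunity. One small correction to your closing remark: even over a unary alphabet a p-dense language must be infinite (since $dense(A)(n)\geq 1/p(n)>0$ forces the integer $dense(A)(n)$ to be at least $1$ for all large $n$), so the hypothesis $|\Sigma|\geq 2$ is not actually needed for this particular implication---but this does not affect the validity of your main argument.
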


Although $\cfl$ is not known to be p-dense $\reg$-immune, it is 
possible for us to show that $\cfl$ is $\reg$-primeimmune. First, recall the context-free language $Equal$ over the binary alphabet $\{0,1\}$.  Since $Equal$ is technically not p-dense, we need to extend
it slightly and define its ``extended'' language $Equal_{*}$ as 
$\{aw\mid a\in\{\lambda,0,1\}, w\in Equal\}$.
Despite $Equal_{*}$'s non-$\reg$-immunity,  
we can prove that $Equal_{*}$ is $\reg$-primeimmune. In the next proposition,  we shall challenge a slightly stronger statement: $Equal_{*}$ is $\reg/n$-primeimmune. This highlights a stark difference between the 
$\reg/n$-primeimmunity and the $\reg/n$-immunity, since there exists no $\reg/n$-immune language (because every infinite language $L$ over an alphabet $\Sigma$ has an infinite subset of the form $\{\sigma x\in L\mid \sigma\in\Sigma, x\in\Sigma^*, h(|\sigma x|)=\tilde{\sigma} x\}$ in $\reg/n$, where $\tilde{\sigma} = \track{\sigma}{1}$ and $h$ is an advice function defined as $h(n)=\tilde{\sigma}x$ if $\sigma x$ is the lexicographically minimal string in $L\cap\Sigma^n$ and $h(n)=0^{n}$ otherwise). 

\begin{proposition}\label{equal-dense-immune}
The language $Equal_{*}$ is $\reg/n$-primeimmune. 
\end{proposition}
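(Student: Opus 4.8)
The plan is to show that $Equal_{*}$ is p-dense and then, by contradiction, that it admits no p-dense subset in $\reg/n$. The p-denseness of $Equal_{*}$ is a routine counting exercise: for an even length $n$, $|Equal\cap\Sigma^n|=\binom{n}{n/2}\geq 2^n/(n+1)$, and prepending one of the two optional symbols $0,1$ carries this density (with a shift of parameters) to odd lengths as well; hence $dense(Equal_{*})(n)\geq|\Sigma^n|/p(n)$ for a suitable polynomial $p$ and all large $n$. The heart of the argument is the primeimmunity. Suppose toward a contradiction that $A\subseteq Equal_{*}$ is p-dense and $A\in\reg/n$, say witnessed by an advice function $h$ with $|h(m)|=m$ and a regular language $R$ with $x\in A$ iff $\track{x}{h(|x|)}\in R$.

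The key step is to exploit the swapping lemma for regular languages (Lemma \ref{pumping-swapping-lemma}(3)) applied to $R$ rather than to $A$ directly, since $A$ itself is merely in $\reg/n$, not in $\reg$. Fix a large even length $n$ (restricting attention to the length class where $Equal_{*}\cap\Sigma^n$ consists of genuine $Equal$-words; the prefixed cases are handled symmetrically). Consider the set $S=\{\track{x}{h(n)}\mid x\in A\cap\Sigma^n\}\subseteq R\cap(\Sigma\times\Gamma)^n$; since $A$ is p-dense, $|S|=dense(A)(n)$ grows polynomially-densely and in particular exceeds the swapping-lemma constant $m$ of $R$ for all large $n$. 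Apply the swapping lemma at the cut position $i=n/2$: it yields two distinct strings $\track{x}{h(n)},\track{y}{h(n)}\in S$ (note the second tracks agree, since both use the same advice $h(n)$) whose ``swapped'' strings $\track{x_1y_2}{h(n)}$ and $\track{y_1x_2}{h(n)}$ both lie in $R$, where $x=x_1x_2$, $y=y_1y_2$ with $|x_1|=|y_1|=n/2$. Consequently $x_1y_2$ and $y_1x_2$ both belong to $A$, hence to $Equal_{*}$, hence (being of even length $n$) to $Equal$.

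The contradiction comes from a $\#_0/\#_1$ balance count. Writing $d(w)=\#_0(w)-\#_1(w)$, membership in $Equal$ on length-$n$ words means $d(\cdot)=0$. From $x,y\in Equal$ we get $d(x_1)=-d(x_2)$ and $d(y_1)=-d(y_2)$; from $x_1y_2,y_1x_2\in Equal$ we get $d(x_1)=-d(y_2)$ and $d(y_1)=-d(x_2)$. These force $d(x_1)=d(y_1)$ and $d(x_2)=d(y_2)$, so $x_1$ and $y_1$ have the same number of $0$'s (and likewise $x_2,y_2$) — but this alone is not yet absurd, so the argument must be sharpened: rather than merely invoking the swapping lemma once, I would choose the length $n$ and then note that among the polynomially-many words of $S$, by pigeonhole many share the \emph{same} prefix-deficit $d(x_1)=:\delta$, and then run the swapping lemma inside that sub-collection; the swapped words still lie in $R$, hence in $Equal$, but one can now arrange (by choosing $x,y$ with $x_1\neq y_1$ but $x_2$ a cyclic-type modification, or more simply by iterating the swap to build an infinite regular subset of $Equal$ contradicting the known non-trivial structure) a word that is forced out of $Equal$.

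I expect the main obstacle to be precisely this last point: a single application of the swapping lemma does \emph{not} immediately break balance, because swapping two balanced-decomposition words at the midpoint can again produce balanced words. The cleanest fix is to apply the swapping lemma at a cut $i$ where the two chosen words are \emph{guaranteed} to have differing deficits $d(pref_i(x))\neq d(pref_i(y))$ — such $i$ exists whenever $x\neq y$ as strings over $\{0,1\}$, because the deficit functions $i\mapsto d(pref_i(x))$ and $i\mapsto d(pref_i(y))$ are distinct lattice paths from $0$ to $0$ and hence differ at some intermediate $i$ — and then the swapped string $y_1x_2$ has deficit $d(y_1)+d(x_2)=d(pref_i(y))-d(pref_i(x))\neq 0$, so $y_1x_2\notin Equal$, contradicting $y_1x_2\in A\subseteq Equal_{*}$ (recall length $n$ is even so membership in $Equal_{*}$ means membership in $Equal$). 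Thus the real work is: (a) the counting for p-denseness, (b) locating the separating cut position $i$ from $x\neq y$, and (c) verifying the advice-track bookkeeping so that the swapping lemma for the \emph{regular} language $R$ transfers back to $A$. The remaining cases (words of $Equal_{*}$ with a genuine prefixed symbol, i.e.\ odd lengths) are handled by the identical argument after stripping the first symbol.
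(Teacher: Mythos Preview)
Your advice-bookkeeping (working inside the regular language $R$ and noting that both tracks carry the same advice string $h(n)$) is correct, and the p-denseness computation is fine. The core argument, however, has a genuine gap concerning the quantifier structure of the swapping lemma.

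The lemma reads $\forall i\ \exists x,y\in S$: you fix the cut $i$ first, and the lemma hands you \emph{some} pair $x\neq y$ whose swaps land in $L$. Your ``cleanest fix'' inverts this order: you want to pick $x,y$ first, locate an $i$ where $d(pref_i(x))\neq d(pref_i(y))$, and then swap there. But the lemma applied at that $i$ may return a different pair $x',y'$, and nothing prevents $d(pref_i(x'))=d(pref_i(y'))$; in that case the swap stays in $Equal$ and no contradiction arises. Your earlier pigeonhole attempt (restrict $S$ to strings sharing a \emph{common} prefix-deficit $\delta$) fails for the opposite reason: a common deficit at $i$ \emph{guarantees} that swapping at $i$ preserves balance.

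What you actually need is an index $i$ at which at least $m$ \emph{distinct} prefix-deficit values $\#_0(pref_i(\cdot))$ occur among the strings of $A\cap\Sigma^n$. Taking one representative per deficit value yields a set $S$ with $|S|\ge m$ in which any two distinct elements have different deficits at $i$; now whatever pair the swapping lemma returns, the swapped string must leave $Equal$. The existence of such an $i$ is the real content of the proof, and it is not free: if at \emph{every} $i$ fewer than $m$ deficit values occur, then the strings of $A\cap\Sigma^n$ are lattice paths confined to a width-$(m{-}1)$ ``tube'' whose position may drift with $i$, and one must prove that the number of such paths is at most $c^n$ for some constant $c<2$, contradicting the p-denseness of $A$. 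Establishing this upper bound --- including the reduction showing that a centered tube maximizes the count, and then bounding the centered tube via a recurrence whose growth rate is strictly below $2$ --- is where all the technical work lies, and your sketch does not address it.
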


\begin{proof}
We start our proof with an easy claim on the p-denseness of $Equal_{*}$. 
For any sufficiently large even number $n$, by Stirling's approximation formula, the density of $Equal_{*}$ can be estimated as
\begin{equation}\label{Stirling}
dense(Equal_{*})(n) = \comb{n}{n/2} = \frac{2^n\sqrt{2}}{\sqrt{\pi n}}\left(1+ \Theta\left(\frac{1}{n}\right)\right) > \frac{2^n}{n}.
\end{equation}
When $n$ is odd, on the contrary, since  
$dense(Equal_{*})(n)$ equals $2\cdot dense(Equal_{*})(n-1)$, 
it is upper-bounded by $\frac{2\cdot2^{n-1}}{n-1} > 2^n/n$ with 
a help of Eq.(\ref{Stirling}).
These two lower bounds yield the desired p-denseness of $Equal_{*}$.

Our next goal is to prove the non-existence of p-dense subset of $Equal_{*}$ in $\reg/n$. Assume otherwise; namely, there is a p-dense set $A\subseteq Equal_{*}$ in $\reg/n$. Since $A$ is p-dense, a certain constant $d\geq1$ satisfies $dense(A)(n)\geq 2^n/n^d$ for all but finitely many numbers $n$.  Here, we shall apply a swapping lemma for regular languages.\footnote{[Swapping Lemma for Regular Languages] \hs{1}
Let $L$ be any infinite regular language on an alphabet 
$\Sigma$ with $|\Sigma|\geq2$. 
There exists a positive integer $m$ such that, for any integer $n\geq 1$ and any subset $S$ of $L\cap\Sigma^n$ of cardinality at least $m$, the following condition holds: for any integer $i\in[0,n]_{\integer}$, there exist two strings $x=x_1x_2$ and $y=y_1y_2$ in $S$ with $|x_1|=|y_1|=i$ and $|x_2|=|y_2|$ satisfying that (i) $x\neq y$, (ii) $y_1x_2\in L$, and (iii)  $x_1y_2\in L$. See \cite{Yam08}.} 
Let $m$ be a swapping-lemma constant for $A$ and choose a sufficiently large number $n$ in $\nat$. 
It suffices to consider only the case where  
$m$ is odd. Without loss of generality, we further assume that $m\geq 5$. 
For each pair $i,k\in[0,n]_{\integer}$, 
the notation $A_{k,i}$ denotes the set $\{x \in A\cap\Sigma^n \mid \#_0(pref_{k}(x)) = i\}$ so that $A\cap\Sigma^n$ can be expressed as $A\cap\Sigma^n = \bigcap_{k=0}^{n}\left(\bigcup_{i=0}^{n}A_{k,i}\right)$.
Now, we state a key property of $\{A_{k,i}\}_{k,i}$, from which the desired proposition immediately follows. 

\begin{claim}\label{m-indices}
There are an index 
$k\in[m-1,n]_{\integer}$ and at least $m$ distinct indices $(i_1,i_2,\ldots,i_m)$ such that $A_{k,i_j}\neq \emptyset$ for every index $j\in[1,m]_{\integer}$.  
\end{claim}

Assuming that Claim \ref{m-indices} is true, let us choose an index $k\in[m-1,n]_{\integer}$ and 
$m$ distinct indices $(i_1,\ldots,i_m)$ that satisfy the claim. We then choose one string $w_j$ from each set $A_{k,i_j}$ and define $W=\{w_1,w_2,\ldots,w_m\}$. Since $|W|\geq m$, by the swapping lemma, there are two distinct strings $x_1x_2$ and $y_1y_2$ in $W$ with $|x_1|=|y_1|$ and $|x_2|=|y_2|$ such that the swapped strings $x_1y_2$ and $y_1x_2$ belong to $A$. This leads to 
a contradiction because the choice of $W$ makes $x_1y_2$ satisfy $\#_0(x_1y_2)\neq \#_1(x_1y_2)$. This contradiction leads us to conclude that $A$ does not exist, and therefore we finish the proof of Proposition \ref{equal-dense-immune}.

Now, our remaining task is to prove Claim \ref{m-indices}. 
Assume that this claim is false; that is, 
(*) for each index $k\in[m-1,n]_{\integer}$, there are at most $m$ indices, say, $(i_1,\ldots,i_{m'})$, where $m'\leq m$, satisfying $A_{k,i_j}\neq \emptyset$ for all indices  $j\in[1,m']_{\integer}$. For convenience, we write $I_{k}^{*}$ for the set $\{i_1,\ldots,i_{m'}\}$ of such indices.  
In the rest of the argument, we abbreviate  $\ceilings{m/2}$ as $m_0$ for brevity. Note that $2m_0=m+1$. 
Since $m$ is fixed, we often omit ``$m_0$'' and ``$m$.'' 

Toward a contradiction, we intend to estimate the value $|A\cap\Sigma^n|$. Since $A$ is p-dense, we can obtain a lower bound $|A\cap\Sigma^n|\geq 2^n/n^{d}$ for all but finitely many numbers $n$. In contrast, the following statement gives an upper bound of $|A\cap\Sigma^n|$. 

\begin{claim}\label{bound-A}
There exists a constant $c$, depending only on $m$, with $1<c<2$ satisfying 
that $|A\cap\Sigma^n| <c^n$ for all sufficiently large numbers $n$. 
\end{claim}

Together with the p-denseness of $A$, Claim \ref{bound-A} yields a relation  $2^n/n^{d}\leq |A\cap\Sigma^n|<c^n$, 
from which we immediately obtain $c>2n^{-d/n}$. Since 
$\lim_{n\rightarrow\infty}n^{-d/n}=1$, we reach a conclusion $c\geq 2$, which clearly contradicts the choice of $c$ in Claim \ref{bound-A}. 
Therefore, Claim \ref{m-indices} holds. 

To complete the proof of our proposition, we need to prove Claims 
\ref{bound-A}. For this purpose, let us consider all possible sets $A$ that satisfy Condition (*) stated above and 
let $\AAA$ denote the collection of all such sets.    
Now, we want to discuss what kind of $A\in\AAA$ 
gives $|A\cap\Sigma^n|$ the largest value.  
Here is an explicit candidate for such $A$'s. 
Let $k\geq m-1$. We first define the integer interval $I_k=[\ceilings{(k+1)/2}-(m_0-1),\ceilings{(k+1)/2}+(m_0-1)]_{\integer}$ (whose center point is $\ceilings{(k+1)/2}$) of size $m$; 
in particular, $I_{m-1}=[1,m]_{\integer}$. 
Next, we introduce $S_k$ as the set of all strings $w\in\Sigma^k$ 
such that, for each index $j\in[m-1,k]_{\integer}$,  
$\#_{0}(pref_j(w))$ belongs to $I_j$. 
The set $S=_{def}\bigcup_{k\in\nat} S_{k}$ clearly falls into $\AAA$.

In what follows, we shall claim that  (1) $|S_n|$ is at most $c^n$ for a certain constant $c$ with $1<c<2$ and 
(2)  for every set $A\in\AAA$,  $|S_{n}|$ upper-bounds $|A\cap\Sigma^n|$.   
These form the core of our proof.   
We begin with the first claim by making a direct estimation of the target value $|S_n|$.

\begin{claim}\label{upper-bound-c}
There exists a constant $c$, depending only on $m$, with $1<c<2$ such that $|S_{n}| <c^n$ for all sufficiently large numbers $n\in\nat$. 
\end{claim}

\begin{proof} 
\sloppy Recall that $m$ is an odd number at least $5$. 
To estimate each value $|S_e|$, where $m-1\leq e\leq n$, 
we first partition $S_e$ into $S_{e,1},S_{e,2},\ldots,S_{e,m}$, where $S_{e,i} = \{w\in S_e\mid \#_{0}(w) \text{ is the $i$th element in $I_e$ }\}$ for any index $i\in[1,m]_{\integer}$. Note that ``$w\in S_{e,i}$'' yields the equation $\#_{0}(pref_{e}(w))=\ceilings{(e+1)/2}-m_0+i$. For convenience, we write $a_{e,i}$ to denote the cardinality $|S_{e,i}|$. 
A simple observation provides the following relations among $S_{e,i}$'s: if $e$ is odd, then $S_{e,i} =\{w0\mid w\in S_{e-1,i}\}\cup \{w1\mid w\in S_{e-1,i+1}\}$; otherwise, $S_{e,i} =\{w0\mid w\in S_{e-1,i-1}\} \cup \{w1\mid w\in S_{e-1,i}\}$, where we assume that $S_{e-1,m+1}=S_{e-1,0}=\emptyset$. 
In the rest of this proof, we are focused only on odd values of $e$.

The aforementioned relations among $S_{e,i}$'s imply that, for any index $k\in[1,(m-3)/2]_{\integer}$,   
\begin{equation}\label{recurrence-a} 
a_{2k+3,1} = 2a_{2k+1,1} + a_{2k+1,2}, \;\;
a_{2k+3,m} = a_{2k+1,m-1} + a_{2k+1,m}, \;\; \text{and} \vs{-2}
\end{equation}
\[
a_{2k+3,i} = a_{2k+1,i-1} + 2a_{2k+1,i} + a_{2k+1,i+1}. \hs{45}
\]

\n Notice that $a_{2k+3,m}$ is the smallest and $a_{2k+3,1}$ is the second smallest among $a_{2k+3,i}$'s. Since $|S_{2k+3}| = \sum_{1\leq i\leq m} |S_{2k+3,i}|$, from Eq.(\ref{recurrence-a}), it follows that   
\[
|S_{2k+3}|  
= 3a_{2k+1,1} + 2a_{2k+1,m} + 4\sum_{2\leq i\leq m-1} a_{2k+1,i} \leq 3|S_{2k+1}| + \sum_{2\leq i\leq m-1} a_{2k+1,i}. 
\]
To calculate $|S_{2k+3}|$, we thus need to estimate the sum $\sum_{2\leq i\leq m-1}a_{2k+1,i}$ in terms of $|S_{2k+1}|$. Our starting point is the following simple upper bound of $\sum_{2\leq i\leq m-1}a_{2k+1,i}$ by a certain constant multiple of $a_{2k+3,1}+a_{2k+3,m}$. 

\begin{claim}\label{delta-inequality}
It holds that  
$\sum_{i=m_{0}-j+1}^{m_{0}+j-1} a_{2k+3,i} \leq 
\delta_{j}(a_{2k+3,m_{0}-j} + a_{2k+3,m_{0}+j})$  
for each index $j\in[1,m_{0}-1]_{\integer}$, where $\delta_j = 2^{2j-1}-1$. 
In particular, $\sum_{i=2}^{m-1}a_{2k+3,i} \leq \delta_{m_{0}-1}(a_{2k+3,1} + a_{2k+3,m})$. 
\end{claim}

\begin{proof}
For notational succinctness, we write $b_{e,j}$ for $a_{e,m_0-j}+a_{e,m_0+j}$. Now, we want to show by induction on $j$ 
that $\sum_{i=m_{0}-j+1}^{m_{0}+j-1} a_{2k+3,i} \leq 
\delta_{j}b_{2k+3,j}$. Consider the basis case $j=1$. 
By Eq.(\ref{recurrence-a}), it follows that
\begin{eqnarray*}
b_{2k+3,1} &=& a_{2k+1,m_0-2} + 2(a_{2k+1,m_0-1} + a_{2k+1,m_0} + a_{2k+1,m_0+1})  + a_{2k+1,m_0+2} \\
&\geq&  a_{2k+1,m_0-1} + 2a_{2k+1,m_0} + a_{2k+1,m_0+1}  
\;\;=\;\; a_{2k+3,m_0}. 
\end{eqnarray*}
This inequality yields the desired relation $a_{2k+3,m_0}\geq \delta_1 b_{2k+3,1}$ since $\delta_1=1$. 

Let us consider the induction step $j$ with $2\leq j\leq m_0-1$.
We first discuss the case where $j\neq m_0-1$. 
Note that the sum $\sum_{i=m_{0}-j+1}^{m_{0}+j-1}a_{2k+3,i}$ equals   
\begin{eqnarray*}
a_{2k+1,m_0-j} + 3a_{2k+1,m_0-j+1} + 4\sum_{i=m_{0}-j+2}^{m_{0}+j-2} a_{2k+1,i} + 3a_{2k+1,m_0+j-1} + a_{2k+1,m_0+j}.
\end{eqnarray*}
The induction hypothesis on $j-1$ yields $\sum_{i=m_{0}-j+2}^{m_{0}+j-2} a_{2k+1,i} \leq \delta_j b_{2k+1,j-1}$. With a help of this inequality, the sum $\sum_{i=m_{0}-j+1}^{m_{0}+j-1}a_{2k+3,i}$ is bounded from above by  
\begin{eqnarray*}
\sum_{i=m_{0}-j+1}^{m_{0}+j-1} a_{2k+3,i}  
&\leq& a_{2k+1,m_0-j} + (4\delta_{j-1}+3) (a_{2k+1,m_0-j+1} + a_{2k+1,m_0+j-1}) 
+ a_{2k+1,m_0+j}.
\end{eqnarray*}
Moreover, Eq.(\ref{recurrence-a}) gives a lower bound of  $b_{2k+3,j}$ as follows:
\begin{eqnarray*}
b_{2k+3,j} &\geq& 2a_{2k+1,m_0-j} + a_{2k+1,m_0-j+1} + a_{2k+1,m_0+j-1} + 2a_{2k+1,m_0+j}. 
\end{eqnarray*}
We therefore obtain the bound $\sum_{i=m_{0}-j+1}^{m_{0}+j-1} a_{2k+3,i} \leq (4\delta_{j-1}+3) b_{2k+3,j}$. Since $\delta_{j}$ satisfies that $\delta_{j} = 4\delta_{j-1}+3$, the desired relation immediately follows. The case where $j=m_0-1$ is treated similarly with a minor modification. 
By applying the induction, we obtain the claim.  
\end{proof}

By Claim \ref{delta-inequality}, $|S_{2k+1}|$ is lower-bounded by 
\[
|S_{2k+1}| = a_{2k+1,1} + a_{2k+1,m} + \sum_{2\leq i\leq m-1}a_{2k+1,i} \geq \left(\frac{1}{\delta_{m_0-1}}+1\right) \sum_{2\leq i\leq m-1}a_{2k+1,i},
\]
from which we obtain $\sum_{2\leq i\leq m-1}a_{2k+1,i} \leq \gamma |S_{2k+1}|$ if we set $\gamma =1/(1/\delta_{m_0-1}+1)<1$. We therefore conclude that 
$
|S_{2k+3}| \leq 3|S_{2k+1}| + \sum_{2\leq i\leq m-1}a_{2k+1,i} \leq (3+\gamma)|S_{2k+1}|.
$  
This recurrence has a solution $|S_{n}|\leq (3+\gamma)^{(n-m)/2}|S_{m}|$ for every {\em odd} number $n\geq m$. Since $|S_{2k+2}|\leq |S_{2k+3}|$ and $m\geq5$, it holds that $|S_n|\leq (3+\gamma)^{n/2}|S_m|$ for {\em all} 
numbers $n\geq1$. 
In this end, the fact that $|S_m|$ is a constant 
and $1<(3+\gamma)^{1/2}\ <2$ 
leads to Claim \ref{upper-bound-c}. 
\end{proof}

Finally, we want to prove the second claim that 
$|A\cap\Sigma^n|\leq |S_n|$. In Claim \ref{A-vs-S}, 
we actually prove a much stronger statement. 
To describe this claim, we shall explain  
new terminology.  
Let $k\in[m-1,n]_{\integer}$ be an arbitrary number. 
A {\em convergence point} is an $m$-tuple $(d_1,d_2,\ldots,d_m)$ that satisfies the following conditions: (i) for all indices $i\in[1,m]_{\integer}$, $d_i$ is in $\nat$ and 
(ii) $d_1\leq d_2\leq \cdots \leq d_m$.  
For any two convergence points $(d_1,d_2,\ldots,d_m)$  and $(d'_1,d'_2,\ldots,d'_m)$, we say that $(d_1,d_2,\ldots,d_m)$ {\em majorizes} $(d'_1,d'_2,\ldots,d'_m)$ if, for every index $k\in[1,m]_{\integer}$,  
$\sum_{k\leq i\leq m} d_i\geq \sum_{k\leq i\leq m} d'_i$. 
This majorization notion directly implies that $\sum_{1\leq i\leq m} d_i\geq \sum_{1\leq i\leq m} d'_i$. 

Let us recall that $a_{k,i}$ denotes $|S_{k,i}|$. Among $a_{k,i}$'s, the following relation holds: when $k$ is odd, $a_{k,m}\leq a_{k,1}\leq a_{k,m-1}\leq a_{k,2} \leq\cdots\leq a_{k,m_0}$ and, when $k$ is even, $a_{k,1}\leq a_{k,m}\leq a_{k,2}\leq a_{k,m-1} \leq\cdots\leq a_{k,m_0}$. 
To simplify the description of $a_{k,i}$'s in these enumerations, we introduce another notation $\tilde{a}_{k,i}$ to denote the 
$i$th element in the corresponding enumeration; thus, 
for {\em every} index $k$, $\tilde{a}_{k,1}\leq \tilde{a}_{k,2}\leq \cdots \leq\tilde{a}_{k,m}$. 
The $m$-tuple $(\tilde{a}_{k,1},\tilde{a}_{k,2},\ldots,\tilde{a}_{k,m})$ becomes a convergence point. 
It is not difficult to show by induction that, for any index $i\in[1,m]_{\integer}$, $\tilde{a}_{k,i} = \tilde{a}_{k-1,i-1}+\tilde{a}_{k-1,i+1}$, where we conveniently set $\tilde{a}_{k-1,0}=0$ and $\tilde{a}_{k-1,m+1}=\tilde{a}_{k-1,m}$.  

Associated with $A_{k,i}$, we introduce another 
notation $A^{*}_{k,i}$, analogous to $S_{k,i}$'s, to denote the set  
$\{pref_{k}(x)\mid x\in A\cap\Sigma^n, \#_{0}(pref_{k}(x))=i\}$ 
and let $A^{*}_{k} = \bigcup_{i\in I^{*}_{k}} A^{*}_{k,i}$.  
Without loss of generality, we can assume that $|I_k^*|=m$ (because,  otherwise, we add appropriate elements to $I^{*}_{k}$). 
In general, there may be a situation in which $w_1,w_2\in A^{*}_{k-1,i}$ and $w_1b\in A^{*}_{k,j}$ but $w_2b\not\in A^{*}_{k,j}$ for certain elements $w_1,w_2,b$. Clearly, this situation decreases the value $|A^{*}_{k,j}|$; hereafter, it suffices to 
assume that this situation never occurs. 

\sloppy To simplify our description in the following argument, 
we enumerate all $A^{*}_{k,i}$'s as $B_{k,j}$'s so that  
$|B_{k,1}|\leq |B_{k,2}|\leq \cdots\leq |B_{k,m}|$. Obviously,  $(|B_{k,1}|,|B_{k,2}|,\ldots,|B_{k,m}|)$ 
becomes a convergence point.  
Toward 
the desired result  $|A\cap\Sigma^n|\leq |S_n|$, since  $|A\cap\Sigma^n| = \sum_{1\leq i\leq m}|B_{k,i}|$ and $|S_n| = \sum_{1\leq i\leq m}\tilde{a}_{k,i}$, it is enough to show that 
$(\tilde{a}_{k,1},\tilde{a}_{k,2},\ldots,\tilde{a}_{k,m})$ majorizes $(|B_{k,1}|,|B_{k,2}|,\ldots,|B_{k,m}|)$. 

\begin{claim}\label{A-vs-S}
Let $k\in[m-1,n]_{\integer}$ and let $A\in\AAA$.  Consider 
$A^{*}_{k,1},A^{*}_{k,2},\ldots,A^{*}_{k,m}$ induced from $A\cap\Sigma^n$ as described before. 
Let $B_{k,1},B_{k,2},\ldots,B_{k,m}$ be an enumeration of 
$A^{*}_{k,i}$'s so that 
$|B_{k,1}|\leq |B_{k,2}|\leq \cdots\leq |B_{k,m}|$.  
It then holds that $(\tilde{a}_{k,1},\tilde{a}_{k,2},\ldots,\tilde{a}_{k,m})$ majorizes $(|B_{k,1}|,|B_{k,2}|,\ldots,|B_{k,m}|)$.  Thus, in particular, $|A\cap\Sigma^n|\leq |S_n|$ holds. 
\end{claim}

Our proof of Claim \ref{A-vs-S} 
is comprised of two extra claims---Claim \ref{preserve-maj} and \ref{max-choice}.

\begin{claim}\label{preserve-maj}
Let $(d_1,d_2,\ldots,d_m),(c_1,c_2,\ldots,c_m)$ be any two convergence points.  For every index $i\in[1,m]_{\integer}$, define $\tilde{d}_i = d_{i-1}+ d_{i+1}$ with $d_0=0$ and $d_{m+1}=d_{m}$ and define $\tilde{c}_i = c_{i-1}+ c_{i+1}$ with $c_0=0$ and $c_{m+1}=c_{m}$. If $(d_1,d_2,\ldots,d_m)$ majorizes $(c_1,c_2,\ldots,c_m)$, then $(\tilde{d}_1,\tilde{d}_2,\ldots,\tilde{d}_m)$ majorizes $(\tilde{c}_1,\tilde{c}_2,\ldots,\tilde{c}_m)$.
\end{claim}

\sloppy Since the proof of this claim is rather short, we shall give it here.  
Let $k$ be any index in $[1,m]_{\integer}$. 
Since $(d_1,\ldots,d_m)$ majorizes $(c_1,\ldots,c_m)$, it holds 
that $\sum_{k-1\leq i\leq m} d_i \geq \sum_{k-1\leq i\leq m}c_i$ and  $\sum_{k+1\leq i\leq m} d_i \geq \sum_{k+1\leq i\leq m}c_i$. Let us consider the difference $\ell_k =_{def}  \sum_{k\leq i\leq m} \tilde{d}_i - 
\sum_{k\leq i\leq m}\tilde{c}_i$. It is clear that 
$\sum_{k\leq i\leq m} \tilde{d}_i$ equals  
 $\sum_{k-1\leq i\leq m} d_i + \sum_{k+1\leq i\leq m}d_i$. A similar equality also holds for $\tilde{c}_i$'s. 
We thus conclude that  
$
\ell_k = \left(\sum_{k-1\leq i\leq m} d_i - \sum_{k-1\leq i\leq m}c_i\right) + \left(\sum_{k+1\leq i\leq m} d_i - \sum_{k+1\leq i\leq m}c_i\right) \geq0.
$  
Therefore, $(\tilde{d}_1,\ldots,\tilde{d}_m)$ majorizes $(\tilde{c}_1,\ldots,\tilde{c}_m)$. 

\begin{claim}\label{max-choice}
Let $k\in[m-1,n]_{\integer}$ and let $A\in\AAA$.  Assume that $B_{k-1,1},B_{k-1,2},\ldots,B_{k-1,m}$ and $B_{k,1},B_{k,2},\ldots,B_{k,m}$ are induced from $A\cap\Sigma^n$. 
For each index $i\in[1,m]_{\integer}$, define $B'_{k,i}$ so that 
$|B'_{k,i}| = |B_{k-1,i-1}| + |B_{k-1,i+1}|$, where we set $B_{k-1,0}=\emptyset$ and $B_{k-1,m+1}=B_{k-1,m}$. 
It then holds that $(|B'_{k,1}|,|B'_{k,2}|,\ldots,|B'_{k,m}|)$ majorizes $(|B_{k,1}|,|B_{k,2}|,\ldots,|B_{k,m}|)$.
\end{claim}

Before proving Claim \ref{max-choice}, 
we shall give the proof of Claim \ref{A-vs-S} using Claims \ref{preserve-maj} and \ref{max-choice}. 
The proof proceeds by induction on $k\in[m-1,n]_{\integer}$. 
For the basis case $k=m-1$, note that $I_{m-1}=[1,m]_{\integer}$. 
For each index $i\in[0,m-1]_{\integer}$, since $A^{*}_{m-1,i} = \{w\in\Sigma^{m-1}\mid \exists v[wv\in A\cap\Sigma^n],\#_{0}(w)=i\}$, 
$A^{*}_{m-1,i}$ is clearly included in the set $\{w\in\Sigma^{m-1}\mid \#_{0}(w)=i\}$, which equals $S_{m-1,i+1}$. Hence, we have  $|A^{*}_{m-1,i}|\leq |S_{m-1,i+1}|$. Since $|B_{k,i}|$'s are an enumeration of $|A^{*}_{k,i}|$'s in an increasing order, we obtain $|B_{m-1,j}|\leq \tilde{a}_{m-1,j}$ for every index $j\in[1,m]_{\integer}$. 

For induction step $k\geq m$, we choose $m$ sets $B'_{k,1},B'_{k,2},\ldots,B'_{k,m}$, each of which satisfies the equation 
$|B'_{k,i}| = |B_{k-1,i-1}|+|B_{k-1,i+1}|$, where $i\in[1,m]_{\integer}$.  
Claim \ref{max-choice} guarantees that  $(|B'_{k,1}|,\ldots,|B'_{k,m}|)$ majorizes $(|B_{k,1}|,\ldots,|B_{k,m}|)$. By induction hypothesis, 
$(\tilde{a}_{k-1,1},\ldots,\tilde{a}_{k-1,m})$ majorizes $(|B_{k-1,1}|,\ldots,|B_{k-1,m}|)$.  This implies, by 
Claim \ref{preserve-maj}, that   
$(\tilde{a}_{k,1},\ldots,\tilde{a}_{k,m})$ majorizes $(|B'_{k,1}|,\ldots,|B'_{k,m}|)$. 
By combining these relations, it follows that $(\tilde{a}_{k,1},\ldots,\tilde{a}_{k,m})$ majorizes $(|B_{k,1}|,\ldots,|B_{k,m}|)$, completing the proof of Claim \ref{A-vs-S}.

\begin{proofof}{Claim \ref{max-choice}}
Our proof strategy is described as follows. The proof of the claim 
will proceed by induction on $i\in[1,m]_{\integer}$. For each index $i\in[1,m]_{\integer}$, by choosing appropriate $A^{*}_{k,i}$'s, 
we first try to maximize the value 
$\sum_{i\leq j\leq m}|B_{k,j}|$ and then maximize the next value 
$\sum_{i+1\leq j\leq m}|B_{k,j}|$; for those maximal values, 
we want to prove that $|B'_{k,i}|=|B_{k,i}|$.  

For our proof, it is helpful to visualize a relationship between $A^{*}_{k-1,i}$'s and $A^{*}_{k,j}$'s using a 
{\em directed bipartite graph} $G=(V_1|V_2,E)$, whose nodes in $V_1$ are labeled $A^{*}_{k-1,i}$ ($i\in I_{k-1}^*$) and nodes in $V_2$ are labeled $A^{*}_{k,j}$ ($j\in I_{k}^*$). 
For simplicity, we identify a node name with its label. 
There is a directed edge in $E$ from node $A^{*}_{k-1,i}$ to  node $A^{*}_{k,j}$ (in this case, $A^{*}_{k-1,i}$ is conventionally 
said to be 
{\em incident} to $A^{*}_{k,j}$, and vice versa) exactly when certain elements $w$ and $b$ satisfy that $w\in A^{*}_{k-1,i}$ and $wb\in A^{*}_{k,j}$. 
Notationally, we write $outdeg(a)$ for the {\em outdegree} (\ie the number of outgoing edges from $a$) of a graph node $a$, and $indeg(a)$ for 
the {\em indegree} (\ie the number of incoming edges to $a$) of $a$. 
The following argument uses structural properties of a bipartite graph of both outdegree and indegree at most $2$. 

[Basis Case: $i=1$] By the definition of $B'_{k,j}$'s, it holds that 
$\sum_{1\leq j\leq m}|B'_{k,j}| = 2\sum_{2\leq j\leq m}|B_{k-1,j}| +|B_{k-1,1}|$. Recall that $|A^{*}_{k}| = \sum_{1\leq j\leq m}|B_{k,j}|$. 
First, we want to force $|A^{*}_{k}|$ to take the largest value. Note that 
every index $i$ in $I_{k-1}^*$ can be classified into one of the following two index sets: 
$I'_1=_{def}\{i\in I_{k-1}^*\mid outdeg(A^{*}_{k-1,i})=1\}$  and  $I'_2=_{def}\{i\in I_{k-1}^*\mid outdeg(A^{*}_{k-1,i})=2\}$. 
Since $|A^{*}_{k}|\leq 2\sum_{j\in I'_2}|A^{*}_{k-1,j}| + \sum_{j\in I'_1}|A^{*}_{k-1,j}|$, 
we should choose an index $i_0\in I_{k-1}^*$ so that $|A^{*}_{k-1,i_0}|$ is the smallest value among $|A^{*}_{k-1,j}|$'s, and then  
we should set $I'_1=\{i_0\}$.  
In summary, we have $I'_2=I_{k-1}^*-\{i_0\}$ and 
$|A^{*}_{k}| = 2\sum_{j\in I'_2}|A^{*}_{k-1,j}| + |A^{*}_{k-1,i_0}|$. 
Since $|A^{*}_{k-1,i_0}|=|B_{k-1,1}|$, we thus obtain 
$|A^{*}_{k}| = 2\sum_{2\leq j\leq m}|B_{k-1,j}| + |B_{k-1,1}|$. 
This means $outdeg(B_{k-1,1})=1$, and therefore 
$G$ is not composed of two or more disconnected subgraphs. 

To maximize the next sum $\sum_{2\leq j\leq m}|B_{k,j}|$, since the value $|A^{*}_{k}|$ is already fixed, we need to 
minimize the value $|B_{k,1}|$. 
For this purpose, we demand that $indeg(B_{k,1})=1$. 
Which node in $V_1$, incident to node $B_{k,1}$, 
can minimize $|B_{k,1}|$? 
At the first sight, it seems that node $B_{k-1,1}$ could be 
the best choice; however, as we show next, 
it cannot be incident to $B_{k,1}$. Let us assume that  $(B_{k-1,1},B_{k,1})\in E$. Since $outdeg(B_{k-1,1})=  indeg(B_{k,1})=1$, the node set $\{B_{k-1,1},B_{k,1}\}$ forms a subgraph, which is entirely disconnected from the other 
part of the graph $G$. This implies the existence of another node in $V_1$ of outdegree exactly $1$, a clear contradiction against $|I'_{1}|=1$. 
Hence, since the second best choice is node $B_{k-1,2}$, 
$E$ should contain edge $(B_{k-1,2},B_{k,1})$; thus,  
$|B_{k,1}|$ equals $|B_{k-1,2}|$, which is $|B'_{k,1}|$ by its definition. 

[Induction Case: $i\geq2$]  
We first consider the case where $i\neq m$. 
Because the sum $\sum_{i\leq j\leq m}|B_{k,j}|$ has been maximized 
at Step $i-1$, to maximize the value $\sum_{i+1\leq j\leq m}|B_{k,j}|$, 
we should force $|B_{k,i}|$ smaller. 
Since $indeg(B_{k,i})=2$, let us consider a node pair in $V_1$ 
that are incident to $B_{k,i}$. 
Since  nodes $B_{k-1,1},B_{k-1,2},\ldots,B_{k-1,i-2}$ are already used up in the previous steps, 
the possible choice of nodes incident to $B_{k,i}$ includes 
$B_{k-1,i-1},B_{k-1,i},\ldots,B_{k-1,m}$. We argue that $E$ does not contain both edges $(B_{k-1,i-1},B_{k,i})$ and $(B_{k-1,i},B_{k,i})$ simultaneously. If $E$ contains them, then  the node set $\{B_{k-1,1},\ldots,B_{k-1,i},B_{k,1},\ldots,B_{k,i}\}$ forms a subgraph, say,  $G'$ of $G$. Recall that  $outdeg(B_{k-1,1})=1$ and $indeg(B_{k,1})=1$. This implies that $G'$ is disconnected from the rest of the graph $G$. This is a contradiction against the nature of $G$. 
Hence, the second best choice for a node pair incident to $B_{k,i}$ is $\{B_{k-1,i-1},B_{k-1,i+1}\}$. 
This concludes that $|B_{k,i}| = |B_{k-1,i-1}|+|B_{k-1,i+1}|$,  
and thus $|B_{k,i}|$ equals $|B'_{k,i}|$, as requested. 
If $i=m$, then nodes $B_{k-1,m-1}$ and $B_{k-1,m}$ are the only choice of nodes incident to $B_{k,m}$. Thus, $|B_{k,m}|$ equals  
$|B_{k-1,m-1}|+|B_{k-1,m}|$, which is exactly $|B'_{k,m}|$. 
\end{proofof}

This completes the proof of Proposition \ref{equal-dense-immune}.
\end{proof}

Unlike the $\reg$-bi-immunity, it is possible to prove the existence of context-free $\reg/n$-bi-primeimmune languages. A later result in Section \ref{sec:pseudorandom} implies that a context-free language, 
called $IP_{*}$, is $\reg/n$-bi-primeimmune.

%%%%%%%%%%%%%%%%%%%%%%%%%%%%%%
\section{Pseudorandomness of Languages}\label{sec:pseudorandom}

{}From this section to the next section, we shall discuss ``computational randomness'' of context-free languages. Although there are numerous ways to describe the intuitive notion of computational randomness, we choose the following notion, which we prefer to call {\em $\CC$-pseudorandomness} to distinguish another notion of ``$\CC$-randomness'' used in the past literature. 
Let $\Sigma$ denote our alphabet 
with $|\Sigma|\geq2$ and let $\CC$ be any language family.
Roughly speaking, a language $L$ over $\Sigma$ is $\CC$-pseudorandom when the characteristic function $\chi_{A}$ of any language $A$ in $\CC$ agrees with $\chi_{L}$ over ``nearly'' $50\%$ of strings of each length, where the word ``nearly'' is meant for ``negligibly small margin.'' In other words, since $L\triangle A = \{x\in\Sigma^*\mid \chi_{L}(x)\neq \chi_{A}(x)\}$, the density $dense(L\triangle A)(n)$ ``nearly'' halves the total size $|\Sigma^n|$. This new notion can be seen as a non-asymptotic variant of Wilber's randomness \cite{Wil83} (which is also referred to as Wilber-stochasticity in \cite{AMWZ96}) and Meyer-McCreight's randomness \cite{MM71}.

Let us formalize the above intuitive notion. 
For any language $L$ over $\Sigma$, 
we say that $L$ is {\em $\CC$-pseudorandom} if, for each language $A$ over 
$\Sigma$ in $\CC$, the function $\ell(n) =_{def} \left| \frac{dense(L\triangle A)(n)}{|\Sigma^n|} - \frac{1}{2}\right|$ is negligible. Under the assumption that $\emptyset\in\CC$, we 
can show, by setting $A=\emptyset$, that 
every $\CC$-pseudorandom language $L$ satisfies 
\begin{equation}\label{dense-bound}
\left( \frac{1}{2} - \frac{1}{p(n)} \right)\left|\Sigma^n\right| 
\leq dense(L)(n) \leq \left( \frac{1}{2} + \frac{1}{p(n)} \right)\left|\Sigma^n\right| 
\end{equation}
for any non-zero polynomial $p$ and for all but finitely many lengths $n\in\nat$. Instead of assuming ``$\,\emptyset\in\CC$,'' 
the assumption ``$\,\Sigma^*\in\CC$'' also leads to Eq.(\ref{dense-bound}), by way of dealing with $\overline{L}$. 

Similar in spirit to the previous $\CC$-primeimmunity, we can naturally restrict our attention within p-dense languages in $\CC$. As a non-asymptotic variant of the notions of M{\"u}ller's balanced immunity \cite{Mul93} and weak-stochasticity of Ambos-Spies \etalc~\cite{AMWZ96}, we introduce another notion, called weak $\CC$-pseudorandomness, which refers to a language that splits every p-dense set in $\CC$ by ``nearly'' half. Let $\CC$ be any language family. 
Formally, a language $L$ over $\Sigma$ is called {\em weakly $\CC$-pseudorandom} if, for every p-dense language $A$ in $\CC$, the function 
$\ell'(n) =_{def} \left| \frac{dense(L\cap A)(n)}{dense(A)(n)} - \frac{1}{2} \right|$ is negligible. By choosing $A=\Sigma^*$, provided that $\Sigma^*\in\CC$, we can show that $L$ also 
satisfies Eq.(\ref{dense-bound}).   

We remarks that no (weakly) $\CC$-pseudorandom language belongs to $\CC$. 
A language family $\DD$ is said to be {\em $\CC$-pseudorandom} 
(resp., {\em weakly $\CC$-pseudorandom}) if $\DD$ contains a $\CC$-pseudorandom (resp., weakly $\CC$-pseudorandom) language.
In fact, as we shall show later, $\cfl$ is $\reg$-pseudorandom. 

\begin{lemma}\label{random-lang-forms}
Assume that $|\Sigma|\geq2$. Let $\CC$ be any language family with $\Sigma^*\in\CC$. 
For every set $S\subseteq\Sigma^*$, the following three statements are equivalent.
\begin{enumerate}\vs{-1}
\item $S$ is weakly $\CC$-pseudorandom.
\vs{-2}
\item The function $\ell(n)= \left| \frac{dense(S\triangle A)(n)}{|\Sigma^n|} - \frac{1}{2}  \right|$ is negligible for every p-dense language $A\in\CC$ over $\Sigma$.
\vs{-2}
\item The function $\ell''(n)=\left| \frac{dense(S\cap A)(n)}{|\Sigma^n|} - \frac{dense(\overline{S}\cap A)(n)}{|\Sigma^n|} \right|$ is negligible for every p-dense language $A\in\CC$ over $\Sigma$.
\end{enumerate}
\end{lemma}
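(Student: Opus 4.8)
The plan is to prove the three-way equivalence in Lemma~\ref{random-lang-forms} by establishing $(1)\Leftrightarrow(2)$ and $(2)\Leftrightarrow(3)$, exploiting the closure of $\CC$ under complementation throughout. The underlying combinatorial identity driving everything is the partition of $A\cap\Sigma^n$ into the two disjoint pieces $S\cap A\cap\Sigma^n$ and $\overline{S}\cap A\cap\Sigma^n$, together with the elementary set-theoretic fact that, for strings in $A$, membership in the symmetric difference $S\triangle A$ is equivalent to \emph{non}-membership in $S$; formally, $(S\triangle A)\cap A = \overline{S}\cap A$ and $(S\triangle A)\cap\overline{A} = S\cap\overline{A}$. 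So on the set $A$ alone, $dense((S\triangle A)\cap A)(n) = dense(\overline{S}\cap A)(n) = dense(A)(n) - dense(S\cap A)(n)$.

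\textbf{Proof of $(1)\Leftrightarrow(3)$ via $(3)$.} First I would observe that, since $dense(S\cap A)(n) + dense(\overline{S}\cap A)(n) = dense(A)(n)$, we can write
\[
\frac{dense(S\cap A)(n)}{dense(A)(n)} - \frac12 = \frac{dense(S\cap A)(n) - dense(\overline{S}\cap A)(n)}{2\,dense(A)(n)}.
\]
Hence $\ell'(n) = \frac{1}{2\,dense(A)(n)}\bigl| dense(S\cap A)(n) - dense(\overline{S}\cap A)(n)\bigr|$, and since $A$ is p-dense we have $dense(A)(n)\geq |\Sigma^n|/q(n)$ for some non-zero polynomial $q$ and all large $n$. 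Therefore $\ell'(n) \leq \frac{q(n)}{2}\cdot\ell''(n)$, so statement $(3)$ implies statement $(1)$. Conversely $\ell''(n) = \frac{2\,dense(A)(n)}{|\Sigma^n|}\,\ell'(n) \leq 2\,\ell'(n)$ (using $dense(A)(n)\leq|\Sigma^n|$), so $(1)$ implies $(3)$. This gives $(1)\Leftrightarrow(3)$ directly, without passing through $(2)$; negligibility is preserved under multiplication by a fixed polynomial.

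\textbf{Proof of $(2)\Leftrightarrow(3)$.} Here the closure of $\CC$ under complementation is the crucial ingredient. Using the split $dense(S\triangle A)(n) = dense((S\triangle A)\cap A)(n) + dense((S\triangle A)\cap\overline{A})(n) = dense(\overline{S}\cap A)(n) + dense(S\cap\overline{A})(n)$, and similarly expanding $|\Sigma^n| = dense(A)(n) + dense(\overline{A})(n)$, a short computation yields
\[
\frac{dense(S\triangle A)(n)}{|\Sigma^n|} - \frac12 = \frac{dense(\overline{S}\cap A)(n) - dense(S\cap A)(n)}{2|\Sigma^n|} + \frac{dense(S\cap\overline{A})(n) - dense(\overline{S}\cap\overline{A})(n)}{2|\Sigma^n|},
\]
i.e.\ $2\,\ell(n)$ is, up to sign, the difference of the two quantities $\ell''_{A}(n)$ (for the set $A$) and $\ell''_{\overline{A}}(n)$ (for $\overline{A}$), where $\ell''_{B}(n) = \bigl|\,dense(S\cap B)(n)/|\Sigma^n| - dense(\overline{S}\cap B)(n)/|\Sigma^n|\,\bigr|$. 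If $A$ is p-dense then $\overline{A}$ need not be, so to invoke $(3)$ for $\overline{A}$ I would split into cases: if $\overline{A}$ is also p-dense, both terms are negligible by $(3)$ applied to $A$ and to $\overline{A}\in\CC$; if $\overline{A}$ is \emph{not} p-dense, then $dense(\overline{A})(n)$ is infinitely often smaller than any $|\Sigma^n|/p(n)$ — but to conclude $\ell''_{\overline A}$ is negligible one needs $dense(\overline A)(n)/|\Sigma^n|$ small for \emph{all} large $n$, not merely infinitely often. This is the point requiring care: I would handle it by noting that $\ell''_{\overline A}(n)\leq dense(\overline A)(n)/|\Sigma^n|$ trivially and, more importantly, re-deriving the needed bound using that statement $(2)$ or $(3)$ implies $S$ itself satisfies Equation~(\ref{dense-bound}) (as remarked in the text after the definition of weak pseudorandomness), which forces $dense(S\cap\overline A)(n)$ and $dense(\overline S\cap\overline A)(n)$ to both be close to $\frac12 dense(\overline A)(n)$ whenever $\overline A$ is large, and to be small compared to $|\Sigma^n|/p(n)$ whenever $\overline A$ is small. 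The converse direction $(2)\Rightarrow(3)$ is easier: apply the displayed identity with the roles of $A$ and $\overline A$ both ranging over p-dense members of $\CC$, solving for $\ell''_A$ in terms of $\ell$.

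\textbf{The main obstacle} I anticipate is precisely the bookkeeping in $(2)\Leftrightarrow(3)$ when the complement $\overline{A}$ of a p-dense language fails to be p-dense: one must argue that the ``$\overline A$-contribution'' to the symmetric-difference density is negligible \emph{uniformly} in $n$, which does not follow from p-density of $A$ alone and instead needs the density bound~(\ref{dense-bound}) on $S$ together with a careful case analysis on the size of $dense(\overline A)(n)$. The rest of the argument is routine manipulation of cardinalities, and negligibility is stable under the polynomial factors that appear, so no subtlety arises there.
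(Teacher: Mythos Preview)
Your direct equivalence $(1)\Leftrightarrow(3)$ via the relation $\ell'(n)=\frac{|\Sigma^n|}{2\,dense(A)(n)}\,\ell''(n)$ is correct and cleaner than the paper, which instead runs the cycle $(1)\Rightarrow(2)\Rightarrow(3)\Rightarrow(1)$; only the paper's $(3)\Rightarrow(1)$ coincides with one half of your argument, and the paper never isolates the easy bound $\ell''(n)\le 2\ell'(n)$ that gives you $(1)\Rightarrow(3)$ for free.

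The obstacle you flag in $(3)\Rightarrow(2)$ --- that $\overline{A}$ need not be p-dense --- is real (the paper's own $(1)\Rightarrow(2)$ silently applies the hypothesis to $\overline{A}$ without checking this), but your proposed ``case analysis on the size of $dense(\overline{A})(n)$'' is both vague and unnecessary. With $a=|S_n\cap A_n|$, $b=|\overline{S}_n\cap A_n|$, $c=|S_n\cap\overline{A}_n|$, $d=|\overline{S}_n\cap\overline{A}_n|$, the identity $c-d=\bigl((a+c)-(b+d)\bigr)-(a-b)$ gives $\ell''_{\overline{A}}(n)\le \ell''_{\Sigma^*}(n)+\ell''_A(n)$, and both right-hand terms are negligible by $(3)$ applied to the p-dense languages $\Sigma^*$ and $A$; hence $\ell''_{\overline{A}}$ is negligible uniformly in $n$, with no case split. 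Your $(2)\Rightarrow(3)$ sketch is likewise too loose: the paper instead uses the identity $|S_n\cap A_n|-|\overline{S}_n\cap A_n|=|S_n|-|S_n\triangle A_n|$ (equivalently, add the signed version of your displayed identity for $A$ to that for $\Sigma^*$, not $\overline{A}$), yielding $\ell''_A(n)\le \ell_A(n)+\ell_{\Sigma^*}(n)$ directly and never mentioning $\overline{A}$ at all.
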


In the above lemma, Statements (2) and (3) are still equivalent 
after removing a requirement of the p-denseness of $A$. 
With an appropriate change, we therefore obtain a 
similar characterization of the $\CC$-pseudorandomness. For a later reference, we call this fact a ``pseudorandom'' version of Lemma \ref{random-lang-forms}(2-3). 

Hereafter, we use the following abbreviation: write $S_n$ for $S\cap\Sigma^n$ and $\overline{S}_n$ for $\overline{S}\cap\Sigma^n$.

\begin{proofof}{Lemma \ref{random-lang-forms}}
Let $\Sigma$ be our alphabet with $|\Sigma|\geq2$ and let $S$ be any language over $\Sigma$.  Notice that a language family $\CC$ is 
assumed to contain the language $\Sigma^*$. 

(1 $\Rightarrow$ 2) Assume Statement (1). Choose an arbitrary non-zero polynomial $p$ and also any p-dense language $A$ in $\CC$. 
Henceforth, we assume that $n$ is a sufficiently large number. 

We first claim that 
$\left| 2|S_{n}\triangle A_{n}| - |\Sigma^n| \right| \geq 2|\Sigma^n|/p(n)$. 
{}From Statement (1) follows  the inequality 
$\left| \frac{dense(S\cap A)(n)}{dense(A)(n)} - \frac{1}{2} \right|\leq 1/4p(n)$, which is equivalent to $\left| |A_n\cap S_n| - |A_n\cap \overline{S}_n| \right|\leq |A_n|/2p(n)$. 
Since $S$ satisfies Eq.(\ref{dense-bound}), using $2p(n)$ 
(instead of $p(n)$), we obtain $\left|\frac{|S_{n}|}{|\Sigma^n|} - \frac{1}{2}\right| \leq 1/2p(n)$. 
It is easy to show that $\left||S_n|-|\overline{S}_n|\right|\leq |\Sigma^n|/p(n)$, since $|\overline{S}_{n}|=|\Sigma^n|-|S_{n}|$.  
{}From $|\overline{A}_{n}\cap S_{n}| = |S_{n}| - |A_{n}\cap S_{n}|$ and $|\overline{A}_{n}\cap\overline{S}_{n}| = |\overline{S}_{n}| - |A_{n}\cap\overline{S}_{n}|$, we conclude that 
\[
\left||\overline{A}_{n}\cap S_{n}| - |\overline{A}_{n}\cap \overline{S}_{n}|\right| \leq \left| |S_{n}| - |\overline{S}_{n}|\right| + \left| |A_{n}\cap S_{n}| - |A_{n}\cap\overline{S}_{n}|\right|.
\]
Since $|S_{n}\triangle A_{n}| = |A_{n}\cap \overline{S}_{n}| + |\overline{A}_{n}\cap S_{n}|$ and $|\overline{S_{n}\triangle A_{n}}| = |A_{n}\cap S_{n}| + |\overline{A}_{n}\cap \overline{S}_{n}|$, it follows that 
\begin{eqnarray*}
\left| 2|S_{n}\triangle A_{n}| - |\Sigma^n| \right| 
&=& \left| |S_{n}\triangle A_{n}| - 
 |\overline{S_{n}\triangle A_{n}} | \right| \\
&\leq& \left||A_{n}\cap S_{n}| - |A_{n}\cap \overline{S}_{n}|\right| 
+ \left||\overline{A}_{n}\cap S_{n}| - |\overline{A}_{n}\cap\overline{S}_{n}|\right| \\
&\leq& \left| |S_{n}| - |\overline{S}_{n}|\right| + 2\left| |A_{n}\cap S_{n}| - |A_{n}\cap\overline{S}_{n}|\right|.
\end{eqnarray*}
The last sum is bounded from above by 
$\frac{|\Sigma^n|}{p(n)} + \frac{|A_{n}|}{p(n)} \leq \frac{2|\Sigma^n|}{p(n)}$.  
Using this upper bound, we obtain  
\[
\ell(n) = \left| \frac{dense(S\triangle A)(n)}{|\Sigma^n|} - \frac{1}{2} \right| =
\frac{ \left| 2|S_{n}\triangle A_{n}| - |\Sigma^n| \right| }{2|\Sigma^n|} \leq \frac{1}{p(n)}.
\]
Since $p$ is arbitrary, the above bound of $\ell(n)$ clearly implies Statement (2).

(2 $\Rightarrow$ 3) Assume Statement (2). Let $p$ be any non-zero polynomial and let $A$ be any p-dense language in $\CC$. 
Statement (2) 
implies that $\ell(n) = \left| \frac{|S_n\triangle A_n|}{|\Sigma^n|} - \frac{1}{2} \right| \leq 1/2p(n)$ for any sufficiently large number $n$. Since $\Sigma^*\in\CC$ and $S_n\triangle \Sigma^n = \overline{S}_{n}$, 
it holds that 
$\left| \frac{|\overline{S}_n|}{|\Sigma^n|} - \frac{1}{2} \right|\leq 1/2p(n)$. This immediately implies $\left| \frac{|S_n|}{|\Sigma^n|} - \frac{1}{2} \right|\leq 1/2p(n)$. 
Hence, since $\left| |S_n\cap A_n| - |\overline{S}_n\cap A_n| \right|
= \left| |S_n\triangle A_n| - |S_n| \right|$,  we can bound the term 
$\ell''(n)$ as 
\[
\ell''(n) = \frac{\left| |S_n\cap A_n| - |\overline{S}_n\cap A_n| \right|}{|\Sigma^n|}
\leq \left| \frac{|S_n\triangle A_n|}{|\Sigma^n|} - \frac{1}{2} \right| + \left| \frac{|S_n|}{|\Sigma^n|} - \frac{1}{2} \right|, 
\]
which is further upper-bounded by $\frac{1}{2p(n)} + \frac{1}{2p(n)} = \frac{1}{p(n)}$. Therefore, Statement (3) holds.

(3 $\Rightarrow$ 1) Assume Statement (3). For any non-zero polynomial $p$ and any p-dense language $A$ in $\CC$, take a certain non-zero polynomial $q$ satisfying 
that $|A_n|\geq |\Sigma^n|/2q(n)$ for any sufficiently large number $n$.  We then obtain
\[
\ell'(n)= \left| \frac{|S_n\cap A_n|}{|A_n|} - \frac{1}{2} \right| 
= 
\left| \frac{|S_n\cap A_n| - |\overline{S}_n\cap A_n|}{2|A_n|} \right| 
\leq  
q(n)\cdot \left| \frac{|S_n\cap A_n| - |\overline{S}_n\cap A_n|}{|\Sigma^n|} \right|. 
\]
Since $\left| \frac{|S_n\cap A_n| - |\overline{S}_n\cap A_n|}{|\Sigma^n|} \right| \leq 1/p(n)q(n)$ from Statement (3), the above inequality implies that $\ell'(n) \leq 1/p(n)$. The arbitrariness of $p$ leads to a conclusion that $\ell'(n)$ is negligible, or equivalently Statement (1) holds.
\end{proofof}

{}From Lemma \ref{random-lang-forms}, 
we can draw the following consequence for any language family $\CC$ 
containing $\Sigma^*$: 
every  $\CC$-pseudorandom language is weakly $\CC$-pseudorandom.
We further argue that weak $\CC$-pseudorandomness implies $\CC$-bi-primeimmunity. This implication bridges between  primeimmunity and pseudorandomness. 

\begin{lemma}\label{weakrandom-primeimmune}
Let $\CC$ be any language family with $\Sigma^*$.   
Every weakly $\CC$-pseudorandom language is $\CC$-bi-primeimmune. 
\end{lemma}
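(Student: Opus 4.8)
The plan is to unwind the two definitions and verify the conditions directly. Recall that $L$ being $\CC$-bi-primeimmune means that both $L$ and $\overline{L}$ are $\CC$-primeimmune, so for a weak $\CC$-pseudorandom language $L$ over $\Sigma$ (and here $\Sigma^*\in\CC$ by the standing assumption made in the definition of weak $\CC$-pseudorandomness) I must establish four assertions: (a) $L$ is p-dense; (b) $\overline{L}$ is p-dense; (c) $L$ contains no p-dense subset belonging to $\CC$; and (d) $\overline{L}$ contains no p-dense subset belonging to $\CC$. Assertions (a) and (b) are exactly what was noted right after the definition of weak $\CC$-pseudorandomness: instantiating the defining negligibility with the p-dense language $A=\Sigma^*\in\CC$ forces $L$ to satisfy Equation (\ref{dense-bound}), so that $dense(L)(n)\geq |\Sigma^n|/4$ and $dense(\overline{L})(n)=|\Sigma^n|-dense(L)(n)\geq |\Sigma^n|/4$ for all but finitely many $n$; each inequality witnesses p-denseness (with the constant non-zero polynomial $p(n)=4$).

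For (c) I would argue by contradiction. Suppose $A\subseteq L$ with $A\in\CC$ and $A$ p-dense. Then $L\cap A=A$, and p-denseness of $A$ gives $dense(A)(n)>0$ for every sufficiently large $n$, so $dense(L\cap A)(n)/dense(A)(n)=1$ for all large $n$. Hence the error function $\ell'(n)=\left| dense(L\cap A)(n)/dense(A)(n)-1/2\right|$ equals $1/2$ for all large $n$, which is not negligible, contradicting the weak $\CC$-pseudorandomness of $L$. Assertion (d) is the mirror image: if $B\subseteq\overline{L}$ with $B\in\CC$ and $B$ p-dense, then $L\cap B=\setempty$, so $dense(L\cap B)(n)/dense(B)(n)=0$ for all large $n$, and again $\ell'(n)=1/2$, a contradiction. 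Equivalently, (d) can be deduced from (c) by first observing that $\overline{L}$ is itself weak $\CC$-pseudorandom, since $dense(\overline{L}\cap A)(n)/dense(A)(n)=1-dense(L\cap A)(n)/dense(A)(n)$ shows that the error function of $\overline{L}$ against any p-dense $A\in\CC$ coincides with that of $L$.

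There is no genuine technical obstacle here; the only points needing care are definitional. One must keep the alphabet of the hypothetical subset equal to $\Sigma$, exactly as in the definition of weak $\CC$-pseudorandomness, and one must invoke p-denseness precisely to ensure that $dense(A)(n)\neq 0$ (respectively $dense(B)(n)\neq 0$) for all but finitely many $n$, so that the density ratio is well defined and equal to $1$ (respectively $0$). The hypothesis that $\CC$ is closed under complementation is not actually used in the computation sketched above; its natural role is to permit the proof to be rephrased through Lemma \ref{random-lang-forms}, whose equivalence of statements (1) and (2) lets one replace the ratio condition by negligibility of $\left| dense(L\triangle A)(n)/|\Sigma^n|-1/2\right|$ over p-dense $A\in\CC$, from which (c) and (d) follow in the same manner after bounding $dense(L)(n)$ and $dense(\overline{L})(n)$ via Equation (\ref{dense-bound}). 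I would present the direct argument as the main line and record the route through Lemma \ref{random-lang-forms} as a remark.
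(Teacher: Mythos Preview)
Your proposal is correct and follows essentially the same contradiction argument as the paper: if a p-dense $A\in\CC$ lies inside $L$ (respectively $\overline{L}$), then the ratio $dense(L\cap A)(n)/dense(A)(n)$ is identically $1$ (respectively $0$) for large $n$, forcing $\ell'(n)=1/2$. In fact you are slightly more careful than the paper, which jumps straight to ``take a p-dense subset $A$'' without explicitly disposing of the other way $\CC$-primeimmunity can fail (namely, $L$ or $\overline{L}$ not being p-dense); your verification of (a) and (b) via Equation~(\ref{dense-bound}) fills that small gap, and your observation that closure under complementation is not actually used in the direct argument is also correct.
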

 
\begin{proof}
Let $S$ be any weakly $\CC$-pseudorandom language. 
Assuming that $S$ is not $\CC$-primeimmune, 
we take a p-dense subset $A$ of $S$ in $\CC$. 
Since $A\subseteq S$, it follows that $\ell'(n) = \left| \frac{|S_{n}\cap A_{n}|}{|A_{n}|} - \frac{1}{2}  \right| 
= \left| \frac{|A_{n}|}{|A_{n}|} - \frac{1}{2}  \right|
\geq 1/2$, which is clearly not negligible. 
This is a contradiction against the weak $\CC$-pseudorandomness of $S$. Hence, $S$ is indeed $\CC$-primeimmune.  

Next, we consider the case of $\overline{S}$. Note that, as a symmetric feature of Lemma \ref{random-lang-forms}(3) indicates, $\overline{S}$ also becomes weakly $\CC$-pseudorandom. Thus, an argument used for $S$ works analogously for $\overline{S}$. In the end, we conclude that $S$ is $\CC$-bi-primeimmune, as requested. 
\end{proof}

The converse of Lemma \ref{weakrandom-primeimmune}, however, 
does not hold in general; for instance, there are context-free 
languages that 
are $\reg$-primeimmune but not weakly $\reg$-pseudorandom. 
One of those languages 
is the language $Equal_{*}$, defined in Section \ref{sec:p-dense-immune}.

\begin{proposition}
The language family $\cfl$ contains a $\reg/n$-primeimmune language that is not weakly $\reg/n$-pseudorandom. 
\end{proposition}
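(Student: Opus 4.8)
The plan is to take the language $Equal_{*}$ from Section~\ref{sec:p-dense-immune} itself as the witness. First, observe $Equal_{*}\in\cfl$: since $Equal_{*} = Equal \cup \{0\}Equal \cup \{1\}Equal$ is a finite union of concatenations of context-free languages, and $\cfl$ is closed under finite union and under concatenation with single symbols. Next, Proposition~\ref{equal-dense-immune} already gives that $Equal_{*}$ is $\reg/n$-primeimmune. Hence the only remaining task is to verify that $Equal_{*}$ is \emph{not} weak $\reg/n$-pseudorandom, and everything comes down to a density count.

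For the non-pseudorandomness I would invoke the density constraint of Equation~(\ref{dense-bound}): every weak $\reg/n$-pseudorandom language must have $dense(L)(n)$ within $|\Sigma^n|/p(n)$ of $\tfrac12 |\Sigma^n|$ for every non-zero polynomial $p$ and all large $n$ (this is obtained by taking $A=\Sigma^*$, which lies in $\reg\subseteq\reg/n$ and is trivially p-dense). So it suffices to show $Equal_{*}$ violates~(\ref{dense-bound}), i.e. $dense(Equal_{*})(n)/2^n\to 0$. This is exactly the Stirling estimate already carried out inside the proof of Proposition~\ref{equal-dense-immune}: for even $n$, $dense(Equal_{*})(n) = \comb{n}{n/2} = \frac{2^n\sqrt{2}}{\sqrt{\pi n}}\bigl(1+\Theta(1/n)\bigr)$, and for odd $n$, $dense(Equal_{*})(n) = 2\,dense(Equal_{*})(n-1)$, again of order $2^n/\sqrt{n}$; in either parity $dense(Equal_{*})(n)/2^n = \Theta(1/\sqrt{n})$. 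Thus for every non-zero polynomial $p$ and all sufficiently large $n$ we get $dense(Equal_{*})(n) < \bigl(\tfrac12 - \tfrac1{p(n)}\bigr)2^n$, contradicting~(\ref{dense-bound}). Equivalently, with $A=\Sigma^*$ the function $\ell'(n) = \left| \frac{dense(Equal_{*}\cap\Sigma^*)(n)}{dense(\Sigma^*)(n)} - \frac12\right| = \left|\Theta(1/\sqrt{n}) - \frac12\right|$ tends to $\tfrac12$, so it is bounded below by $\tfrac14$ for large $n$ and in particular is not negligible. Hence $Equal_{*}$ is not weak $\reg/n$-pseudorandom, which completes the proposition.

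There is no deep obstacle here; the argument is a citation (Proposition~\ref{equal-dense-immune}) plus an elementary binomial-coefficient estimate, and the one point worth spelling out is the contrast with Lemma~\ref{weakrandom-primeimmune}: weak $\reg/n$-pseudorandomness does imply $\reg/n$-bi-primeimmunity, but the converse fails precisely because $Equal_{*}$ is too sparse (sub-constant density relative to $\Sigma^n$) to be weak pseudorandom while still being p-dense and lacking p-dense regular-with-advice subsets. The only care points are handling even and odd lengths uniformly in the density estimate and noting that $\Sigma^*\in\reg/n$ is both p-dense and admissible as the distinguishing set $A$; both are routine.
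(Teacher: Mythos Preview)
Your proposal is correct and follows essentially the same route as the paper: both take $Equal_{*}$ as the witness, cite Proposition~\ref{equal-dense-immune} for $\reg/n$-primeimmunity, and then use $A=\Sigma^*$ together with the Stirling estimate $\binom{n}{n/2}\sim 2^n\sqrt{2/\pi n}$ to show that $\ell'(n)$ (equivalently the density deviation from $1/2$) is bounded below by $1/4$ for large $n$, hence not negligible. Your write-up is slightly more careful in that you treat both parities of $n$ explicitly and justify $Equal_{*}\in\cfl$, whereas the paper's proof only writes out the even case.
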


\begin{proof}
In Proposition \ref{equal-dense-immune}, the context-free language $Equal_{*}$ is shown to be $\reg/n$-primeimmune (and thus $\reg$-primeimmune). 
Hence, our remaining task is to show that $Equal_{*}$ is not weakly 
$\reg$-pseudorandom. Choose $A=\Sigma^*$ and consider the function 
$\ell(n) =\left| \frac{ dense(Equal_{*}\triangle A)(n)}{|\Sigma^n|} - \frac{1}{2}\right|$. Since $dense(Equal_{*}\triangle A)(n) = dense(Equal_{*})(n) \leq \tinycomb{n}{\ceilings{n/2}}$, for any sufficiently large number $n$, 
$\ell(n)$ is bounded from below by 
$
\frac{1}{2} - \frac{dense(Equal_{*})(n)}{2^n} 
\geq \frac{1}{2} - \frac{\tinycomb{n}{\ceilings{n/2}}}{2^n} 
\geq \frac{1}{4}
$
because $\tinycomb{n}{\ceilings{n/2}}\leq \frac{2^{n+1}\sqrt{2}}{\sqrt{\pi n}}\leq \frac{2^{n+1}}{\sqrt{n}}$. Since $\ell(n)\geq 1/4$, $Equal_{*}$ cannot be weakly $\reg$-pseudorandom. 
\end{proof}

Proposition \ref{equal-dense-immune} has proven $\cfl$ to be $\reg/n$-primeimmune. We shall strengthen this result 
by proving that $\cfl$ is actually  $\reg/n$-pseudorandom. 

\begin{proposition}\label{IP-pseudorandom}
The language  family $\cfl$ is $\reg/n$-pseudorandom.
\end{proposition}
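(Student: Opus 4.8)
The plan is to exhibit the extended inner-product language
\[
IP_{*} = \{\, auv \mid a\in\{\lambda,0,1\},\ u,v\in\{0,1\}^{*},\ |u|=|v|,\ \langle u^{R},v\rangle\equiv 1\ (\mathrm{mod}\ 2)\,\}
\]
over the alphabet $\{0,1\}$, where $\langle u^{R},v\rangle=\sum_{i=1}^{|u|} u_{|u|+1-i}\,v_{i}$ is the $\mathrm{GF}(2)$ inner product of $u^{R}$ with $v$, and to prove that $IP_{*}$ is $\reg/n$-pseudorandom. Since $\reg\subseteq\reg/n$ this also gives $\reg$-pseudorandomness, and via Corollary \ref{random-weakrandom} and Lemma \ref{weakrandom-primeimmune} it yields the $\reg/n$-bi-primeimmunity of $IP_{*}$ announced in Section \ref{sec:p-dense-immune}. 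I would split the argument into (a) showing $IP_{*}\in\cfl$ and (b) showing that for every $A\in\reg/n$ the ratio $dense(IP_{*}\triangle A)(n)/2^{n}$ is within an exponentially small, hence negligible, distance of $\frac{1}{2}$.

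For (a): an npda for $IP_{*}$ first guesses the parity of $|w|$, consuming one leading symbol when the length is odd; it then guesses the midpoint of the remaining block $uv$, pushing the symbols of $u$ onto its stack and, while scanning $v$, popping one stack symbol per input symbol, so that the $i$-th symbol of $v$ is matched against $u_{|u|+1-i}=u^{R}_{i}$ on top of the stack; it keeps the running parity $\sum_i u^{R}_i v_i$ in its finite control and accepts iff the stack empties exactly when the input ends (forcing $|u|=|v|$) and the parity is $1$. Pairing $u^{R}$ against $v$ rather than $u$ against $v$ is exactly what makes this LIFO computation possible, so $IP_{*}\in\cfl$.

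For (b): fix $A\in\reg/n$, witnessed by a length-preserving advice $h$ and a $c$-state dfa $M$ for a regular language $L$ with $\track{x}{h(|x|)}\in L\iff x\in A$. For an even length $n=2m$, $h(n)$ is a fixed string, so $M$ processes $\track{uv}{h(n)}$ with $u,v\in\{0,1\}^{m}$ like a $c$-state device reading $uv$ left to right: reading $u$ drives $M$ into one of at most $c$ states, which partitions $\{0,1\}^{m}$ into classes $U_{1},\dots,U_{r}$ with $r\le c$, and from each resulting state a fixed set $V_{s}\subseteq\{0,1\}^{m}$ of second halves is accepted. Hence $A\cap\Sigma^{2m}=\bigsqcup_{s}U_{s}\times V_{s}$, and, writing $H_{u}=\{v\in\{0,1\}^{m}\mid \langle u^{R},v\rangle\equiv 1\}$, I would compute
\[
dense(IP_{*}\triangle A)(2m)=\sum_{s}\sum_{u\in U_{s}}\bigl|H_{u}\triangle V_{s}\bigr|.
\]
For $u\neq 0^{m}$ the set $H_{u}$ is an affine hyperplane of size $2^{m-1}$, and an elementary Fourier identity gives $|H_{u}\triangle V_{s}|=2^{m-1}+\widehat{\chi_{V_{s}}}(u^{R})$ with $\widehat{\chi_{V_{s}}}(w)=\sum_{v}\chi_{V_{s}}(v)(-1)^{\langle w,v\rangle}$. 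The leading term summed over all $u$ depends only on $\sum_{s}|U_{s}|=2^{m}$ and equals $(2^{m}-1)2^{m-1}$; the single $u=0^{m}$ term is at most $2^{m}$; and the error $\sum_{s}\sum_{u\in U_{s}}\widehat{\chi_{V_{s}}}(u^{R})$ is bounded, for each $s$, by Cauchy--Schwarz and Parseval via $\sum_{w\in\{0,1\}^{m}}|\widehat{\chi_{V_{s}}}(w)|\le 2^{m}\sqrt{|V_{s}|}\le 2^{3m/2}$, so the total error is at most $c\cdot 2^{3m/2}$. This gives $\bigl|dense(IP_{*}\triangle A)(2m)-2^{2m-1}\bigr|=O(2^{3m/2})$, i.e. $\bigl|dense(IP_{*}\triangle A)(2m)/2^{2m}-\frac{1}{2}\bigr|=O(2^{-m/2})$. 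Odd lengths $n=2m+1$ reduce to this by first fixing the leading symbol $a\in\{0,1\}$ and rerunning the same rectangle decomposition on the $c$-state residual device reading $uv$, then summing the two bounds. The bound being exponentially small in $m$, it is negligible, so $IP_{*}$ is $\reg/n$-pseudorandom and $\cfl$ is $\reg/n$-pseudorandom.

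I expect the main obstacle to be Step (b), and in particular the combination of two observations: that once the input length is fixed the length-$n$ advice collapses to a constant string, so $A\cap\Sigma^{2m}$ is a disjoint union of at most $c$ combinatorial rectangles $U_{s}\times V_{s}$ with the $U_{s}$ partitioning $\{0,1\}^{m}$; and the bound $\sum_{w}|\widehat{\chi_{V}}(w)|\le 2^{m}\sqrt{|V|}$, which forces the bias $\sum_{s}\sum_{u}\widehat{\chi_{V_{s}}}(u^{R})$ down to $o(2^{2m})$. This inequality is the quantitative form of the folklore fact that the inner-product function admits no small-bias approximation by functions depending separately on the two halves of the input, and it is precisely why $IP_{*}$ works where $Equal$ --- which, as noted earlier, is not even $\reg$-immune --- does not.
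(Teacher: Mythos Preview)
Your proof is correct and follows essentially the same route as the paper: the same witness language $IP_{*}$, the same rectangle decomposition of $A\cap\Sigma^{n}$ induced by the DFA states (which the paper packages separately as a ``swapping property lemma''), and the same key inequality bounding the discrepancy of the inner-product function on rectangles. The only cosmetic differences are that you work directly with $dense(IP_{*}\triangle A)(n)$ rather than the equivalent quantity $\ell''(n)$ from Lemma~\ref{random-lang-forms}, and you derive the discrepancy bound yourself via Cauchy--Schwarz and Parseval instead of citing it from \cite{AB09}.
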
   

To prove Proposition \ref{IP-pseudorandom}, we introduce 
a context-free language, called $IP_{*}$, over the alphabet $\{0,1\}$. 
First, let us define the {\em (binary) inner product} of $x$ and $y$ as $x\odot y = \sum_{i=1}^{n}x_i\cdot y_i$, where $x=x_1x_2\cdots x_n$ and $y=y_1y_2\cdots y_n$ are $n$-bit strings. The language $IP_{*}$ is defined as the set $\{auv\mid a\in\{\lambda,0,1\},|u|=|v|,u^{R}\odot v \equiv 1\;(\mathrm{mod}\;2)\}$. 
Here, we shall demonstrate that $IP_{*}$ is indeed context-free. Let us consider the following npda. On input a string of the form $auv$, we nondeterministically generate two computational paths and check the following two possibilities. Along one computation path, assuming that $a=\lambda$, we nondeterministically check if $|u|=|v|$ and $u^{R}\odot v\equiv 1\;(\mathrm{mod}\;2)$. The latter condition $u^{R}\odot v\equiv 1\;(\mathrm{mod}\;2)$ can be checked by storing $u$ in a (first-in last-out) stack and then computing each product  $u_{n/2-i}\odot v_i$ 
while reading $v_i$, where $i=1,2,\ldots,n/2$. 
On the other computation path, assuming that  $a\neq\lambda$, we ignore the first bit $a$ and check if $|u|=|v|$ and $u^{R}\odot v\equiv 1\;(\mathrm{mod}\;2)$. It is easy to see that this npda recognizes $IP_{*}$. 

%%%
%%%
\ignore{
The reader may heed the fact that $IP_{*}$ is $\reg$-levelable, because, by Lemma \ref{auto-levelable}, $f(auv)=a0uv0$ is a length-increasing $\onedlin$-invertible $\onedlin$-$m$-autoreduction for $IP_{*}$. 
}
%%%
%%%

Our proof of Proposition \ref{IP-pseudorandom} requires a certain unique property of $\reg/n$, called a {\em swapping property}, which has a loose similarity with the swapping lemma for regular languages \cite{Yam08}. 

\begin{lemma}\label{swapping-property}{\rm [swapping property lemma]}\hs{2}
Let $S$ be any language over an alphabet $\Sigma$. If $S\in\reg/n$, then there exists a positive integer $m$ that satisfies the following property. For any three numbers $n,\ell_1(n),\ell_2(n)\in\nat$ with $\ell_1(n)+\ell_2(n)=n$, there are a group of disjoint sets, say, $S_1^{(n)},S_2^{(n)},\ldots,S_m^{(n)}$ such that (i) $S\cap \Sigma^n = \bigcup_{i=1}^{m}S_i^{(n)}$ and (ii) (swapping property) 
for any index $i\in[1,m]_{\integer}$ and 
for any string pair $x,y\in S_i^{(n)}$, if $x=x_1x_2$ and $y=y_1y_2$ with $|x_j|=|y_j|=\ell_j(n)$ for each index $j\in\{1,2\}$, then the swapped strings $x_1y_2$ and $y_1x_2$ are in $S_i^{(n)}$.  
\end{lemma}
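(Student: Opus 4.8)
The plan is to peel off the definition of $\reg/n$ and run a fixed finite automaton across the cut at position $\ell_1(n)$, in the spirit of a Myhill--Nerode crossing-sequence argument. Since $S\in\reg/n$, fix an advice alphabet $\Gamma$, a length-preserving advice function $h:\nat\to\Gamma^{*}$, and a dfa $M=(Q,\delta,q_0,F)$ over the track alphabet $\{\track{\sigma}{\tau}\mid\sigma\in\Sigma,\tau\in\Gamma\}$ such that, for every $x$, $x\in S$ iff $M$ accepts $\track{x}{h(|x|)}$. Put $m:=|Q|$; this number depends only on $S$ (through the chosen representation) and not on $n$, $\ell_1(n)$, or $\ell_2(n)$, which is exactly the uniformity the lemma demands.

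Next I would set up the partition. Fix $n$ and the split $n=\ell_1(n)+\ell_2(n)$, and write $h(n)=h_1h_2$ with $|h_1|=\ell_1(n)$ and $|h_2|=\ell_2(n)$. Crucially, because the advice depends only on the length, $h_1$ and $h_2$ are the same for every string of length $n$. For $x=x_1x_2\in\Sigma^n$ with $|x_1|=\ell_1(n)$ we then have $\track{x}{h(n)}=\track{x_1}{h_1}\track{x_2}{h_2}$, so the computation of $M$ passes through the single state $p(x_1):=\delta^{*}(q_0,\track{x_1}{h_1})$. Define, for each state $r\in Q$, the block $S_{r}^{(n)}:=\{x_1x_2\in S\cap\Sigma^n\mid |x_1|=\ell_1(n),\ p(x_1)=r\}$. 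These $m$ blocks (some possibly empty) are pairwise disjoint and their union is $S\cap\Sigma^n$, so condition (i) holds; relabel them $S_1^{(n)},\dots,S_m^{(n)}$.

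Finally I would check the swapping property (ii). Suppose $x=x_1x_2$ and $y=y_1y_2$ both lie in $S_{r}^{(n)}$, so $p(x_1)=p(y_1)=r$ and both are in $S$. Writing $G(z_2):=\{s\in Q\mid\delta^{*}(s,\track{z_2}{h_2})\in F\}$ for the acceptance behaviour of the suffix block, the equivalence ``$x_1x_2\in S$ iff $p(x_1)\in G(x_2)$'' gives $r\in G(x_2)$ and $r\in G(y_2)$. Now consider the swap $x_1y_2$: since $\track{x_1y_2}{h(n)}=\track{x_1}{h_1}\track{y_2}{h_2}$, $M$ reaches state $p(x_1)=r$ after the prefix block and then accepts iff $r\in G(y_2)$, which holds; hence $x_1y_2\in S$, and since its length-$\ell_1(n)$ prefix is $x_1$ we have $p(x_1)=r$, so $x_1y_2\in S_{r}^{(n)}$. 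The argument for $y_1x_2$ is symmetric, using $r\in G(x_2)$. This establishes (ii).

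I do not expect a genuine obstacle; the only points that need care are (a) insisting that the advice is length-only, so that $h_1,h_2$ are common to all length-$n$ strings and the factorization of the run is legitimate, and (b) making sure the bound $m=|Q|$ is truly independent of the chosen split, which it is because $Q$ is fixed once $M$ is fixed. A harmless bookkeeping convention is to allow empty blocks, so that there are exactly $m$ of them as the statement requires; for an empty block (ii) is vacuous.
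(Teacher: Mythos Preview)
Your proof is correct and follows essentially the same approach as the paper: both take $m=|Q|$ for the underlying dfa, partition $S\cap\Sigma^n$ by the state reached after reading $\track{x_1}{h_1}$, and use determinism plus the length-only advice to verify that swapped strings stay in the same block. Your write-up is in fact slightly more careful than the paper's, explicitly noting that the swapped strings return to the \emph{same} block $S_r^{(n)}$ and that empty blocks are harmless.
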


\begin{proof}
{}From our assumption $S\in\reg/n$, we choose a dfa $M$ with a set $Q$ of inner states, and an advice function $h:\nat\rightarrow\Gamma^*$ with $|h(n)|=n$ satisfying that, for every string $x\in\Sigma^*$, $x\in S$ iff $M$ accepts $\track{x}{h(|x|)}$. Let us assume that  $Q=\{q_1,q_2,\ldots,q_m\}$ with $m\geq1$. For any numbers $n,\ell_1(n),\ell_2(n)\in\nat$ with   $\ell_1(n)+\ell_2(n)=n$, we define $S_i^{(n)}$ as the set of strings $x_1x_2\in S\cap\Sigma^n$ such that $|x_1|=\ell_1(n)$, $|x_2|=\ell_2(n)$, and $M$ enters $q_i$ after reading $\track{x_1}{h_1}$, 
where $h_1$ denotes $pref_{|x_1|}(h(n))$. 
It is clear that $S\cap\Sigma^n = \bigcup_{i=1}^{m} S^{(n)}_{i}$. 
If $x_1x_2$ and $y_1y_2$ are in $S^{(n)}_i$, then $M$ enters  
the same state $q_i$ after both reading $\track{x_1}{h_1}$ and reading  $\track{y_1}{h_1}$. 
Since $M$ accepts the both strings $\track{x_1x_2}{h(n)}$ and 
$\track{y_1y_2}{h(n)}$, $M$ also accepts both 
$\track{x_1y_2}{h(n)}$ and 
$\track{y_1x_2}{h(n)}$. 
Therefore, $x_1y_2$ and $y_1x_2$ belong to $S^{(n)}_{i}$. 
\end{proof}

Now, we are ready to present the proof of Proposition \ref{IP-pseudorandom}. In the proof, we shall utilize a well-known discrepancy upper bound of the {\em inner-product-modulo-two} function. 

\begin{proofof}{Proposition \ref{IP-pseudorandom}}
Our goal is to show that $IP_{*}$ is  $\reg/n$-pseudorandom. Assume on the contrary that, by a ``pseudorandom'' version of Lemma \ref{random-lang-forms}(2-3), there are a set $S$ in $\reg/n$, a 
non-zero polynomial $p$, and an infinite set $I\subseteq \nat$ such that  $\ell''(n) = \frac{\left| dense(IP_{*}\cap S)(n) - dense(\overline{IP}_{*}\cap S)(n) \right|}{|\Sigma^n|} \geq 1/p(n)$  for all lengths $n$ in $I$. 
Take a positive constant $m$ given in Lemma \ref{swapping-property}. 
Let $n$ be any sufficiently large number in $I$ satisfying $m <2^{n/8}$ and $p(n)<2^{n/8}$, and consider any $n$-bit input string of the form 
$auv$. 
It is sufficient to check the case where $n$ is even (that is, $a=\lambda$), because, when $n$ is odd, we can ignore the first bit $a$ and reduce this case to the even-number case. For ease of notation,  
abbreviate $S\cap IP_{*}\cap\Sigma^n$ and $S\cap\overline{IP}_{*}\cap\Sigma^n$ by $U_{1}$ and $U_{0}$, respectively.
{}From our assumption, it follows that $\left| |U_{1}| - |U_{0}| \right| 
= \ell''(n)|\Sigma^n| \geq 2^n/p(n)$ since $\Sigma=\{0,1\}$.  

By setting $\ell_1(n)= \ell_2(n) = n/2$, we choose 
$S_1^{(n)},\ldots,S_m^{(n)}$ given by Lemma \ref{swapping-property}, and 
consider two partitions:   
$U_{0} = \bigcup_{i\in[1,m]_{\integer}} U^{(i)}_{0}$ and $U_{1} = \bigcup_{i\in[1,m]_{\integer}} U^{(i)}_{1}$, where $U^{(i)}_{1} = IP_{*}\cap S^{(n)}_i$ and $U^{(i)}_{0} = \overline{IP}_{*}\cap S^{(n)}_i$. 
Toward our desired contradiction, we aim at proving the inequality  
$\left| |U_{1}| - |U_{0}| \right| <2^n/p(n)$. 
For this purpose, we claim the following.

\begin{claim}\label{U0-U1}
For all indices $i\in[1,m]_{\integer}$, $\left| |U^{(i)}_{1}| - |U^{(i)}_{0}| \right| \leq 2^{3n/4}$.
\end{claim}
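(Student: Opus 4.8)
The plan is to exploit the fact that each set $S^{(n)}_i$ supplied by Lemma \ref{swapping-property} is, by virtue of its swapping property, a \emph{combinatorial rectangle}, and then to control the resulting $\pm1$-valued sum by the classical discrepancy bound for the inner-product-modulo-two function. Throughout, $\Sigma=\{0,1\}$ and, since the proof of the proposition has already reduced to even $n$, we write $k=n/2$. If $S^{(n)}_i=\setempty$ the claim is trivial, so assume $S^{(n)}_i\neq\setempty$.

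First I would pin down the rectangle structure. Define $A_i=\{u\in\Sigma^{k}\mid \exists v\,(uv\in S^{(n)}_i)\}$ and $B_i=\{v\in\Sigma^{k}\mid \exists u\,(uv\in S^{(n)}_i)\}$, identifying a string of length $n$ with the pair formed by its two halves of length $k$. The inclusion $S^{(n)}_i\subseteq A_i\times B_i$ is immediate. For the reverse inclusion, take $u\in A_i$ and $v\in B_i$, pick $uv'\in S^{(n)}_i$ and $u'v\in S^{(n)}_i$, and apply the swapping property of $S^{(n)}_i$ (with $\ell_1(n)=\ell_2(n)=k$) to the pair $x=uv'$, $y=u'v$; this yields the swapped string $x_1y_2=uv\in S^{(n)}_i$. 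Hence $S^{(n)}_i=A_i\times B_i$, and in particular $|A_i|\leq 2^{k}$ and $|B_i|\leq 2^{k}$.

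Next I would rewrite the target quantity as a character sum and invoke the discrepancy bound. Since $n$ is even, $U^{(i)}_1=IP_{*}\cap S^{(n)}_i=\{uv\mid u\in A_i,\, v\in B_i,\, u^{R}\odot v\equiv1\ (\mathrm{mod}\ 2)\}$ and $U^{(i)}_0=\overline{IP}_{*}\cap S^{(n)}_i$ is exactly the complementary part of $A_i\times B_i$, so
\[
|U^{(i)}_1|-|U^{(i)}_0| \;=\; -\sum_{u\in A_i}\sum_{v\in B_i}(-1)^{u^{R}\odot v}.
\]
Because $u\mapsto u^{R}$ is a bijection of $\Sigma^{k}$, putting $\hat A_i=\{u^{R}\mid u\in A_i\}$ it suffices to bound $\bigl|\sum_{w\in\hat A_i}\sum_{v\in B_i}(-1)^{w\odot v}\bigr|$. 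Setting $g(w)=\sum_{v\in B_i}(-1)^{w\odot v}$, the Cauchy--Schwarz inequality together with the orthogonality identity $\sum_{w\in\{0,1\}^{k}}(-1)^{w\odot z}=2^{k}$ when $z$ is the length-$k$ zero string and $0$ otherwise (applied with $z$ ranging over bitwise XORs of pairs from $B_i$, using $(-1)^{w\odot v}(-1)^{w\odot v'}=(-1)^{w\odot(v\oplus v')}$) gives
\[
\Bigl(\sum_{w\in\hat A_i}g(w)\Bigr)^{2} \;\leq\; |\hat A_i|\sum_{w\in\{0,1\}^{k}}g(w)^{2} \;=\; |\hat A_i|\cdot 2^{k}|B_i| \;\leq\; 2^{k}\cdot 2^{k}\cdot 2^{k} \;=\; 2^{3n/2}.
\]
Taking square roots yields $\bigl||U^{(i)}_1|-|U^{(i)}_0|\bigr|\leq 2^{3n/4}$, which is Claim \ref{U0-U1}.

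The step I expect to need the most care is extracting the clean rectangle decomposition $S^{(n)}_i=A_i\times B_i$ from the precise form of the swapping property, and then keeping the bookkeeping straight (the split into halves, the reversal $u\mapsto u^{R}$, and the XOR identity for the characters); the discrepancy estimate itself is a textbook Cauchy--Schwarz and character-orthogonality computation and should go through with no real difficulty. Once Claim \ref{U0-U1} is established, summing it over $i\in[1,m]_{\integer}$ and using $m<2^{n/8}$ and $p(n)<2^{n/8}$ gives $\bigl||U_1|-|U_0|\bigr|\leq m\cdot 2^{3n/4}<2^{n/8}\cdot2^{3n/4}=2^{7n/8}<2^{n}/p(n)$, contradicting the standing assumption and completing the proof of Proposition \ref{IP-pseudorandom}.
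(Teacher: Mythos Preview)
Your proof is correct and follows essentially the same approach as the paper: both extract a combinatorial rectangle from the swapping property of $S^{(n)}_i$ and then apply the discrepancy bound for the inner-product-modulo-two function. The only cosmetic differences are that the paper absorbs the reversal into its definition of $A_i$ (writing $A_i=\{u\mid \exists v[u^Rv\in S^{(n)}_i]\}$) and cites the discrepancy bound from a textbook, whereas you handle the reversal via $\hat A_i$ and supply the standard Cauchy--Schwarz/character-orthogonality derivation of that bound yourself.
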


{}From this claim, since $m <2^{n/8}$, it follows that
$
\left| |U_{1}| - |U_{0}| \right| \leq  \sum_{i\in[1,m]_{\integer}} \left| |U^{(i)}_{1}| - |U^{(i)}_{0}| \right| \leq m \cdot 2^{3n/4} < 2^{7n/8} < \frac{2^n}{p(n)}.
$   
This consequence obviously contradicts our assumption that $\left| |U_{1}| - |U_{0}| \right| \geq 2^n/p(n)$. Hence, the proposition follows immediately.

Now, we give the proof of Claim \ref{U0-U1}. For this proof, we need a discrepancy upper bound of the inner-product-modulo-two function. Let $M$ be a $\Sigma^{n/2}$-by-$\Sigma^{n/2}$ matrix whose $(x,y)$-entry has a value $x\odot y\;(\mathrm{mod}\;2)$. For any sets $A,B\subseteq\Sigma^{n/2}$, the {\em discrepancy} of a rectangle $A\times B$ in $M$ is $Disc_{M}(A\times B) = 2^{-n}\left| \#_{1}^{(M)}(A\times B) - \#_{0}^{(M)}(A\times B)\right|$, where $\#_b^{(M)}(A\times B)$ means the total number of $b$ ($b\in\{0,1\}$) entries in $M$ when $M$'s entires are limited to $A\times B$. It is known that, for any pair $A,B\subseteq\Sigma^{n/2}$, $Disc_{M}(A\times B)\leq 2^{-3n/4}\sqrt{|A||B|}$ (see, \eg \cite[Example 12.14]{AB09}). This implies $Disc_{M}(A\times B) \leq 2^{-n/4}$. Although it is not quite tight, this loose bound still serves well for our purpose. 

For each index $i\in[1,m]_{\integer}$, we define two sets $A_i = \{u\in\Sigma^{n/2}\mid \exists v\in\Sigma^{n/2}[u^Rv\in S^{(n)}_{i}]\}$ and $B_i = 
\{v\in\Sigma^{n/2}\mid \exists u\in\Sigma^{n/2}[uv\in S^{(n)}_{i}]\}$, and we claim the following equation.  

\begin{claim}\label{A_times_B}
For each bit $b$, $\#_{b}^{(M)}(A_i\times B_i) = |U^{(i)}_{b}|$. 
\end{claim}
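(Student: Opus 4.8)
The plan is to exploit the product (``combinatorial rectangle'') structure that Lemma~\ref{swapping-property} forces on each $S_i^{(n)}$, and then to match the parity pattern of the matrix $M$ against membership in $IP_{*}$ through the reversal map $u\mapsto u^{R}$. Throughout I work in the even case $n=2k$ already fixed in the proof, so that strings of length $n$ in $IP_{*}$ are exactly the concatenations $uv$ with $|u|=|v|=n/2$ and $u^{R}\odot v\equiv 1\;(\mathrm{mod}\;2)$.

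First I would record the rectangle structure of $S_i^{(n)}$. Since $\ell_1(n)=\ell_2(n)=n/2$, put $P_i=\{u\in\Sigma^{n/2}\mid \exists v\in\Sigma^{n/2}\,[uv\in S_i^{(n)}]\}$ and $Q_i=\{v\in\Sigma^{n/2}\mid \exists u\in\Sigma^{n/2}\,[uv\in S_i^{(n)}]\}$; note that by the very definitions of $A_i$ and $B_i$ we have $B_i=Q_i$ and $A_i=\{u\in\Sigma^{n/2}\mid u^{R}\in P_i\}$. The inclusion $S_i^{(n)}\subseteq\{uv\mid u\in P_i,\ v\in Q_i\}$ is immediate. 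For the reverse inclusion, given $u\in P_i$ and $v\in Q_i$, pick $v',u'\in\Sigma^{n/2}$ with $uv'\in S_i^{(n)}$ and $u'v\in S_i^{(n)}$, and apply the swapping property of Lemma~\ref{swapping-property} to the pair $x=uv'$, $y=u'v$ (so that $x_1=u$, $x_2=v'$, $y_1=u'$, $y_2=v$); it puts the swapped string $x_1y_2=uv$ into $S_i^{(n)}$. Hence $S_i^{(n)}=\{uv\mid u\in P_i,\ v\in Q_i\}$.

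Next I would fix a bit $b$ and set up the bijection $\Phi(x,y)=x^{R}y$ between $\{(x,y)\in A_i\times B_i\mid x\odot y\equiv b\;(\mathrm{mod}\;2)\}$ and $U^{(i)}_{b}$. If $x\in A_i$ then $x^{R}\in P_i$, and $y\in B_i=Q_i$, so by the rectangle structure $x^{R}y\in S_i^{(n)}$; since $|x^{R}|=|y|=n/2$, the word $x^{R}y$ lies in $IP_{*}$ exactly when $(x^{R})^{R}\odot y=x\odot y\equiv 1\;(\mathrm{mod}\;2)$, whence $\Phi(x,y)\in U^{(i)}_{b}$ precisely when $x\odot y\equiv b\;(\mathrm{mod}\;2)$. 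Conversely, any $w\in U^{(i)}_{b}\subseteq S_i^{(n)}$ factors uniquely as $w=uv$ with $|u|=|v|=n/2$, giving $u\in P_i$, $v\in Q_i$, hence $(u^{R},v)\in A_i\times B_i$, and $u^{R}\odot v\equiv b\;(\mathrm{mod}\;2)$ because $w\in U^{(i)}_{b}$; the map $w\mapsto(u^{R},v)$ is the two-sided inverse of $\Phi$. Therefore $|U^{(i)}_{b}|=|\{(x,y)\in A_i\times B_i\mid x\odot y\equiv b\;(\mathrm{mod}\;2)\}|=\#_{b}^{(M)}(A_i\times B_i)$, which is the claim.

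I do not expect a genuine obstacle here: the only point requiring care is keeping the reversal built into the definition of $A_i$ aligned with the condition $u^{R}\odot v\equiv 1\;(\mathrm{mod}\;2)$ defining $IP_{*}$, so that the $(x,y)$-entry $x\odot y$ of $M$ genuinely records $IP_{*}$-membership of $x^{R}y$. Everything else is routine once the product decomposition of $S_i^{(n)}$ is extracted from the swapping property.
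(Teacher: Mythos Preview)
Your proof is correct and follows essentially the same approach as the paper. Both use the swapping property to establish the rectangle structure $S_i^{(n)}=\{uv\mid u\in P_i,\ v\in Q_i\}$ and then match parities through the reversal map; the paper phrases this via an auxiliary matrix $N$ with entries $x^{R}\odot y$ and the set $A_i^{R}$, whereas you construct the explicit bijection $\Phi(x,y)=x^{R}y$, but these are two presentations of the same argument.
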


It is clear from this claim that $2^{-n} | |U^{(i)}_{1}| - |U^{(i)}_{0}| | = Disc_{M}(A_i\times B_i) \leq 2^{-n/4}$. This inequality leads to the desired bound $| |U^{(i)}_{1}| - |U^{(i)}_{0}| | \leq 2^{3n/4}$ 
stated in Claim \ref{U0-U1}.
 
To end our proof, we shall prove Claim \ref{A_times_B}. Let us consider the case $b=0$. The other case is similar and omitted here. First, let $N$ be another $\Sigma^{n/2}$-by-$\Sigma^{n/2}$ matrix in which the value of each $(x,y)$-entry is $x^{R}\odot y\;(\mathrm{mod}\;2)$. Obviously, we have $\#_{0}^{(M)}(A_i\times B_i) = \#_{0}^{(N)}(A_i^{R}\times B_i)$, where $A_i^{R} = \{w^R \mid w\in A_i\}$. Second, we show that $A^R_{i}\times B_i = S^{(n)}_{i}$ 
by identifying $(u,v)$ with $uv$ whenever $|u|=|v|$. This is shown as follows. 
Assume that $uv\in S^{(n)}_i$. By the definitions of $A_i$ and $B_i$, it follows that $u^R\in A_i$ and $v\in B_i$; hence, 
$(u,v)\in A^R_i\times B_i$. Conversely, assume that $(u,v)\in A^R_i\times B_i$. Take two strings $\hat{u},\hat{v}\in\Sigma^{n/2}$ for which 
$u\hat{v}\in S^{(n)}_i$ and $\hat{u}v\in S^{(n)}_i$. 
The {\em swapping property} of $S^{(n)}_i$ given in 
Lemma \ref{swapping-property} implies that $uv\in S^{(n)}_i$. Therefore,  it holds that $A^R_{i}\times B_i = S^{(n)}_{i}$.
The above two equations imply that $\#_{0}^{(M)}(A_i\times B_i) = \#_{0}^{(N)}(A_i^{R}\times B_i) = |S^{(n)}_i\cap \overline{IP}_{*}| = |U^{(n)}_{0}|$. {}From this equation follows Claim \ref{A_times_B}.
\end{proofof}

To close this section, we shall consider ``closeness'' of two languages and 
exhibit a closure property of the family of  $\CC$-pseudorandom languages under this closeness property. Two languages $A$ and $B$ over the same alphabet $\Sigma$ are said to be {\em almost equal} if the function $\delta(n) = \frac{dense(A\triangle B)(n)}{|\Sigma^n|}$ is negligible. Note that this binary relation is actually an equivalence relation (satisfying reflexivity, symmetry, and transitivity).

\begin{lemma}\label{equal-pseudorandom}
Let $\CC$ be any language family and let $A$ and $B$ be any two languages over an alphabet $\Sigma$. If $A$ and $B$ are almost equal and $A$ is   $\CC$-pseudorandom, then $B$ is also   $\CC$-pseudorandom.
\end{lemma}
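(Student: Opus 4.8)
The plan is to reduce the $\CC$-pseudorandomness of $B$ to that of $A$ by a triangle-inequality estimate on symmetric differences. The only facts needed are the associativity of $\triangle$ together with $X\triangle X=\setempty$, and the elementary inclusion $X\triangle Z\subseteq (X\triangle Y)\cup(Y\triangle Z)$ for finite sets.

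First I would fix an arbitrary language $D\in\CC$ over $\Sigma$ and observe that, since symmetric difference is associative, $B\triangle D=(B\triangle A)\triangle(A\triangle D)\subseteq (A\triangle B)\cup(A\triangle D)$; intersecting with $\Sigma^n$ yields $dense(B\triangle D)(n)\leq dense(A\triangle B)(n)+dense(A\triangle D)(n)$. By symmetry, $dense(A\triangle D)(n)\leq dense(A\triangle B)(n)+dense(B\triangle D)(n)$, so $\bigl|\,dense(B\triangle D)(n)-dense(A\triangle D)(n)\,\bigr|\leq dense(A\triangle B)(n)$. Dividing by $|\Sigma^n|$ and writing $\delta(n)=dense(A\triangle B)(n)/|\Sigma^n|$, which is negligible because $A$ and $B$ are almost equal, I obtain
\[
\left| \frac{dense(B\triangle D)(n)}{|\Sigma^n|} - \frac{dense(A\triangle D)(n)}{|\Sigma^n|} \right| \leq \delta(n).
\]

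Next I would combine this with the $\CC$-pseudorandomness of $A$. Letting $\ell_A(n)=\left|\frac{dense(A\triangle D)(n)}{|\Sigma^n|}-\frac{1}{2}\right|$, which is negligible since $A$ is $\CC$-pseudorandom and $D\in\CC$, the triangle inequality gives
\[
\left| \frac{dense(B\triangle D)(n)}{|\Sigma^n|} - \frac{1}{2} \right| \leq \delta(n) + \ell_A(n).
\]
It then remains only to note that the sum of two negligible functions is negligible: given a non-zero polynomial $p$, apply the definition of negligibility with the polynomial $2p$ to each of $\delta$ and $\ell_A$ and add the resulting bounds. Hence the left-hand side above is negligible for every $D\in\CC$, which is exactly the statement that $B$ is $\CC$-pseudorandom.

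I do not expect any real obstacle here; the argument is a routine metric-type estimate. The only points requiring a little care are making the ``sum of negligibles is negligible'' step explicit, and keeping track that in the definition of $\CC$-pseudorandomness $D$ ranges over languages over the same alphabet $\Sigma$ as $B$ (and hence as $A$, since $A$ and $B$ are assumed to share $\Sigma$), so that $A\triangle D$, $B\triangle D$, and $A\triangle B$ are all subsets of $\Sigma^*$ and the density computations at length $n$ are comparable.
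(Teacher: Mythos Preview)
Your proof is correct and follows essentially the same triangle-inequality strategy as the paper: bound $\bigl|\,dense(B\triangle D)(n)-dense(A\triangle D)(n)\,\bigr|$ by the negligible quantity $dense(A\triangle B)(n)$, then combine with the $\CC$-pseudorandomness of $A$. Your execution is in fact a bit cleaner: the paper decomposes $|B_n\triangle C_n|-|A_n\triangle C_n|$ into intersection terms and ends up bounding it by $|A_n\triangle B_n|+|\overline{A}_n\triangle\overline{B}_n|$ (which is just $2|A_n\triangle B_n|$), whereas your direct use of $X\triangle Z\subseteq(X\triangle Y)\cup(Y\triangle Z)$ gives the sharper bound $|A_n\triangle B_n|$ in one line.
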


\begin{proof}
Let $A$ and $B$ be any two languages over an alphabet $\Sigma$. We assume that $A$ is $\CC$-pseudorandom and that $A$ and $B$ are almost equal. 
To show the  $\CC$-pseudorandomness of $B$, let $p$ be any non-zero polynomial and let $n$ be any number, which is sufficiently large to withstand our argument that proceeds in the rest of this proof. 

Let $C$ be an arbitrary language in $\CC$. 
To achieve our goal, it suffices to show that 
 $\left|\frac{|B_n\triangle C_n|}{|\Sigma^n|} - \frac{1}{2}\right| \leq 1/p(n)$. 
The $\CC$-pseudorandom of $A$ indicates that 
$\left|\frac{|A_n\triangle C_n|}{|\Sigma^n|} - \frac{1}{2}\right| 
\leq 1/p(n)$. Moreover, since $A$ and $B$ are 
almost equal,  we have 
$\frac{|A_n\triangle B_n|}{|\Sigma^n|} \leq 1/4p(n)$. It is not difficult to show that $\overline{A}$ and $\overline{B}$ are also almost equal; thus, it also follows that $\frac{|\overline{A}_n\triangle \overline{B}_n|}{|\Sigma^n|} \leq 1/4p(n)$.

\sloppy We can bound the value $\left| |B_n\triangle C_n| - |A_n\triangle C_n| \right|$ from above by the sum of  
$\left| |B_n\cap \overline{C}_n| - |A_n\cap \overline{C}_n| \right|$ and 
$\left| |\overline{B}_n\cap C_n| - |\overline{A}_n\cap C_n| \right|$. 
Note that the term $\left| |B_n\cap \overline{C}_n| - |A_n\cap \overline{C}_n| \right|$ is at most  
$|A_n \cap \overline{B}_n| + |\overline{A}_n\cap B_n|$, 
which clearly equals $|A_n\triangle B_n|$.
A similar bound is given for $\left| |\overline{B}_n\cap C_n| - |\overline{A}_n\cap C_n| \right|$. Combining these two bounds
leads to 
\[
\frac{\left| |B_n\triangle C_n| - |A_n\triangle C_n| \right|}{|\Sigma^n|}  
\leq \frac{|A_n \triangle B_n|}{|\Sigma^n|} + \frac{|\overline{A}_n \triangle \overline{B}_n|}{|\Sigma^n|}  
\leq \frac{1}{4p(n)}+ \frac{1}{4p(n)} = \frac{1}{2p(n)}.
\]
{}From this bound, we obtain  
\[
\left|\frac{|B_n\triangle C_n|}{|\Sigma^n|} - \frac{1}{2}\right| 
\leq \left|\frac{|A_n\triangle C_n|}{|\Sigma^n|} - \frac{1}{2}\right| 
+ \left|\frac{|B_n\triangle C_n| - |A_n\triangle C_n|}{|\Sigma^n|}\right| 
\leq \frac{1}{2p(n)} + \frac{1}{2p(n)} = \frac{1}{p(n)}.
\]
Since $C$ is arbitrary, we conclude the $\CC$-pseudorandomness of $B$, as requested.
\end{proof}

%%%%%%%%%%%%%%%%%%%%%%%%%%%%%%%%%
\section{Pseudorandom Generators}\label{sec:generator}

Rather than {\em determining} the pseudorandomness of strings, 
we intend to {\em produce} pseudorandom strings. A function that generates such strings, known as a {\em pseudorandom generator}, is an important cryptographic primitive, and a large volume of work has been dedicated to its theoretical and practical applications. In accordance with this paper's main theme of formal language theory, we define our pseudorandom generator so that it fools ``languages'' rather than ``probabilistic algorithms'' as in its conventional definition (found in, \eg \cite{Gol01}). 
A similar treatment appears in, for instance, designing of generators that fool ``Boolean circuits.'' For ease of notation, we always denote the binary alphabet $\{0,1\}$ by $\Sigma$. Let us recall the notation $\chi_{A}$, which expresses the characteristic function of $A$. In cryptography, we often limit our interest within 
a function $G$ that maps  $\Sigma^*$ to $\Sigma^*$ with a {\em stretch factor}\footnote{This factor is also called an {\em expansion factor} in, \eg \cite{Gol01}.} $s(n)$; namely,  $|G(x)|=s(|x|)$ holds for all strings $x\in\Sigma^*$. Such a function $G$ is said to {\em fool} a language $A$ over $\Sigma$ if the function $\ell(n) =_{def} \left| \prob_{x}[\chi_{A}(G(x))=1] - \prob_{y}[\chi_{A}(y)=1]  \right|$ is negligible, where $x$ and $y$ are random variables over $\Sigma^n$ and $\Sigma^{s(n)}$, respectively.  We often call an input $x$ fed to $G$ 
a {\em seed}. A function $G$ is 
called a {\em pseudorandom generator} against a language family $\CC$ if $G$ fools every language $A$ over $\Sigma$ in $\CC$.
Taking the significance of p-denseness into our consideration, we also introduce a weaker form of pseudorandom generator, which fools only p-dense languages. 
Formally, a {\em weakly pseudorandom generator} against $\CC$ is a function that fools every p-dense language over $\Sigma$ in $\CC$. Obviously, every pseudorandom generator is a weakly pseudorandom generator. 
As shown below, the $\CC$-pseudorandomness discussed in the previous section 
has a close connection to pseudorandom generators against $\CC$.

In particular, this paper draws our attention to ``almost one-to-one'' pseudorandom generators. A generator $G$ with the stretch factor $n+1$ is 
called {\em almost 1-1} if there is a negligible function $\tau(n)\geq0$ such that  
$|\{G(x)\mid x\in\Sigma^n\}|=|\Sigma^{n}|(1-\tau(n))$ for 
all numbers $n\in\nat$. 

Recall from Section \ref{sec:notation} the single-valued total function class $\mathrm{CFLSV_t}$, which includes $\oneflin$ as a {\em proper}  subclass (because $\oneflin=\mathrm{CFLSV_t}$ would imply $\reg=\cfl$). 
Hereafter, we shall aim at proving that $\mathrm{CFLSV_t}$ contains an almost 1-1 pseudorandom generator against $\reg/n$.

\begin{proposition}\label{exist-generator}
There exists an almost 1-1 pseudorandom generator in $\mathrm{CFLSV_t}$ against $\reg/n$. 
\end{proposition}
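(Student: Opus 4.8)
The plan is to build the generator explicitly from the $\reg/n$-pseudorandom language $IP_{*}$ of Proposition~\ref{IP-pseudorandom}. Define
\[
G(x) \;=\; x\,\chi_{IP_{*}}(x).
\]
Since $|\chi_{IP_{*}}(x)|=1$, the stretch factor of $G$ is $n+1$, so it remains to verify three things: (i) $G\in\mathrm{CFLSV_t}$; (ii) $G$ is almost 1-1; (iii) $G$ fools every language in $\reg/n$.

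\emph{Steps (i) and (ii).} Because $IP_{*}\in\cfl$, we have $\chi_{IP_{*}}\in\mathrm{CFLSV_t}$. The key structural observation is that every string $w$ has a \emph{unique} parse as $auv$ with $a\in\{\lambda,0,1\}$ and $|u|=|v|$: when $|w|$ is even we must take $a=\lambda$, and when $|w|$ is odd we must take $a$ to be the first symbol of $w$. We describe an npda $M$ with a write-only output tape: on input $w$, $M$ copies each scanned symbol of $w$ onto the output tape; it nondeterministically guesses the parity of $|w|$ (and, if ``odd,'' consumes the first symbol of $w$ and ignores it), then guesses the midpoint of the remaining even-length block, pushes the symbols of its first half onto the stack, and on the second half pops while accumulating $u^{R}\odot v\bmod 2$ in the finite control. $M$ accepts exactly when the stack empties precisely as the input is exhausted, and then it writes the accumulated bit onto the output. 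Only the branch whose parity and midpoint guesses are correct can reach an accepting state, and that branch is thereafter deterministic; hence for each $w$ there is exactly one accepting path, producing $w\,\chi_{IP_{*}}(w)$, so $G$ is single-valued and total, i.e.\ $G\in\mathrm{CFLSV_t}$. For (ii), $G$ is injective: $x=pref_{|G(x)|-1}(G(x))$ and inputs of different lengths have images of different lengths, so $|\{G(x)\mid x\in\Sigma^{n}\}|=|\Sigma^{n}|=|\Sigma^{n}|(1-\tau(n))$ with $\tau\equiv 0$, which is negligible.

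\emph{Step (iii).} Fix $A\in\reg/n$ and, for $b\in\{0,1\}$, put $A_{b}=\{x\in\Sigma^{*}\mid xb\in A\}$. First, $A_{b}\in\reg/n$: from a dfa and a length-preserving advice $h$ witnessing $A\in\reg/n$, build a new advice that at position $1$ carries the pair $(h(m+1)_{1},h(m+1)_{m+1})$ over an enlarged advice alphabet and at positions $2\le i\le m$ carries $h(m+1)_{i}$; a new dfa simulates the old one on the correspondingly re-encoded input and performs one extra transition reading $\track{b}{h(m+1)_{m+1}}$ at the end, so $A_{b}\in\reg/n$. Now let $x,y$ be uniform over $\Sigma^{n}$ and $\Sigma^{n+1}$. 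Since $\{x\in\Sigma^{n}\mid G(x)\in A\}$ is the disjoint union of $(IP_{*}\cap A_{1})\cap\Sigma^{n}$ and $(\overline{IP_{*}}\cap A_{0})\cap\Sigma^{n}$, and $dense(A)(n+1)=dense(A_{0})(n)+dense(A_{1})(n)$, splitting each $dense(A_{b})(n)$ according to membership in $IP_{*}$ and simplifying yields
\[
\prob_{x}[\chi_{A}(G(x))=1]-\prob_{y}[\chi_{A}(y)=1] \;=\; \frac{1}{2^{n+1}}\bigl(\Delta_{1}(n)-\Delta_{0}(n)\bigr),
\]
where $\Delta_{b}(n)=dense(IP_{*}\cap A_{b})(n)-dense(\overline{IP_{*}}\cap A_{b})(n)$. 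By Proposition~\ref{IP-pseudorandom}, $IP_{*}$ is $\reg/n$-pseudorandom, so the ``pseudorandom'' version of Lemma~\ref{random-lang-forms} (statement (3)) applied to $A_{0},A_{1}\in\reg/n$ makes $|\Delta_{0}(n)|/|\Sigma^{n}|$ and $|\Delta_{1}(n)|/|\Sigma^{n}|$ both negligible; hence $\left|\prob_{x}[\chi_{A}(G(x))=1]-\prob_{y}[\chi_{A}(y)=1]\right|\le\frac{1}{2}\bigl(|\Delta_{0}(n)|+|\Delta_{1}(n)|\bigr)/|\Sigma^{n}|$ is negligible, i.e.\ $G$ fools $A$.

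\emph{Main obstacle.} The real content lies in step (iii): the short arithmetic identity that reduces ``$G$ fools $A$'' to the two discrepancies $\Delta_{b}(n)$, together with the closure fact $A_{b}\in\reg/n$, which is exactly what permits feeding $A_{0},A_{1}$ into the $\reg/n$-pseudorandomness of $IP_{*}$. The npda construction in step (i) is routine, but care is needed to argue that, because of the unique-parse property, exactly one computation path survives, so that $G$ is a genuinely total single-valued function in $\mathrm{CFLSV_t}$ rather than merely a multi-valued one.
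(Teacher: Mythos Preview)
Your proof is correct, and it takes a genuinely different route from the paper's. The paper does not use $G(x)=x\,\chi_{IP_{*}}(x)$; instead it engineers a more intricate case-based map whose range is \emph{exactly} $IP_{*}$, and then invokes Lemma~\ref{generator-pseudorandom} (an almost 1-1 function is a pseudorandom generator against $\CC$ iff its range is $\CC$-pseudorandom). Your argument bypasses Lemma~\ref{generator-pseudorandom} entirely: you reduce ``$G$ fools $A$'' to the two discrepancies $\Delta_b(n)$ via the arithmetic identity, establish the closure $A_b\in\reg/n$ by packing the extra advice symbol into the first position over an enlarged advice alphabet, and then feed $A_0,A_1$ into the ``pseudorandom'' version of Lemma~\ref{random-lang-forms}. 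This is in effect a next-bit--predictor style argument, and it is both shorter and more elementary than the paper's construction.

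Your approach also buys something the paper's does not: your $G$ is genuinely one-to-one (with $\tau\equiv 0$), whereas the paper's generator is only \emph{almost} 1-1 because of the collisions in its Case~(3b). In fact, the paper explicitly lists ``find a natural 1-1 pseudorandom generator against $\reg/n$'' as an open problem; your construction answers it. What the paper's route buys in return is a demonstration of the abstract bridge (Lemma~\ref{generator-pseudorandom}) between pseudorandom \emph{languages} and pseudorandom \emph{generators}, which is of independent interest and is reused in the proof of Proposition~\ref{non-exist-generator}. Two small points worth making explicit in your write-up: the case $m=0$ of the $A_b\in\reg/n$ argument (where the advice is empty) is handled by hard-coding whether $b\in A$ into the acceptance status of the new dfa's initial state; and the direction (2)$\Rightarrow$(3) of the ``pseudorandom'' version of Lemma~\ref{random-lang-forms} that you invoke only needs $\Sigma^*\in\CC$, which of course holds for $\reg/n$.
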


To prove this proposition, let us discuss an intimate relationship between two notions: $\CC$-pseudorandomness and pseudorandom generators against $\CC$. 
Our key lemma below states that any almost 1-1 (weakly) pseudorandom generator against $\CC$ can be characterized by the notion of (weakly) $\CC$-pseudorandomness. 

\begin{lemma}\label{generator-pseudorandom}
Let $\Sigma=\{0,1\}$. Let $\CC$ be any language family 
containing the language $\Sigma^*$. 
Let $G$ be any almost 1-1 function from $\Sigma^*$ to $\Sigma^*$ with the stretch factor $n+1$. 
\begin{enumerate}\vs{-2}
\item $G$ is a pseudorandom generator against $\CC$ iff the range $S=\{G(x)\mid x\in\Sigma^*\}$ of $G$ is an   $\CC$-pseudorandom set.
\vs{-2}
\item $G$ is a weakly pseudorandom generator against $\CC$ iff the range $S=\{G(x)\mid x\in\Sigma^*\}$ of $G$ is a weakly $\CC$-pseudorandom set.
\end{enumerate}
\end{lemma}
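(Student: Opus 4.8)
The plan is to prove both equivalences at once by a single chain of ``iff''s that computes, for a fixed language $A$ over $\Sigma=\{0,1\}$, the ``fooling quantity'' $\ell_A(n) = \bigl| \prob_{x}[\chi_{A}(G(x))=1] - \prob_{y}[\chi_{A}(y)=1] \bigr|$ (with $x$ uniform over $\Sigma^n$ and $y$ uniform over $\Sigma^{n+1}$) directly in terms of the densities of $S\cap A$ and $\overline{S}\cap A$, and then feeds the result into Lemma~\ref{random-lang-forms} and its ``pseudorandom'' version. Observe first two elementary consequences of the hypotheses on $G$: since $G$ has stretch factor $n+1$, a string of length $n+1$ lies in the range $S$ exactly when it is $G(x)$ for some $x\in\Sigma^n$, so $S\cap\Sigma^{n+1}=\{G(x)\mid x\in\Sigma^n\}$ and in particular $dense(S)(n+1)=2^n(1-\tau(n))$; and, writing $c(z)$ for the number of $x\in\Sigma^n$ with $G(x)=z$, the almost-$1$-$1$ condition yields $\sum_{z\in S\cap\Sigma^{n+1}}(c(z)-1)=2^n\tau(n)$.

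From this I would derive the density identity. For any $A$, counting preimages gives $\bigl|\{x\in\Sigma^n\mid G(x)\in A\}\bigr| = dense(S\cap A)(n+1)+\epsilon_n$ with $0\le\epsilon_n\le 2^n\tau(n)$, and, using $dense(A)(n+1)=dense(S\cap A)(n+1)+dense(\overline{S}\cap A)(n+1)$, a short computation gives
\[
\prob_{x}[\chi_{A}(G(x))=1] - \prob_{y}[\chi_{A}(y)=1] = \frac{dense(S\cap A)(n+1) - dense(\overline{S}\cap A)(n+1)}{2^{n+1}} + \frac{\epsilon_n}{2^n}.
\]
Since $\epsilon_n/2^n\le\tau(n)$ is negligible and $\bigl||a+b|-|a|\bigr|\le|b|$, the function $\ell_A$ differs from $n\mapsto\bigl| dense(S\cap A)(n+1) - dense(\overline{S}\cap A)(n+1)\bigr|/2^{n+1}$ by a negligible amount; hence $\ell_A$ is negligible if and only if $n\mapsto\bigl| dense(S\cap A)(n) - dense(\overline{S}\cap A)(n)\bigr|/|\Sigma^n|$ is negligible (the index shift $n\mapsto n+1$ does not affect negligibility).

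Finally I would assemble the two parts. For (1): $G$ is a pseudorandom generator against $\CC$ iff $\ell_A$ is negligible for every $A\in\CC$, iff $\bigl| dense(S\cap A)(n) - dense(\overline{S}\cap A)(n)\bigr|/|\Sigma^n|$ is negligible for every $A\in\CC$, which --- using the ``pseudorandom'' version of Lemma~\ref{random-lang-forms}, where the hypothesis $\CC=\co\CC$ enters --- is exactly the assertion that $S$ is $\CC$-pseudorandom. For (2) one runs the same chain but quantifies only over p-dense $A\in\CC$, invoking the equivalence of statements (1) and (3) in Lemma~\ref{random-lang-forms}. I do not anticipate a genuine obstacle: this is fundamentally a bookkeeping argument, and the only places that need care are the collision count furnished by the almost-$1$-$1$ hypothesis, keeping the two normalizations $2^n$ and $2^{n+1}$ straight through the algebra, and confirming that the $n\mapsto n+1$ reindexing preserves negligibility.
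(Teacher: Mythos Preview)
Your proposal is correct and follows essentially the same strategy as the paper: bound the collision overcount using the almost-$1$-$1$ hypothesis, so that $\prob_x[\chi_A(G(x))=1]$ differs from $dense(S\cap A)(n{+}1)/2^n$ by a negligible amount, reduce $\ell_A$ to the quantity $\ell''$ of Lemma~\ref{random-lang-forms}, and then invoke that lemma (or its ``pseudorandom'' version). Your presentation is in fact tidier than the paper's: where the paper introduces averaging factors $\delta_n,\epsilon_n$ with $\sum_{y\in S\cap B}|G^{-1}(y)|=\delta_n|S\cap B|$ and treats the two implications separately, your $\epsilon_n=\sum_{z\in S\cap A}(c(z)-1)$ is exactly the paper's $(\delta_n-1)|S_{n+1}\cap B_{n+1}|$, and your single displayed identity yields both directions at once.
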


\begin{proof}
Let $\CC$ be any language family with $\Sigma^*\in\CC$. Assume that $G$ is an almost 1-1 function stretching $n$-bit seeds to $(n+1)$-bit strings. Consider $G$'s range  $S=\{G(x)\mid x\in\Sigma^*\}$.
For any language $B$ over $\Sigma$ and for each length $n\in\nat$, $B_{n+1}$ denotes $B\cap\Sigma^{n+1}$ and $\overline{B}_{n+1}$ denotes $\overline{B}\cap\Sigma^{n+1}$. In particular, $S_{n+1}$ equals 
$\{G(x)\mid x\in\Sigma^n\}$. Since $G$ is almost 1-1, it holds that  $|S_{n+1}|= |\Sigma^n|(1-\tau(n))$ for a certain negligible function $\tau(n)\geq0$. In other words, $|\Sigma^n| - |S_{n+1}| = |\Sigma^n|\tau(n)$. We write $\ell_{B}(n)$ for $\left| \prob_{x\in\Sigma^n}[\chi_{B}(G(x))=1] 
- \prob_{y\in\Sigma^{n+1}}[\chi_{B}(y)=1]  \right|$. 
In addition, let $\ell''_{B}(n) = \frac{||S_{n+1}\cap B_{n+1}| - |\overline{S}_{n+1}\cap B_{n+1}||}{|\Sigma^{n+1}|}$, which equals 
$\left| \frac{|S_{n+1}\cap B_{n+1}|}{|\Sigma^{n}|} - \frac{|B_{n+1}|}{|\Sigma^{n+1}|}\right|$ since $|B_{n+1}| = |S_{n+1}\cap B_{n+1}|+|\overline{S}_{n+1}\cap B_{n+1}|$.   
Henceforth, we want to show only Statement (1) since Statement 
(2) can be proven similarly. 

(Only If -- part) Assume that $G$ is a pseudorandom generator against $\CC$. Let $B$ be any language in $\CC$. Since $G$ fools $B$, the function $\ell_B(n)$ should be negligible. Take any non-zero polynomial $p$. Assume that $n$ is sufficiently large so that $\ell_B(n)\leq 1/2p(n)$ and $\tau(n)\leq 1/2p(n)$. 
It thus follows that $|\Sigma^{n}| - |S_{n+1}| \leq |\Sigma^{n}|/2p(n)$. 
We set $\delta_n$ and $\epsilon_n$ to satisfy that $\sum_{y\in S_{n+1}\cap B_{n+1}}|G^{-1}(y)| = \delta_n\left| S_{n+1}\cap B_{n+1}\right|$ and $\sum_{y\in S_{n+1}\cap \overline{B}_{n+1}}|G^{-1}(y)| = \epsilon_n\left| S_{n+1}\cap \overline{B}_{n+1}\right|$. Obviously, $\delta_n,\epsilon_n\geq1$. 
Note that  
$\sum_{y\in S_{n+1}}|G^{-1}(y)|$ equals the sum 
$\sum_{y\in S_{n+1}\cap B_{n+1}}|G^{-1}(y)| + \sum_{y\in S_{n+1}\cap \overline{B}_{n+1}}|G^{-1}(y)|$.    
Since $|\Sigma^n| = \sum_{y\in S_{n+1}}|G^{-1}(y)|$, we then obtain 
$
|\Sigma^n| = \delta_n|S_{n+1}\cap B_{n+1}| + 
\epsilon_n|S_{n+1}\cap \overline{B}_{n+1}|.
$ 
{}From this relation, it follows that, since $\epsilon_n,\delta_n\geq1$, 
\begin{equation}\label{delta-epsilon}
\left|\Sigma^{n}\right| - \left|S_{n+1}\right| = 
\left(\delta_n- 1\right)\left|S_{n+1}\cap B_{n+1}\right| + \left(\epsilon_n- 1\right)\left|S_{n+1}\cap\overline{B}_{n+1}\right|.
\end{equation}
Therefore, it holds that 
$
(\delta_n-1)|S_{n+1}\cap B_{n+1}| \leq |\Sigma^n| - |S_{n+1}| \leq 
|\Sigma^n|/2p(n).
$ 

Next, we want to estimate the value $\ell''_B(n)$. 
We need to show that $\ell''_{B}(n)\leq 1/p(n)$, because a ``pseudorandom'' version of Lemma \ref{random-lang-forms}(2-3) therefore leads to the $\CC$-pseudorandomness of $S$.
We first note that  
\[
\prob_{x\in\Sigma^n}[\chi_{B}(G(x))=1] = \frac{\sum_{y\in S_{n+1}\cap B_{n+1}}|G^{-1}(y)|}{|\Sigma^n|} = \frac{\delta_n\left| S_{n+1}\cap B_{n+1}\right|}{|\Sigma^n|}.
\]
Since $\prob_{y\in\Sigma^{n+1}}[\chi_{B}(y)=1] = |B_{n+1}|/|\Sigma^{n+1}|$, $\ell_B(n)$ thus  equals  $\left| \frac{|B_{n+1}|}{|\Sigma^{n+1}|}  - \frac{\delta_n\left| S_{n+1}\cap B_{n+1}\right|}{|\Sigma^n|}\right|$. As a result, we can bound the value $\ell''_{B}(n)$ as   
\[
\ell''_B(n)
 \leq   \left| \frac{|B_{n+1}|}{|\Sigma^{n+1}|} 
- \frac{\delta_{n}|S_{n+1}\cap B_{n+1}|}{|\Sigma^{n}|} \right|  
+\frac{(\delta_n-1)|S_{n+1}\cap b_{n+1}|}{|\Sigma^n|}
\leq \ell(n) + \frac{1}{2p(n)}.
\]
{}From our assumption  $\ell_B(n)\leq 1/2p(n)$, we then conclude that
$
\ell''_B(n) 
%=\frac{\left| |S_{n+1}\cap B_{n+1}| - |\overline{S}_{n+1}\cap B_{n+1}| \right|}{|\Sigma^{n+1}|} 
\leq \ell_{B}(n) + \frac{1}{2p(n)} \leq \frac{1}{p(n)}.
$

(If -- part) Assume that the set $S=\{G(x)\mid x\in\Sigma^*\}$ is $\CC$-pseudorandom. To show that $G$ is a pseudorandom generator against $\CC$, we want to show that the function $\ell_B(n)$ is negligible for any language $B$ in $\CC$. Let $p$ be any 
non-zero polynomial and let $B$ be any language in $\CC$. 
Since $S$ is  $\CC$-pseudorandom, by a ``pseudorandom'' version of Lemma \ref{random-lang-forms}(2-3), $\ell''_{B}(n)$   
is upper-bounded by $1/2p(n)$ for all but finitely many numbers $n$.

Now, choose a number $\delta_n$  so that  $\prob_{x\in\Sigma^n}[\chi_{B}(G(x))=1] = \delta_n|S_{n+1}\cap B_{n+1}|/|\Sigma^n|$. By Eq.(\ref{delta-epsilon}), we obtain $(\delta_n-1)|S_{n+1}\cap B_{n+1}|\leq |\Sigma^n| - |S_{n+1}| \leq |\Sigma^n|/2p(n)$. As stated before, it holds that $\ell_{B}(n)= \left| \frac{\delta_n|S_{n+1}\cap B_{n+1}|}{|\Sigma^n|} -
\frac{|B_{n+1}|}{|\Sigma^{n+1}|} \right|$. 
Since $\delta_n\geq1$, we obtain
\[
\ell_B(n) 
\leq   \frac{(\delta_n-1)|S_{n+1}\cap B_{n+1}|}{|\Sigma^n|}  
+
\left| \frac{|S_{n+1}\cap B_{n+1}|}{|\Sigma^{n}|} - \frac{|B_{n+1}|}{|\Sigma^{n+1}|} \right| 
\leq  \frac{1}{2p(n)}  + \ell''_{B}(n).
\]
Therefore, since $\ell''_{B}(n)\leq 1/2p(n)$, the inequality  $\ell_{B}(n)\leq 1/p(n)$ follows. {}From the arbitrariness of $B$ in $\CC$, we can conclude that $G$ is a pseudorandom generator against $\CC$. 
\end{proof}

In what follows, we shall describe the proof of 
Proposition \ref{exist-generator}. 
Let us recall the context-free language $IP_{*}$ given in Section \ref{sec:pseudorandom}. We want to build our desired pseudorandom generator based on the $\reg/n$-pseudorandomness of $IP_{*}$. 

\begin{proofof}{Proposition \ref{exist-generator}}
The desired generator $G$ is defined as follows. Let $n$ be an arbitrary number at least $3$ and let $w=axy$ be any input of length $n$ 
satisfying that $a\in\{\lambda,0,1\}$ and $|x|=|y|+1$. 
We first consider the case where $n$ is odd (\ie $a=\lambda$), assuming further  that $x = bz$ for a certain bit $b$. Since $n$ is odd, let $k=(n-1)/2$. As described below, our generator $G$ outputs a string of the form $x'y'e$ of length $n+1$, where $|x'|=|x|$, $|y'|=|y|$, and $e\in\{0,1\}$. 

\begin{itemize}
\item[(1)] If $w=bzy$ for a certain bit $b$ and $z^{R}\odot y\equiv 1\;(\mathrm{mod}\;2)$, then let $G(w) = bzy\overline{b}$ .
\vs{-2}
\item[(2)] If $w=1zy$ and $z^{R}\odot y\equiv 0\;(\mathrm{mod}\;2)$, then let $G(w) = 1zy1$. 
\vs{-2}
\item[(3)] If $w=0zy$ and $z^{R}\odot y\equiv 0\;(\mathrm{mod}\;2)$, then check if there is the maximal index $i$ such that $z_{k-i+1}=1$. 
\begin{itemize}\vs{-1}
\item[(3a)] When such $i$ exists, let $G(w) = 0z\tilde{y}0$, where $\tilde{y}$ is obtained from $y$ by flipping only the $i$th bit; that is, $\tilde{y} = y_1y_2\cdots y_{i-1}\overline{y}_iy_{i+1}\cdots y_k$.
\vs{-1}
\item[(3b)] Consider the other case where $i$ does not exist; in other words,  $z=0^k$. In this case, we define $G(w) = 1zy1$.
\end{itemize}
\end{itemize}\vs{-2}
In the remaining case where $n$ is even (\ie $a\in\{0,1\}$), we define $G(w)$ to be $aG(xy)$. 

\ms

Our next goal is to show that $G$ is a pseudorandom generator in $\mathrm{CFLSV_t}$ against $\reg/n$. We start with the following 
claim. 

\begin{claim}
The function $G$ is almost 1-1. 
\end{claim}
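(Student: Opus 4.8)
The plan is to verify that the map $G$ defined above is injective on each length class $\Sigma^n$ except for a negligibly small set of collisions, and to estimate the size of the range $\{G(w)\mid w\in\Sigma^n\}$ directly. Since the even case $n$ reduces verbatim to the odd case by stripping off the first symbol $a$, I would first reduce to $n$ odd and write $n=2k+1$, so that a seed $w$ of length $n$ has the form $w=b z y$ with $b\in\{0,1\}$, $|z|=k-1$, $|y|=k$, or more conveniently $w=x y$ with $|x|=k$, $|y|=k$ --- wait, matching the excerpt's convention, $w=xy$ with $|x|=|y|+1$, i.e.\ $|x|=k+1$ if $n=2k+1$; writing $x=bz$ with $|z|=k$. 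The output always has the shape $x' y' e$ with $|x'|=|x|$, $|y'|=|y|$, $e\in\{0,1\}$, so $G$ preserves the block structure. Then I would analyze the four production rules (1),(2),(3a),(3b) and show that from an output string $x' y' e$ one can almost always reconstruct the unique seed, so that $G$ is injective off a controlled set.

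The key steps, in order: (i) Reduce to $n$ odd. (ii) Classify outputs by which rule produced them: rule (1) produces strings $bzy\overline{b}$ where the leading bit and the trailing bit disagree and $z^R\odot y\equiv 1$; rule (2) produces $1zy1$ with $z^R\odot y\equiv 0$; rule (3a) produces $0z\tilde y 0$ where $z\neq 0^k$ and $z^R\odot\tilde y$ has a prescribed parity (the flip at the minimal nonzero position of $z^R$ changes the inner product parity from $0$ to $1$, so actually $z^R\odot\tilde y\equiv 1$); rule (3b) produces $1\,0^k\,y\,1$ with $0^k\odot y=0$, i.e.\ $10^k y1$ for all $y$. (iii) Show that the only overlaps among the image sets of these rules are between rule (2) and rule (3b): a string of the form $1\,0^k\,y\,1$ can come from the seed $1\,0^k\,y$ via (2) (if $0^k{}^R\odot y\equiv 0$, which is automatic) and also from the seed $0\,0^k\,y$ via (3b). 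Every such string has exactly these two preimages and no seed of any other form maps to it, so $G$ is $2$-to-$1$ exactly on the seeds $\{b\,0^k\,y\mid b\in\{0,1\},\,y\in\Sigma^k\}$ and injective elsewhere; hence $|\{G(w)\mid w\in\Sigma^n\}| = 2^n - 2^k$. (iv) Since $2^k = 2^{(n-1)/2} = 2^n\cdot 2^{-(n+1)/2}$ and $\tau(n):=2^{-(n+1)/2}$ is negligible, and by the evenness reduction the even-length count is $2^{n-1}(2 - 2^{(n-1)/2}\cdot 2^{-(n-1)})\cdot$-type expression giving the same $\tau$, we get $|\{G(x)\mid x\in\Sigma^n\}| = 2^n(1-\tau(n))$ with $\tau$ negligible, which is exactly the definition of almost $1$-$1$ with stretch factor $n+1$.

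The main obstacle I expect is a careful bookkeeping of the rule-(3a) images and confirming there are no unexpected coincidences: one must check that a string $0\,z\,\tilde y\,0$ coming out of (3a) cannot also be the output $b\,z'\,y'\,\overline{b}$ of rule (1) (ruled out since (1)'s outputs have disagreeing end bits while (3a) has $0\ldots 0$), nor an output of (2) or (3b) (ruled out by the leading bit $0$ versus leading bit $1$), and that the map $y\mapsto\tilde y$ --- flipping the $i$th bit of $y$ where $i$ is the minimal index with $z_{k-i+1}=1$ --- is a bijection on $\Sigma^k$ for each fixed $z\neq 0^k$ (indeed an involution), so that within rule (3a) and within a fixed $z$-block $G$ is injective. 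Once these disjointness-of-images and within-rule-injectivity facts are nailed down, the count $2^n - 2^k$ follows and the claim is immediate; I would present the image classification as a short case table rather than grinding each inner-product parity by hand.
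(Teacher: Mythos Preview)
Your proposal is correct and follows essentially the same approach as the paper: identify that rules (1), (2), (3a) are each injective with pairwise disjoint images (distinguished by the first/last output bits), isolate the sole collision between rules (2) and (3b) on the $2^{k+1}$ seeds $\{b\,0^k y : b\in\{0,1\},\,y\in\Sigma^k\}$, and conclude $|\{G(w):w\in\Sigma^n\}|=2^n-2^k=2^n(1-\tau(n))$ with $\tau(n)=2^{-(n+1)/2}$ negligible. Your explicit case table for image disjointness and the observation that $y\mapsto\tilde y$ is an involution for fixed $z\neq 0^k$ are exactly the details the paper leaves as ``not difficult to check.''
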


\begin{proof}
When $n$ is odd, we set $k=(n-1)/2$ as before. In the above definition of $G$, it is obvious that all the cases except Case (3b) make $G$ one-to-one. It is thus sufficient to deal with Case (3b). 
In this case, for each fixed string $y\in\Sigma^k$, only 
inputs taken from the set $\{00^{k}y, 10^{k}y \}$ are mapped by $G$ into the same string $10^{k}y1$. 
Now, we define $\tau(n)=1/2^{k+1}$. Letting $A_k$ denote $\bigcup_{y\in\Sigma^k}\{00^ky,10^ky\}$, we note that $G$ is one-to-one on the domain $\Sigma^n-A_k$ and $2$-to-$1$ on the domain $A_k$. 
Since $|A_k|=2^{k+1}$, it thus follows that
$|\{G(w)\mid w\in\Sigma^n\}| = |\Sigma^n-A_k| + \frac{|A_k|}{2} = |\Sigma^n| - \frac{|A_k|}{2}$,
which equals $|\Sigma^n|\left( 1- 2^{-(n+1)/2}\right) = |\Sigma^n|(1-\tau(n))$.
The other case where $n$ is even follows from the previous case and we can define $\tau$ accordingly. Clearly, $\tau$ is negligible, and therefore $G$ is almost 1-1.
\end{proof}

\begin{claim}\label{S-equal-IP}
The range $S=\{G(w)\mid w\in\Sigma^*\}$ of $G$ coincides with $IP_{*}$.
\end{claim}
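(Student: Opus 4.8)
The plan is to establish the two inclusions $S\subseteq IP_{*}$ and $IP_{*}\subseteq S$ by splitting on the parity of the string length, since $G$ stretches a seed of length $n$ to an output of length $n+1$ and treats odd- and even-length seeds differently: an odd-length seed yields an even-length output, and the even-length seed case reduces to the odd-length one. For an even-length output I write the candidate string as $x'y'e$ with $|x'|=k+1$, $|y'|=k$, $e\in\{0,1\}$, and further split $x'=b'z$ with $|z|=k$. The elementary computation that drives everything is that, viewing $x'y'e$ as the concatenation $uv$ with $u=x'$ and $v=y'e$, one has $u^{R}\odot v \equiv (z^{R}\odot y') + b'e \pmod 2$; hence $x'y'e\in IP_{*}$ is precisely the condition $(z^{R}\odot y') + b'e\equiv 1\pmod 2$.

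For $S\subseteq IP_{*}$ on even-length outputs, I run through the four subcases (1), (2), (3a), (3b) of the definition of $G$ on an odd-length seed and check the displayed congruence in each: in (1) the output has $b'=b$, $e=\overline{b}$, $z^{R}\odot y\equiv 1$, and $b\overline{b}=0$, so $1+0\equiv 1$; in (2), $b'=e=1$ and $z^{R}\odot y\equiv 0$, so $0+1\equiv 1$; in (3a), flipping the $i$th bit of $y$, where $i$ is the minimal index with $z_{k-i+1}=1$, turns $z^{R}\odot y\equiv 0$ into $z^{R}\odot\tilde y\equiv 1$ while $b'=e=0$; in (3b), $z=0^{k}$ forces $z^{R}\odot y=0$ and the output begins and ends with $1$, so again $0+1\equiv 1$. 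Conversely, for $IP_{*}\subseteq S$ on even-length strings I take $s\in IP_{*}$ of even length, write $s=b'zy'e$, and note that the $IP_{*}$-condition pins down the parity of $z^{R}\odot y'$ from the pair $(b',e)$: if $b'\ne e$ then $z^{R}\odot y'\equiv 1$ and $s=G(b'zy')$ via case (1); if $b'=e=1$ then $z^{R}\odot y'\equiv 0$ and $s=G(1zy')$ via case (2); if $b'=e=0$ then $z^{R}\odot y'\equiv 1$, whence $z\neq 0^{k}$, the minimal index $i$ with $z_{k-i+1}=1$ exists, and taking $y$ to be $y'$ with its $i$th bit flipped gives $z^{R}\odot y\equiv 0$, so the seed $0zy$ enters case (3a) (not (3b), since $z\ne 0^{k}$), and because $G$ flips the same index $i$, the double flip yields $G(0zy)=0zy'0=s$.

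For odd-length outputs I reduce to the even-length case. An even-length seed $axy$ satisfies $G(axy)=a\,G(xy)$ with $|xy|$ odd, so $S$ restricted to odd lengths equals $\{\,au\mid a\in\{0,1\},\ u\in S,\ |u|\text{ even}\,\}$, which by the previous paragraph equals $\{\,au\mid a\in\{0,1\},\ u\in IP_{*}\,\}$. On the other hand, an odd-length member of $IP_{*}$ must have the form $au'v'$ with $a\in\{0,1\}$ (the prefix cannot be $\lambda$ because $|u'|=|v'|$ forces even total length) and $u'v'\in IP_{*}$; so the two sets agree, and together with the even-length analysis this gives $S=IP_{*}$ (the few short lengths outside the recursion being checked directly).

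I expect the main obstacle to be the surjectivity direction, and within it case (3a): one must observe that the $IP_{*}$-condition forces $z^{R}\odot y'\equiv 1$, hence $z\ne 0^{k}$ and the flip index $i$ exists; then invert the single bit flip; and finally verify that the reconstructed seed $0zy$ genuinely falls into case (3a) rather than (3b), and that the index $G$ uses to flip coincides with the one used in the inversion — which it does, since it depends only on $z$. Everything else is bookkeeping with string reversal and the inner product.
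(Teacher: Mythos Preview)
Your proof is correct and follows essentially the same approach as the paper's: a case-by-case verification that each output of $G$ satisfies the inner-product congruence defining $IP_{*}$, and conversely that each even-length $IP_{*}$-string can be traced back to a seed through one of the cases (1), (2), (3a). Your presentation is in fact a bit cleaner than the paper's, since you isolate the identity $u^{R}\odot v \equiv (z^{R}\odot y') + b'e \pmod 2$ up front and then dispatch both directions uniformly by the value of the pair $(b',e)$, whereas the paper handles the cases more ad hoc; but the underlying argument is the same.
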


\begin{proof}
The containment $S\subseteq IP_{*}$ can be shown as follows. Letting  $w\in\Sigma^n$ be any input string, 
we want to show that $G(w)\in IP_{*}$. Now, assume that $n$ is odd, and 
consider Case (1) with $w=bzy$ and $z^{R}\odot y\equiv 1 \;(\mathrm{mod}\;2)$. In this case, $G(w)=bzy\overline{b}$. Since $(bz)^{R}\odot(y\overline{b})\equiv z^R\odot y + b\odot \overline{b}\equiv 1 \;(\mathrm{mod}\;2)$, it follows that $G(w)\in IP_{*}$. 
Next, we consider Case (3a) with $w=0zy$ and $z^R\odot y\equiv 0$ $(\mathrm{mod}\;2)$. Let $j = \max\{i\mid z_{k-i+1}=1\}$. Notice that $z_{k-j+1}\odot y_j \not\equiv z_{k-j+1}\odot \overline{y}_j \;(\mathrm{mod}\;2)$ because $z_{k-j+1}=1$. 
Thus, it follows that  
\[
z^R\odot y =  \sum_{i: i\neq j}z_{k-i+1}\odot y_i + z_{k-j+1}\odot y_j \not\equiv  \sum_{i: i\neq j}z_{k-i+1}\odot y_i + z_{k-j+1}\odot \overline{y}_j = z^R\odot \tilde{y}.
\]
As a result, we obtain $z^R\odot \tilde{y}\equiv 1\;(\mathrm{mod}\;2)$, which obviously implies that $G(w)\in IP_{*}$. The other cases are similarly shown.

We then show the other containment $IP_{*}\subseteq S$. Choose an arbitrary string $u\in IP_{*}\cap\Sigma^n$ and assume that $n$ is even. Let $k=(n-2)/2$. Consider the case where $u=bzy\overline{b}$ with $b\in\{0,1\}$ and $|z|=|y|=k$. Since $u\in IP_{*}$, we have $(bz)^{R}\odot (y\overline{b}) \equiv z^R\odot y\equiv 1$ $(\mathrm{mod}\;2)$. 
Hence, $G$ should map $bzy$ to $u$. This means that $u$ is in $S$. Next, we consider the case where $u=0zy0$ with $|z|=|y|$. Let $j=\max\{i\mid z_{k-i+1}=1\}$. As before, we define $\tilde{y}$ from $y$ by flipping the $j$th bit of $y$. 
Since $G(0z\tilde{y})$ equals $0zy0$, it follows that $u\in S$. The other cases are similarly proven. 
\end{proof}

Since $IP_{*}$ is $\reg/n$-pseudorandom, by Claim \ref{S-equal-IP}, 
$S$ is also $\reg/n$-pseudorandom. {}From $G$'s almost one-oneness 
and its stretch factor of $n+1$, Lemma \ref{generator-pseudorandom}(1)  
guarantees that $G$ is a pseudorandom generator against $\reg/n$. What remains unproven is that $G$ actually belongs to $\mathrm{CFLSV_t}$.

\begin{claim}
$G$ is in $\mathrm{CFLSV_t}$. 
\end{claim}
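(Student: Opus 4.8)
The plan is to exhibit an explicit npda $M$ equipped with a read‑only input tape, a write‑only output tape, and one stack, which computes exactly the function $G$; then $G\in\mathrm{CFLSV_t}$ follows directly from the definition given in Section~\ref{sec:notation}. First, $M$ reads the initial input symbol and nondeterministically guesses the parity of $n=|w|$. If it guesses ``$n$ even'', it treats that symbol as $a$, copies it to the output tape, and then runs the ``odd'' routine described below on the remaining suffix $xy$ (whose length $n-1$ is odd); this is faithful to $G(w)=a\cdot G(xy)$. A wrong parity guess will never be accepted, because the stack discipline below certifies that the processed suffix splits into two equal halves.

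For an odd‑length input, after reading the leading bit $b$, the key idea is that $M$ nondeterministically guesses which of the four cases $(1),(2),(3\mathrm a),(3\mathrm b)$ applies and commits to it. Such guessing is unavoidable: the very first output bit is $b$ in cases $1,2,3\mathrm a$ but $1$ in case $3\mathrm b$, and $M$ must emit it long before it has read enough of the input to know $z^{R}\odot y$ or whether $z=0^{k}$. In every branch $M$ reads $b$, then reads a prefix of the remaining input while pushing each symbol onto the stack and simultaneously copying it to the output (this prefix is the guessed $z$); then it nondeterministically switches to ``$y$‑mode'', popping one stack symbol per further input symbol. Demanding that the stack be empty exactly when the input ends forces the guessed split to be the true midpoint, so $|z|=|y|=k$. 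In $y$‑mode, on reading $y_{i}$ it pops $z_{k-i+1}$ and accumulates $\sum_{i}z_{k-i+1}y_{i}=z^{R}\odot y$ in its finite control. The branch‑specific behaviour is: branch $(1)$ outputs each $y_{i}$ verbatim and accepts (appending $\overline{b}$) iff the parity is $1$; branch $(2)$ first checks $b=1$, outputs $y$ verbatim, and accepts (appending $1$) iff the parity is $0$; branch $(3\mathrm b)$ checks $b=0$, emits $1$ as the leading bit, checks that every pushed symbol is $0$, outputs $y$ verbatim, and accepts (appending $1$); branch $(3\mathrm a)$ checks $b=0$, emits $0$ as the leading bit, and in $y$‑mode outputs $y_{i}$ verbatim except that at the first step at which the popped symbol is $1$ it outputs $\overline{y}_{i}$ and sets a ``flipped'' flag, accepting (appending $0$) iff the parity is $0$ and the flag is set. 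The one nontrivial observation is that $\min\{i:z_{k-i+1}=1\}$ equals the step at which the first $1$ is popped (popping reverses $z$), so flipping there produces precisely the $\tilde{y}$ of case $(3\mathrm a)$; moreover the flag is set at the end iff $z\neq 0^{k}$, and the flip position $k-\ell+1$ with $\ell$ the rightmost $1$ of $z$ always lies in $[1,k]$, so the flip really occurs.

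Finally I would check that $M$ computes $G$, i.e.\ that it is total and single‑valued with value $G(w)$. Every $w$ falls into exactly one of the four cases, and the acceptance conditions of the four branches are pairwise exclusive (parity $1$ vs.\ parity $0$; $b=1$ vs.\ $b=0$; $z=0^{k}$ vs.\ $z\neq 0^{k}$) and jointly exhaustive, so for each $w$ precisely one branch has an accepting computation; on that branch the only remaining nondeterministic choice, the location of the $z/y$ split, is pinned to the midpoint by the empty‑stack test, so all accepting paths emit the same string, which is the string prescribed for the applicable case, namely $G(w)$. Since $M$ takes $O(n)$ moves on every path (each transition either consumes an input symbol or is one of a bounded number of finishing $\lambda$‑moves) and produces an output on every accepting path, $G$ is a single‑valued total function computable by an npda with an output tape, hence $G\in\mathrm{CFLSV_t}$. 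The main obstacle, and the reason nondeterminism is essential, is exactly the ``emit‑before‑you‑know'' constraint on the leading bit together with the mid‑string flip of case $(3\mathrm a)$; both are resolved by guess‑and‑verify, with the stack serving simultaneously as a length checker and as the mechanism that reverses $z$ so that the flip position becomes a purely local event.
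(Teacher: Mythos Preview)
Your proof is correct and takes essentially the same approach as the paper: nondeterministically guess which case of $G$ applies, emit the corresponding output while reading the input, use the stack both to certify the midpoint split $|z|=|y|$ and to compute $z^{R}\odot y$ by popping $z_{k-i+1}$ against $y_i$, and accept only on the branch whose guessed case is verified at the end. The paper's own description is terser and organizes the branches slightly differently (it groups Cases~(1)--(2) into a single path that writes the shared prefix $bxy$ and defers the final bit), but your four-branch organization and your explicit verification of pairwise exclusivity and joint exhaustiveness make the single-valuedness argument cleaner than the paper's.
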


\begin{proof}
Here, we give an npda with a write-only output tape, which computes $G$. Our npda $N$ works as follows. On input $w$ of the form $axy$, guess nondeterministically whether $a=\lambda$ or not. Along a nondeterministic branch associated with a guess ``$a=\lambda$,'' check nondeterministically whether $|x|=|y|+1$ using a stack as storage space. During this checking process, $N$ also computes $z^{R}\odot y$, where $x=bz$, and finds the maximal index $i_0$ such that $z_{k-i_0+1} =1$ (if any). While reading input bits, for each nondeterministic computation, $N$ produces three types of additional computation paths. Along the first one of such paths, $N$ writes $10^{k}y1$ on its output tape; on the second path, $N$ writes $bxy$ on the output tape; on the third path, $N$ writes $0z\tilde{y}0$, provided that $i_0$ exists. At the end of scanning the input, if Case (3b) does not hold, $N$ enters a rejecting state on the first path to invalidate its output $10^{k}y1$. If Case (3a) does not hold, $N$ also invalidate its output 
$0z\tilde{y}0$ on the third path. In Cases (1)-(2), assume that $N$ has written $bxy$ on the second path. Now, $N$ writes down $\overline{b}$ 
or $1$, respectively, on the output tape following $bxy$ if Case (1) or Case (2)  holds. It is not difficult to show that, for each input string $w$, $N$'s  valid output is unique and it matches $G(w)$. This npda $N$ therefore places $G$ into $\mathrm{CFLSV_t}$.
\end{proof}

To this end, we have already completed our proof of Proposition \ref{exist-generator}.
\end{proofof}

We shall close this section by demonstrating another application 
of Lemma \ref{generator-pseudorandom} to the non-existence of a weakly pseudorandom generator in $\oneflin$.

\begin{proposition}\label{non-exist-generator}
There is no almost 1-1 weakly pseudorandom generator in $\oneflin$ with the stretch factor $n+1$ against $\reg$.
\end{proposition}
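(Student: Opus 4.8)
The plan is to argue by contradiction: suppose $G\in\oneflin$ is an almost 1-1 weak pseudorandom generator against $\reg$ with stretch factor $n+1$, computed by a one-tape one-head deterministic machine $M$ running in time at most $cn$. First I would pass to the range $S=\{G(x)\mid x\in\Sigma^*\}$. Since $\reg$ is closed under complementation, Lemma~\ref{generator-pseudorandom}(2) tells us $S$ is weak $\reg$-pseudorandom; and because $G$ has stretch $n+1$ and is almost 1-1, $dense(S)(n+1)=|\{G(x)\mid x\in\Sigma^{n}\}|=2^{n}(1-\tau(n))$ for a negligible $\tau$, so $S$ is p-dense (of density about $1/2$). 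Hence it is enough to produce a p-dense regular $A$ for which $\left|\,dense(S\cap A)(n)/dense(A)(n)-\frac{1}{2}\,\right|$ fails to be negligible; the cleanest way to get one is to prove that $S$ is itself regular, because then $A:=S$ works and the ratio is identically $1$ (equivalently, via Lemma~\ref{weakrandom-primeimmune} a regular p-dense set cannot be $\reg$-primeimmune, so it cannot be weak $\reg$-pseudorandom).

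So the whole proof reduces to: the range of an $\oneflin$ function of stretch $n+1$ is regular. An easy first observation is that $\oneflin$ functions reflect regularity, $G^{-1}(B)\in\reg$ for every $B\in\reg$ (compute $G(x)$ in linear time on the single tape, then sweep the $B$-automaton across the result), so $\onedlin=\reg$ already manufactures many regular sets out of $G$; but the range itself requires analyzing $M$ through crossing sequences. Since $M$ runs in time $\le cn$, on every length-$n$ input the crossing sequences at all boundaries have total length $\le cn$, so $\Omega(n)$ of the $\Theta(n)$ boundaries inside $[\,n/4,\,3n/4\,]$ carry a crossing sequence of length $\le L$ for a constant $L=4c$, of which there are only constantly many, say $K$. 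For a boundary $i$ and a short crossing sequence $\gamma$ the computation factors: on inputs $uv$ with $|u|=i$ whose crossing sequence at $i$ is $\gamma$, the final contents of cells $1,\dots,i$ depend only on $(u,\gamma)$ and those of cells $i+1,\dots,n+1$ only on $(v,\gamma)$, so $G(uv)=f_{i,\gamma}(u)\,g_{i,\gamma}(v)$ with $f_{i,\gamma},g_{i,\gamma}$ injective (since $G$ is almost injective), whence $S\cap\Sigma^{n+1}=\bigcup_{i,\gamma}f_{i,\gamma}(U_{i,\gamma})\cdot g_{i,\gamma}(V_{i,\gamma})$ is a union of at most $nK$ prefix--suffix rectangles. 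I would then upgrade this to regularity of $S$ by the mechanism that underlies $\reg=\onedlin$ \cite{Hen65,Kob85}, adapted from acceptance to output: a valid computation of $M$ producing output $y$ is pinned down cell-by-cell by locally consistent crossing sequences, the linear time bound forces the ``behaviour of a prefix of $y$'' (the map sending an incoming crossing sequence to the outgoing one together with the output block it contributes) to take only boundedly many values, and those finitely many behaviours serve as the states of a finite automaton for $S$. With $S\in\reg$ the first paragraph closes the argument.

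The main obstacle is precisely this last upgrade. The rectangle decomposition by itself is too weak: for each length it yields only a p-dense union of $\Theta(n)$ rectangles whose factors and split points vary uncontrollably with $n$, and such a family need contain no p-dense regular subset, so one cannot stop there. Carrying the Hennie--Kobayashi analysis through for a transducer rather than an acceptor --- in particular bounding the number of distinct prefix-behaviours of $M$ even though single crossing sequences may have length $\Theta(n)$ --- is the delicate step, and it is where the \emph{linear} running time (rather than $o(n\log n)$ or polynomial) and the modest stretch $n+1$ are really used. It is also what separates this negative result from Proposition~\ref{IP-pseudorandom}: there the pushdown store lets $\cfl$ leave $\reg$ far enough behind to fool even $\reg/n$, whereas a single linear-time tape keeps the range regular and hence useless as a source of randomness against $\reg$.
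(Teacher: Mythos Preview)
Your high-level strategy matches the paper's exactly: pass to the range $S$, use Lemma~\ref{generator-pseudorandom}(2) to get that $S$ is weak $\reg$-pseudorandom, and then derive a contradiction by showing $S\in\reg$ (so that $S$ itself is a p-dense regular set witnessing the failure of weak pseudorandomness). The divergence is entirely in how to establish $S\in\reg$, and here the paper sidesteps the obstacle you yourself flag---the direct crossing-sequence analysis of a linear-time \emph{transducer}'s range---with a short detour through the inverse.

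The paper first pads $G$ to a length-preserving map $\hat{G}(xb)=G(x)$, still in $\oneflin$. The multi-valued inverse $\hat{G}^{-1}$ then lies in $\mathrm{1\mbox{-}NLINMV}$: on input $y$, guess a string $x$ of the same length, deterministically compute $\hat{G}(x)$ on the single tape, and accept with output $x$ if it equals $y$. Now the refinement theorem of Tadaki \etalc~\cite{TYL04} (every length-preserving $\mathrm{1\mbox{-}NLINMV}$ function has a refinement in $\oneflin(\fpartial)$) yields a partial $g\in\oneflin(\fpartial)$ whose domain is exactly $\{y:\hat{G}^{-1}(y)\neq\emptyset\}=S$. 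The accept/reject behaviour of the machine for $g$ therefore decides $S$ in deterministic linear time, whence $S\in\onedlin=\reg$ immediately.

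So you never need to carry the Hennie--Kobayashi analysis over to output-producing machines. The refinement theorem already packages the crossing-sequence machinery, but applied to the \emph{acceptance} problem ``is $y$ in the range?'' after the length-preserving normalisation, which is precisely the setting where $\onedlin=\reg$ applies off the shelf. Your rectangle decomposition and the proposed transducer adaptation may well be made to work, but it is the hard way around, and as you note it is not complete as stated; the paper's route closes the argument in a few lines.
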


Our proof of this proposition demands new terminology. For any two multi-valued partial functions $f$ and $g$ mapping $\Sigma^*$ to $\Gamma^*$, where $\Gamma$ could be another alphabet, $f$ is called a {\em refinement} of $g$ if, for any string $x\in\Sigma^*$, (i) $f(x)\subseteq g(x)$ (set inclusion) and (ii) $f(x)=\emptyset$ implies $g(x)=\emptyset$. 
Concerning $\mathrm{1\mbox{-}NLINMV}$, Tadaki \etalc~\cite{TYL04} proved that every length-preserving function in $\mathrm{1\mbox{-}NLINMV}$ has a refinement in $\oneflin(\fpartial)$.  

Here, we present the proof of Proposition \ref{non-exist-generator}.

\begin{proofof}{Proposition \ref{non-exist-generator}}
Let $G$ be any almost 1-1 weakly pseudorandom generator against $\reg$  stretching $n$-bit seeds to $(n+1)$-bit long strings. Toward a contradiction, we assume that $G$ belongs to $\oneflin$.  
By Lemma \ref{generator-pseudorandom}(2), the range $S=\{G(x)\mid x\in\Sigma^*\}$ is weakly $\reg$-pseudorandom.  
If $S$ is regular, then $\reg$ is weakly $\reg$-pseudorandom; however, this contradicts the {\em self-exclusion property}: $\reg$ cannot be weakly $\reg$-pseudorandom. To obtain this contradiction, it remains to prove that $S$ is a regular language. 

To make $G$ length-preserving, we slightly expand $G$ and define $\hat{G}(xb)=G(x)$ for each string $x$ and each bit $b$. This new function $\hat{G}$ is also in $\oneflin$. 
Let us consider its inverse function $\hat{G}^{-1}(y)=\{x\mid \hat{G}(x)=y\}$. 
Obviously, the inverse function $\hat{G}^{-1}$ belongs to $\mathrm{1\mbox{-}NLINMV}$ (by guessing $x$ and then checking whether  $\hat{G}(x)=y$). Note that $S=\{y\mid \hat{G}^{-1}(y)\neq\emptyset\}$. 
Since every length-preserving function in $\mathrm{1\mbox{-}NLINMV}$ has a refinement in $\oneflin(\fpartial)$ \cite{TYL04}, there exists a refinement $f\in\oneflin(\fpartial)$ of $\hat{G}^{-1}$, and we denote by $N$  a  linear-time deterministic 1TM that computes $f$. 

\begin{claim}\label{S-vs-machine}
For every string $y$, $y\in S$ iff $N$ on the input $y$ 
terminates with an accepting state.
\end{claim}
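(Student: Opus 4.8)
The plan is to derive the claim by a straightforward chaining of definitions: the defining property of $\oneflin(\fpartial)$, the refinement relation between $g$ and $\hat G^{-1}$, and the fact that $\hat G$ and $G$ have the same range. No diagonalization or combinatorics is needed here; the substance of Proposition \ref{non-exist-generator} lies in the self-exclusion property invoked afterwards, while this claim is pure bookkeeping.

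First I would record that the range of $\hat G$ is exactly $S$. Indeed, for any string $w$ with $|w|\geq 1$, write $w=xb$ with $b\in\Sigma$; then $\hat G(w)=G(x)$, and conversely every value $G(x)$ arises in this way (e.g.\ as $\hat G(x0)$). Hence $\{\hat G(w)\mid w\in\Sigma^*\}=\{G(x)\mid x\in\Sigma^*\}=S$. Consequently, for every $y$, the preimage set $\hat G^{-1}(y)=\{w\mid \hat G(w)=y\}$ is nonempty if and only if $y\in S$. I would also note in passing that $\hat G$ is length preserving, since $|\hat G(w)|=|G(x)|=|x|+1=|w|$, so that $\hat G^{-1}\in\mathrm{1\mbox{-}NLINMV}$ is a length-preserving function and the refinement theorem of Tadaki \etalc~\cite{TYL04} genuinely applies to it.

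Next I would unwind the refinement property. Since $g$ is a refinement of $\hat G^{-1}$, we have $g(y)\subseteq \hat G^{-1}(y)$ for every $y$, and $g(y)=\setempty$ forces $\hat G^{-1}(y)=\setempty$. The first condition gives "$\hat G^{-1}(y)=\setempty \THEN g(y)=\setempty$" and the second gives the converse, so $g(y)$ is defined (i.e.\ $g(y)\neq\setempty$) exactly when $\hat G^{-1}(y)\neq\setempty$, which by the previous paragraph is exactly when $y\in S$.

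Finally I would apply the definition of $\oneflin(\fpartial)$: the one-tape one-head linear-time deterministic machine $N$ computing $g$ halts in an accepting state precisely when $g(y)$ is defined, and in a rejecting state otherwise. Chaining the equivalences, $N$ on input $y$ enters an accepting state iff $g(y)$ is defined iff $\hat G^{-1}(y)\neq\setempty$ iff $y\in S$, which is the claim. The only point deserving a moment's care is to verify both halves of the refinement condition, so that the domain of $g$ coincides exactly with the domain of $\hat G^{-1}$ (hence with $S$) rather than merely being contained in it; beyond that, there is no real obstacle.
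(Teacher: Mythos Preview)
Your proof is correct and follows essentially the same approach as the paper's own proof: both arguments unwind the refinement relation to show that $g(y)$ is defined iff $\hat G^{-1}(y)\neq\setempty$ iff $y\in S$, and then invoke the definition of $\oneflin(\fpartial)$ to conclude that $N$ accepts $y$ precisely when $g(y)$ is defined. Your version is slightly more explicit in verifying that the range of $\hat G$ coincides with $S$ and in noting that $\hat G$ is length preserving so the refinement theorem applies, but these are just details the paper leaves implicit.
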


As a consequence of Claim \ref{S-vs-machine}, $S$ belongs to  $\mathrm{1\mbox{-}DTIME}(O(n))$, which equals $\reg$ \cite{Hen65}. We thus obtain the regularity of $S$, as we have planned. 

Finally, we want to prove Claim \ref{S-vs-machine}. 
Assume that $y$ is in $S$; namely, $\hat{G}^{-1}(y)\neq\emptyset$. 
Since $f$ is a refinement of $\hat{G}^{-1}$, we have $f(y)\neq\emptyset$, which indicates that $N$ terminates with an accepting state. Conversely, assume that $N$ on $y$ terminates with an accepting state. In other words, $f(y)\neq\emptyset$. Since $f(y)\subseteq \hat{G}^{-1}(y)$, we obtain $\hat{G}^{-1}(y)\neq\emptyset$. This implies that $y\in S$. Therefore, Claim \ref{S-vs-machine} holds. 
\end{proofof}

%%%%%%%%%%%
\section{Discussion and Open Problems}

We have discussed two notions---immunity and pseudorandomness---in a framework of formal language theory. For these notions,  
our main target of this paper is $\cfl$, the family of context-free languages. Our initial study has revealed a quite rich structure 
that lies inside $\cfl$. For instance, $\cfl$ contains complex languages, which are $\reg$-immune, $\cfl$-simple, and $\reg/n$-pseudorandom. Moreover, its function class $\mathrm{CFLSV_t}$ contains a pseudorandom generator against $\reg/n$. 
Despite much efforts, however, there remain several key questions that we have not answered throughout this paper. To direct future research, we generate a short list of those questions for the interested reader.

\begin{enumerate}
\item Prove or disprove that $\cfl(2)-\cfl/n$ is $\cfl$-immune.
\vs{-2}
\item Is there any context-free language that is 
p-dense $\reg$-immune? Is one of such languages located outside of $\reg/n$?
\vs{-2}
\item As noted in Section \ref{sec:immunity-notion}, the language $L_{3eq}$ belongs to $\cfl(2)$ and it is also $\cfl(1)$-immune. In short, $\cfl(2)$ is $\cfl(1)$-immune. Naturally, we can ask if, for each index $k\geq2$, $\cfl(k+1)$ is $\cfl(k)$-immune.
\vs{-2}
\item The languages $\overline{L_{keq}}$, where $k\geq3$, are shown to be $\cfl$-simple; however, they are not $\reg$-immune. 
Is there any $\mathrm{REG}$-immune $\mathrm{CFL}$-simple language?
\vs{-2}
\item As shown in Section \ref{sec:bi-immunity}, $\mathrm{L}\cap\reg/n$ is $\reg$-bi-immune. Determine whether $\cfl$ is also $\reg$-bi-immune. More strongly, is $\cfl-\reg/n$ $\reg$-bi-immune? 
\vs{-2}
\item We can define the notion of ``$\cfl$-primesimplicity'' analogous 
to ``$\cfl$-simplicity.'' Find natural $\cfl$-primesimple languages. 
\vs{-2}
\item Is $\dcfl$ weakly $\reg/n$-pseudorandom? An affirmative answer implies the   $\reg/n$-bi-primeimmunity of $\dcfl$ by Lemma \ref{weakrandom-primeimmune}.
\vs{-2}
\item Our pseudorandom generator $G$ given in Section \ref{sec:generator} is {\em almost 1-1} instead of {\em 1-1}. Find a ``natural'' 1-1 pseudorandom generator against $\reg/n$. 
\vs{-2}
\item Find a natural and easy-to-compute pseudorandom generator against $\cfl/n$. 
\end{enumerate}
Satisfactory answers to the above questions will guide us to a more thorough analysis of structural properties of 
the context-free languages and therefore enrich our 
knowledge on $\cfl$.   

%%%%%%%%%%%%%%%%%%%%%%%%%%%%%%%%%%%%%%%%%

%%%%%%%%%%%%%%%%%%%%%%%%%%
%%%%%%%%%%%%%%%%%%%%%%%%%%
%%%%%%%%%%%%%%%%%%%%%%%%%%
\bibliographystyle{alpha}

%%%%%%%%%%%%%%%%%%
%%%%%%%%%%%%%%%%%%%%%%%%%%%%
%%%%%%%%%%%%%%%%%%%%%%%%%%%
\end{document}
%%%%%%%%%%%%%%%%%%%%%%%%%%%%%%